\title{MultistageOT: Multistage optimal transport infers trajectories from a snapshot of single-cell data}  
\author[1*]{Magnus Tronstad}  
\author[2${\dagger}$*]{Johan Karlsson}   
\author[1${\dagger}$*]{Joakim S. Dahlin} 
\affil[1]{\small Department of Medicine Solna, Karolinska Institutet, and Center for Molecular Medicine, Karolinska University Hospital, Stockholm, Sweden}
\affil[2]{\small Department of Mathematics, KTH Royal Institute of Technology, Stockholm, Sweden \vspace{1cm}} 
\affil[$\dagger$]{These authors jointly supervised the work.}
\affil[*]{Corresponding authors. Emails: magnus.tronstad@ki.se, johan.karlsson@math.kth.se, and joakim.dahlin@ki.se}
\date{\today} 
\newcommand{\ones}[1]{\mathbf{1}_{#1}} 
\newcommand{\mR}{\mathbb{R}}
\newcommand{\cX}{\mathcal{X}}
\newtheorem{proposition}{Proposition}[section] 
\newtheorem{remark}{Remark}
\theoremstyle{definition}
\begin{document}

\maketitle   

\begin{center}
    \textbf{Abstract}
\end{center}
\begin{quote}
    \noindent
 \noindent \small Single-cell RNA-sequencing captures a temporal slice, or a snapshot, of a cell differentiation process. A major bioinformatical challenge is the inference of differentiation trajectories from a single snapshot, and methods that account for outlier cells that are unrelated to the differentiation process have yet to be established. We present MultistageOT: a Multistage Optimal Transport-based framework for trajectory inference in a snapshot (\url{https://github.com/dahlinlab/MultistageOT}). Application of optimal transport has proven successful for many single-cell tasks, but classical bimarginal optimal transport for trajectory inference fails to model temporal progression in a snapshot. Representing a novel generalization of optimal transport, MultistageOT addresses this major limitation by introducing a temporal dimension, allowing for high resolution modeling of intermediate differentiation stages. We challenge MultistageOT with snapshot data of cell differentiation, demonstrating effectiveness in pseudotime ordering, detection of outliers, and significantly improved fate prediction accuracy over state-of-the-art. 
 \newline
 
 \textbf{Keywords:} trajectory inference, snapshot, cell fate, pseudotime, single-cell, optimal transport, multiple marginals, multistage, hematopoiesis, cell differentiation,  scRNA-seq
\end{quote}

\clearpage

\captionsetup[table]{width=0.75\textwidth}
\captionsetup[figure]{width=\textwidth, labelfont=bf, labelsep=colon, name=Fig.}
\defaultbibliographystyle{naturemag}
\begin{bibunit}
\section{Background}
Cell differentiation is a fundamental biological process. For example, the continuous differentiation of hematopoietic stem and progenitor cells in bone marrow ensures the replenishment of the blood cells in the adult setting, and deregulation of the same process causes disease. The revolution in single-cell RNA-sequencing now enables the capture of a high-resolution snapshot of the cell differentiation process. Specifically, single-cell RNA-sequencing analysis of a bone marrow sample measures the gene expression profiles of thousands of differentiating cells, forming the basis of a cellular and molecular map of hematopoiesis \cite{watcham2019new, boey2022charting}.

The major challenge for the establishment of a molecular blueprint of cell differentiation is to accurately reconstruct cell differentiation trajectories based on the individual cells' gene expression profiles. The difficulty of tracing cells in time is a consequence of the destructive nature of whole-cell transcriptomics: The measurement of a specific cell's gene expression profile kills the cell, preventing future analysis of the same cell. Computational approaches are therefore necessary to accurately infer a cell's trajectory based on its gene expression measurement. To validate such methods, results from lineage tracing experiments, which allow the tracking of the progeny of individual cell clones, can be used as proxy for each cell's ground truth fate \cite{weinreb2020lineage, vanhorn2021next}.

Optimal transport was recently identified as constituting a unifying mathematical framework for solving many challenges related to the analysis of single-cell omics data, including trajectory inference \cite{bunne2024optimal}. For example, several optimal transport-based methods have successfully been established to identify connections between cells sampled from different stages (consecutive time points) of a developmental process by computing optimal transport maps \cite{schiebinger2019optimal, yang2020, forrow2021lineageot}. Cell differentiation in adult bone marrow is a steady-state-like process, in which new hematopoietic stem and progenitor cells are continuously produced. As a result, the entire differentiation process from hematopoietic stem cells to lineage-committed progenitors is captured in a single snapshot. Inferring trajectories within a single snapshot is based on the view that the cells' gene expression profiles represent feasible states for developing cells to occupy during differentiation. StationaryOT successfully used this concept to predict the fate probabilities of differentiating cells, solving a classical bimarginal (single-stage) optimal transport problem \cite{zhang2021optimal}.

While an optimal transport formulation between two marginals has elegantly been implemented for single-cell trajectory inference, single-stage optimal transport cannot model temporal progression. Generalizations of optimal transport with multiple marginals \cite{haasler2021control} address this limitation, but such  generalizations have yet to be explored in the context of single-cell analyses. Here, we describe MultistageOT (\url{https://github.com/dahlinlab/MultistageOT}), a novel multistage optimal transport framework developed for trajectory inference from individual snapshots of the cell differentiation process. Simulated data is used to build and test the framework and we challenge the framework to predict the cell fates of hematopoietic stem and progenitor cells. To validate the method, we benchmark the prediction in a data set with lineage tracing information. Notably, MultistageOT introduces a pseudotemporal dimension to the analysis and outperforms StationaryOT in predicting the fates of the differentiating cells in a large-scale data set of hematopoiesis. Finally, we demonstrate how the MultistageOT framework can be extended for outlier detection---using it to identify cell states that do not belong to the main cell differentiation process. Taken together, MultistageOT is a highly flexible framework for analyzing single-cell RNA-sequencing data, inferring cell fate potential, pseudotime, and outliers.

\section{Results}
Here, we develop MultistageOT for the inference of cell differentiation trajectories in a snapshot of single-cell data. We give a brief explanation of the MultistageOT framework, test it using synthetic data, validate it using results from lineage tracing experiments of \textit{in vitro} hematopoiesis, and apply it to two different \textit{in vivo} snapshots of hematopoiesis.

\subsection{Trajectory inference MultistageOT}
Single-cell RNA-sequencing measures the gene expression profiles of individual cells. In a snapshot of single-cell data, we view the measured gene expression profiles as representing possible cell states during a differentiation process. Let the squared Euclidean distance between any two cell states correspond to the cost of transitioning between them. Note that this cost corresponds to the negative log-likelihood of transitioning under a Gaussian probability kernel. By minimizing a total transition cost, discrete bimarginal optimal mass transport can be used to match single cells sampled at two different stages of a differentiation process (\hyperref[box:ot]{Box 1}). However, cell differentiation involves multiple transitions. MultistageOT addresses this by extending bimarginal optimal transport (\hyperref[box:ot]{Box 1}) to transport across multiple (more than two) marginals (\hyperref[box:msot]{Box 2}), allowing cell differentiation in a snapshot to be modeled as a stepwise transportation process over multiple differentiation stages. MultistageOT leverages prior knowledge of the earliest and latest stages of differentiation to establish directionality in the modeled process, discouraging spurious backward transitions. Notably, MultistageOT links all differentiation stages in a non-greedy manner through a global entropy-regularized mass transport formulation, and we derive an efficient algorithm for finding the globally optimal transport scheme (\hyperref[alg:baseblockascent]{Algorithm \ref{alg:baseblockascent}} in \hyperref[sec:supplementary_note]{Supplementary Note}).

MultistageOT leverages mass transport to infer the optimal transitions between cell states. The transport of mass in each differentiation stage can be interpreted in terms of transition likelihoods: the more mass a cell sends in a particular differentiation stage, the more likely it is to transition in that stage. This induces a probability distribution that models how probable a cell is to transition to a nearby cell state in each differentiation stage (\hyperref[sec:methods]{Methods}). MultistageOT thus provides a natural basis for pseudotemporal ordering of the cells and for inferring cell fate potential (\hyperref[sec:methods]{Methods}).  

\begin{figure}[H]
\begin{tcolorbox}[colback=white,colframe=black,rounded corners]
\textbf{Box 1: Optimal Transport}
\newline

 In the classical discrete optimal transport setting, two discrete mass distributions are given:
\[
 \sum_{i=1}^{n_1} \mu^{(i)}_1\delta_{x^{(i)}_1}, \quad  \sum_{j=1}^{n_2} \mu^{(j)}_2\delta_{x^{(j)}_2}, 
\]
 where the ordered sets $\mathcal{X}_1=(x_1^{(i)})_{i=1}^{n_1}$ and $\mathcal{X}_2=(x_2^{(j)})_{j=1}^{n_2}$ correspond to the points of support of the respective distributions, and where $\mu_1^{(i)}$ and $\mu_2^{(j)}$ represent the mass in the points $x_1^{(i)}$ and  $x_2^{(j)}\!,$ respectively.

\centering
\includegraphics[scale=0.33]{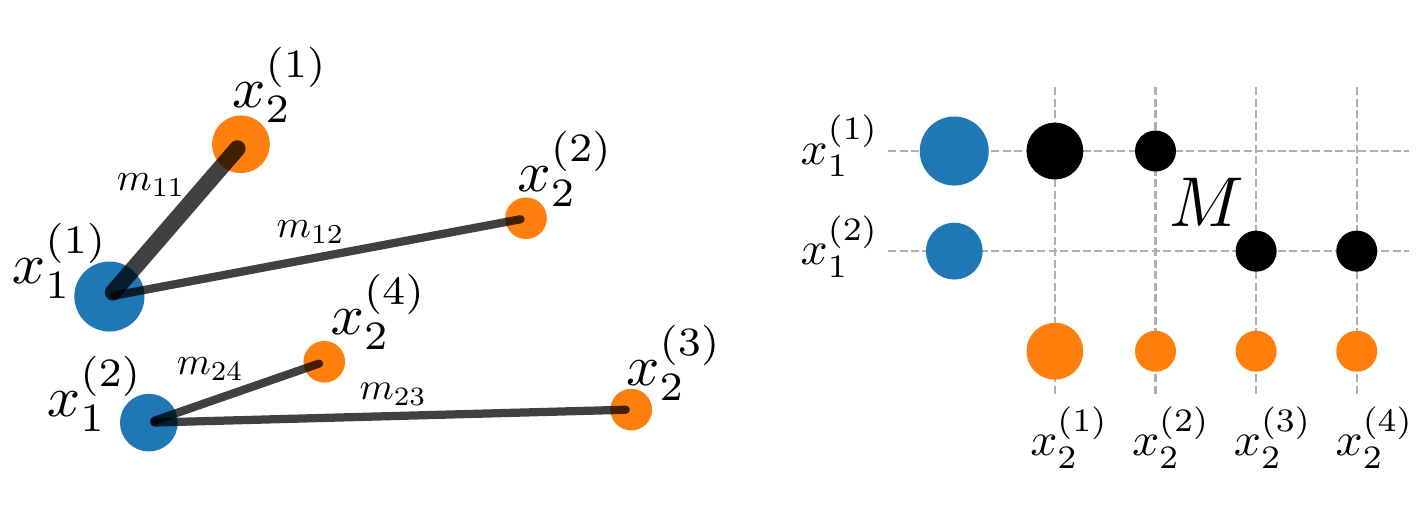} 
\captionsetup{width=\linewidth}
\captionof{figure}{Optimal transport finds an optimal assignment between points in two point cloud distributions $\mu_1 \in \mathbb{R}_+^2$ (blue dots) and $\mu_2 \in \mathbb{R}_+^4$ (orange dots). The two distributions $\mu_1$ and $\mu_2$ could for example represent two cell populations. The solution is a matrix, $M\in \mathbb{R}_+^{2\times 4}$, encoding the optimal way to redistribute the mass in $\mu_1$ to form $\mu_2$. The thickness of the black lines in the left plot represents the amount of mass sent between the points in the optimal solution under the squared Euclidean cost (\ref{eq:box_sqeuclidean_cost}). The corresponding transport plan's matrix elements in the right plot can be interpreted as the strength of the association between any pair of points $x_1^{(i)} \in \mathcal{X}_1, x_2^{(j)} \in \mathcal{X}_2$. }

\hfill 
\justify{
A transport plan is a matrix, $M=[m_{ij}]_{i=1, j=1}^{n_1, n_2} \in \mathbb{R}_+^{n_1\times n_2}$, whose component $m_{ij}$ denotes the amount of mass transported from  $x^{(i)}_1 \in \mathcal{X}_1$ to $x^{(j)}_2 \in \mathcal{X}_2$. We say that the transport plan is feasible if the total amounts transported are consistent with the initial and final distributions, i.e., if $M\ones{n_2}=\mu_1$ and $M^T\ones{n_1}=\mu_2$, where $\mu_1 = [\mu_1^{(i)}]_{i=1}^{n_1}$ and $\mu_2 = [\mu_2^{(j)}]_{j=1}^{n_2}$ are the vectors representing the mass distributions.
Let  
\begin{equation} 
\label{eq:box_sqeuclidean_cost}
c_{ij} :=  \lVert x_1^{(i)} - x_2^{(j)} \rVert_2^2
\end{equation}
denote the cost of moving a unit of mass from $x^{(i)}_1$ to $ x^{(j)}_2$\!, and let $C=[c_{ij}]_{i=1, j=1}^{n_1, n_2}$ be the corresponding cost matrix.  }
The optimal transport problem is to find a feasible transport plan that solves
\justify{\begin{subequations}
\begin{align}
  \mathcal{T}(\mu_1,\mu_2) := \underset{M \in \mathbb{R}_+^{n_1\times n_2}}{\text{minimize}}  \quad &\langle C, M  \rangle 
  \\
   \text{subject to} \,\, \quad  &M\ones{n_2} \;\;=\mu_1 \\
   &M^T \ones{n_1} = \mu_2,
   \end{align}
\end{subequations}
where $\langle \cdot, \cdot \rangle$ is the Frobenius inner product. An example of such a problem, together with a corresponding optimal solution, is given in Fig. 1. Note that when $\mu_1$ and $\mu_2$ are probability measures, then $\mathcal{T}(\mu_1,\mu_2)$ defines the Wasserstein-2 metric between $\mu_1$ and $\mu_2$. In practice, it is common to solve an entropy-regularized version of the problem using Sinkhorn algorithms \cite{cuturi2013sinkhorn}, corresponding to coordinate ascent on the dual function \cite{karlsson2017generalized}.}
\end{tcolorbox}
\label{box:ot}
\end{figure}

\begin{figure}[H]
\begin{tcolorbox}[colback=white,colframe=black,rounded corners]
\textbf{Box 2: The multistage optimal transport problem}
\newline

\begin{minipage}{0.49\textwidth} 
\justify{We assume a single-cell RNA sequencing snapshot of a developmental process is given, with corresponding sets of initial states (e.g., stem cells), denoted $\mathcal{X}_0$, and terminal states (e.g., lineage committed cell types), denoted $\mathcal{X}_F$. We refer to all other cells as intermediate states, denoted $\mathcal{X}$ (see Fig. 2). Let $n_0 := |\mathcal{X}_0|$, $n := |\mathcal{X}|$ and $n_F := |\mathcal{X}_F|$ denote the cardinalities of each set. }
\end{minipage}
\hfill
\begin{minipage}{0.45\textwidth} 
\centering
\includegraphics[scale=0.4]{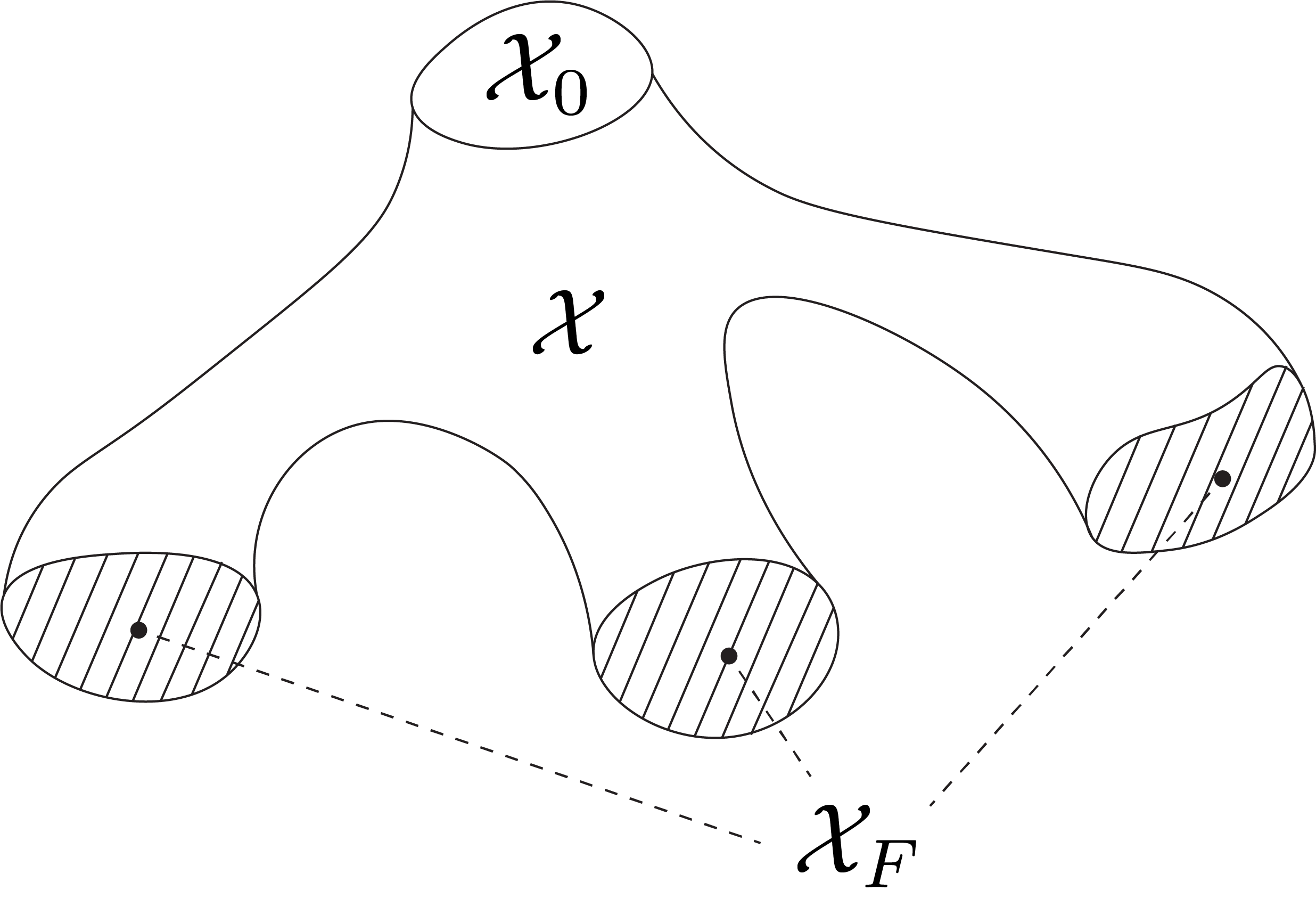}
\captionsetup{width=1\linewidth}
\captionof{figure}{Conceptual figure of a scRNA-seq data set partitioned into three subsets: $\mathcal{X}_0$ (initial states), $\mathcal{X}$ (intermediate states), and $\mathcal{X}_F$ (terminal states)}
\end{minipage}
\vspace{0.25cm}
\begin{flushleft}
    \justify{We model the transitions of cells from $\mathcal{X}_0$ to $\mathcal{X}_F$ as a multistage mass transport process, represented below in terms of marginals $\tilde{\mu}_t$ (mass that continues in the system) and $\hat{\mu}_t$ (mass that exits the system) in the $t$:th stage of transport (Fig. 3).}
\end{flushleft}
\vspace{0.25cm}
\centering

\begin{tikzpicture}[>=latex, node distance=2cm, circleNode/.style={draw, circle, inner sep=2pt}]

    \node[circleNode] (X0) at (0, 0) {$\mathcal{X}_0$};
    \node[circleNode] (X1) [right=1.25 cm of X0] {$\mathcal{X}$};
    \node[circleNode, fill={rgb,255:red,235;green,235;blue,235}] (XT1) [below=1.25 cm of X0] {$\mathcal{X}_F$};
    \node[circleNode] (X2) [right=1.25 cm of X1] {$\mathcal{X}$};
    \node[circleNode, fill={rgb,255:red,235;green,235;blue,235}] (XT2) [below=1.25 cm of X1] {$\mathcal{X}_F$};
    \node[circleNode] (X3) [right=2.25 cm of X2] {$\mathcal{X}$};
    \node[circleNode, fill={rgb,255:red,235;green,235;blue,235}] (XT3) [below=1.25 cm of X2] {$\mathcal{X}_F$};

    \node[circleNode, fill={rgb,255:red,235;green,235;blue,235}] (XT4) [below=1.25 cm of X3] {$\mathcal{X}_F$};

    \node[] (dotsXT-1) [right=0.75 cm of X2] {$ \dots $};
      \node[] (dotsXTT-1) [below = 1.4 cm of dotsXT-1] {$\quad\dots \quad$};

    \node[circleNode] (XT-1) [right=1.25 cm of X3] {$\mathcal{X}$}; 
    \node[circleNode, fill={rgb,255:red,235;green,235;blue,235}] (XTTT) [below=1.25 cm of XT-1] {$\mathcal{X}_F$};

    \draw[->] (X0) -- node[midway, right, yshift=0.3cm] {$\hat{\mu}_0$} (XT1);
    \draw[->] (X0) -- node[midway, above, xshift=-0.25cm] {$\tilde{\mu}_0$} (X1);
    \draw[->] (X1) -- node[midway, right, yshift=0.3cm] {$\hat{\mu}_1$} (XT2);
    \draw[->] (X1) -- node[midway, above, xshift=-0.25cm] {$\tilde{\mu}_1$} (X2);
    \draw[->] (X2) -- node[midway, right, yshift=0.3cm] {$\hat{\mu}_2$} (XT3);
    \draw[->] (X2) -- node[midway, above] {$\tilde{\mu}_{2}$} (dotsXT-1);
    \draw[->] (dotsXT-1) -- node[midway,above] {} (X3);
    \draw[->] (X3) -- node[midway, right, yshift=0.3cm] {$\hat{\mu}_{T-2}$} (XT4);
    \draw[->] (XT-1) -- node[midway, right, yshift=0.3cm] {$\hat{\mu}_{T-1}$} (XTTT);
       \draw[->] (X3) -- node[midway,above, xshift=-0.1cm] {$\tilde{\mu}_{T-2}$} (XT-1);
\end{tikzpicture}

\captionsetup{width=\linewidth}
\captionof{figure}{Graph representation of our general multistage optimal transport framework to model cell differentiation. In each time step $t = 0,1,\dots, T-1$ mass can be sent either to intermediate states in $\mathcal{X}$ or to terminal states in $\mathcal{X}_F$. Mass sent to $\mathcal{X}$ stays within the system, whereas any mass sent to $\mathcal{X}_F$ permanently leaves the system.}
\phantom{notext}
\vspace{-0.5cm}
\begin{flushleft}
 \justify{We compute the optimal transitions between the cells by solving the following multistage optimal transport problem (MultistageOT):}
\end{flushleft}
\begin{equation*}
 (\text{MultistageOT}) \quad \begin{aligned} 
  \underset{\underset{t=0,\dots,T-1 }{\tilde{\mu}_t,\hat{\mu}_t\hat{\nu}_t}}{\text{minimize}} \quad  &\sum_{t=0}^{T-2}\tilde{\mathcal{T}}_t (\tilde{\mu}_t,\tilde{\mu}_{t+1}+\hat{\mu}_{t+1}) + \sum_{t=0}^{T-1}\hat{\mathcal{T}}_t (\hat{\mu}_t,\hat{\nu}_{t})  \\
	\text{subject to}     \quad   &\tilde{\mu}_{0} + \hat{\mu}_{0}  \geq \ones{n_0} \\
&\sum_{t=1}^{T-2} \tilde{\mu}_{t} + \sum_{t=1}^{T-1}\hat{\mu}_{t}    \geq \ones{n} \\
 &\sum_{t=0}^{T-1} \hat{\nu}_t   \geq \ones{n_F},
	\end{aligned}
\end{equation*}

\justify{where each terms in the objective,  i.e., $\tilde{\mathcal{T}}_t(\cdot, \cdot)$ or $\hat{\mathcal{T}}_t(\cdot, \cdot )$, is the optimal transport cost in a bimarginal optimal transport problem using the squared Euclidean distances between cell's gene expression states as ground metric (see \hyperref[box:ot]{Box 1}), and $\hat{\nu}_t = [(\hat{\nu}_t)_i]_{i=1,\dots,n_F}$ is the mass received by cells in $\mathcal{X}_F$ in stage $t$. In practice, we solve an entropy-regularized version of MultistageOT nested in a proximal point scheme (see \hyperref[sec:supplementary_note]{Supplementary Note} for full details and derivation).}
\end{tcolorbox}
\label{box:msot}
\end{figure}

\subsection{MultistageOT establishes differentiation trajectories in unstructured snapshot data}

As a proof-of-concept, MultistageOT was first evaluated on synthetic data. A set of data points was generated in the 2D unit square, representing cell states from a differentiation process with known initial and terminal states (\autoref{fig:synthetic-data-results}a). We assumed 10 intermediate differentiation stages and employed MultistageOT for finding the optimal transitions from the initial states to the terminal states (using \hyperref[alg:baseblockascent]{Algorithm \ref{alg:baseblockascent}}; see \hyperref[sec:supplementary_note]{Supplementary Note})---hypothesizing that MultistageOT would find a pseudotemporal progression through the states consistent with the shape of the data.

Visualizing the mass transport in the optimal solution revealed a gradual progression through the intermediate cell states (\autoref{fig:synthetic-data-results}b). As expected, cells near the initial states transitioned (transported mass) in earlier differentiation stages compared to cells near the terminal states (\autoref{fig:synthetic-data-results}b). The varied distribution over the differentiation stages (\autoref{fig:synthetic-data-results}c-d) enables a ranking of the cell states based on their mean differentiation stage: Each cell is assigned a normalized pseudotime index between 0 and 1, with 0 corresponding to the earliest mean differentiation stage and 1 to the latest mean differentiation stage (\hyperref[sec:methods]{Methods}). In the 2D data set, the pseudotime values produced a smooth heatmap gradient from the initial to the terminal states (\autoref{fig:synthetic-data-results}e), consistent with the shape of the data.

To infer cell fates, we assume a Markov chain model. First, the general affinity between any two cells is quantified by summing, over all differentiation stages, the mass transported between the cells (\autoref{fig:synthetic-data-results}f). The normalized aggregated mass transport between each cell pair is then interpreted as a transition probability in a stationary absorbing Markov chain (\autoref{fig:synthetic-data-results}g). To infer likely cell fates, we let different groups of terminal states define different classes of absorbing states in the Markov chain. Each class of absorbing states represents a fate (\autoref{fig:synthetic-data-results}h). For each cell, we compute the probability of it being absorbed in the different fates under the Markov chain model, and refer to this as cell fate probabilities (see \hyperref[sec:methods]{Methods} for additional details). 

To assess the utility of the MultistageOT framework in predicting likely cell fates, we considered the four groups of terminal states in each of the diverging arms of the data (\autoref{fig:synthetic-data-results}a) to represent cells committed to four different cell lineages. We computed cell fate probabilities and graphed each cell's predicted fate probabilities as a pie chart on top of their state coordinate. This enabled joint visualization of how lineage commitment changed as a function of cellular state for each lineage (\autoref{fig:synthetic-data-results}i). The estimated cell fate probabilities were consistent with the shape of the data; cells near the initial states reflected multi-potency, whereas cell states located in the diverging arms of the data reflected full commitment to a particular fate (\autoref{fig:synthetic-data-results}i). The MultistageOT algorithm (\hyperref[alg:baseblockascent]{Algorithm \ref{alg:baseblockascent}} in \hyperref[sec:supplementary_note]{Supplementary Note}) depends on a regularization parameter controlling the level of diffusion in the optimal transitions. Shannon entropy quantifies the degree of multipotency reflected in the predicted cell fate probabilities. The degree to which cells exhibit multi-potency, as measured by mean Shannon entropy over all cells, increased with increasing regularization parameter values (\autoref{fig:entropy_regularization_effects}), as expected.

\begin{figure}[H]
    \centering
\includegraphics[width=\linewidth]{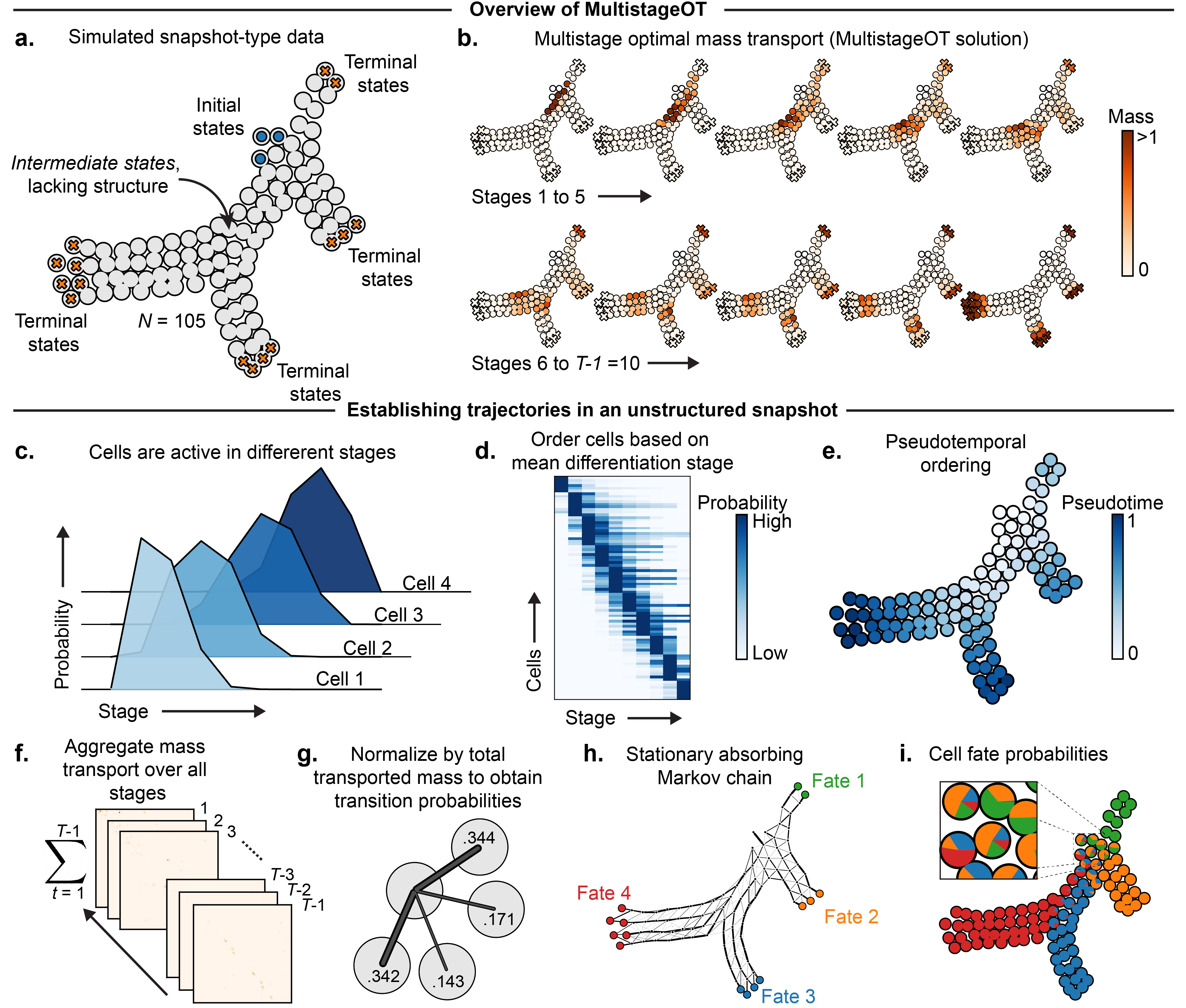}
    \caption{MultistageOT orders simulated snapshot data along a
pseudotemporal axis and induces state transition
probabilities between data points (\hyperref[sec:methods]{Methods}). (a) Two-dimensional data ($N = 105$) was generated in the unit square to mimic samples from an developmental process with known initial and terminal  states. (b) MultistageOT solves an entropy-regularized multistage optimal mass transport problem over $T$ stages, modeling transitions between cell states in a discrete-time differentiation process, where each transport stage models a differentiation step (see \hyperref[sec:supplementary_note]{Supplementary Note} for details). Here, $T = 11$. The plot visualizes the optimal way for cells to transition from initial into terminal states, achieved by plotting the total transported mass out from each cell in each stage $t=1,\dots, 10$. (c) Mass transport varies across the differentiation stages, inducing a probability distribution for each cell over the differentiation stages. The plot shows the probability profiles of four example cells. (d) The varying distributions of cells across the differentiation stages establishes a pseudotemporal order, from early- to late-stage transporters. (e) Cells ordered along a pseudotemporal axis based on mean transport stage. (f) The transport maps are aggregated over all stages to obtain the total likelihood of transition between any cell pair. (g) The aggregated likelihood matrix is normalized such that each matrix element represents a transition probability between a cell pair. (h) The MultistageOT-based transition probabilities define a stationary absorbing Markov chain in which the terminal states represent different classes of absorbing states. (i) Under the MultistageOT-induced Markov chain, each cell has certain probability of being absorbed in each fate. We refer to these as cell fate probabilities, and represent them as a pie chart for each cell. }
    \label{fig:synthetic-data-results}
\end{figure}

The results based on synthetic data encouraged the testing of the MultistageOT framework on a single-cell RNA-sequencing snapshot of a cell differentiation process. The complex process of hematopoiesis was chosen to challenge the framework (\autoref{fig:paul15-results}a). We applied MultistageOT to the data set of Paul et al. \cite{paul2015transcriptional}, selecting multipotent progenitors as initial states and progenitors at the entry points to each of 8 distinct cell lineages as terminal states (\autoref{fig:paul15-results}a-b).  UMAP visualization of the MultistageOT-based results revealed progressively increasing pseudotime values and lineage commitment from the multipotent progenitor region to the cell lineage entry points (\autoref{fig:paul15-results}c-d), results consistent with how hematopoietic stem and progenitor cells differentiate into the various cell types.

\begin{figure}[H]
    \centering
\includegraphics[width=\linewidth]{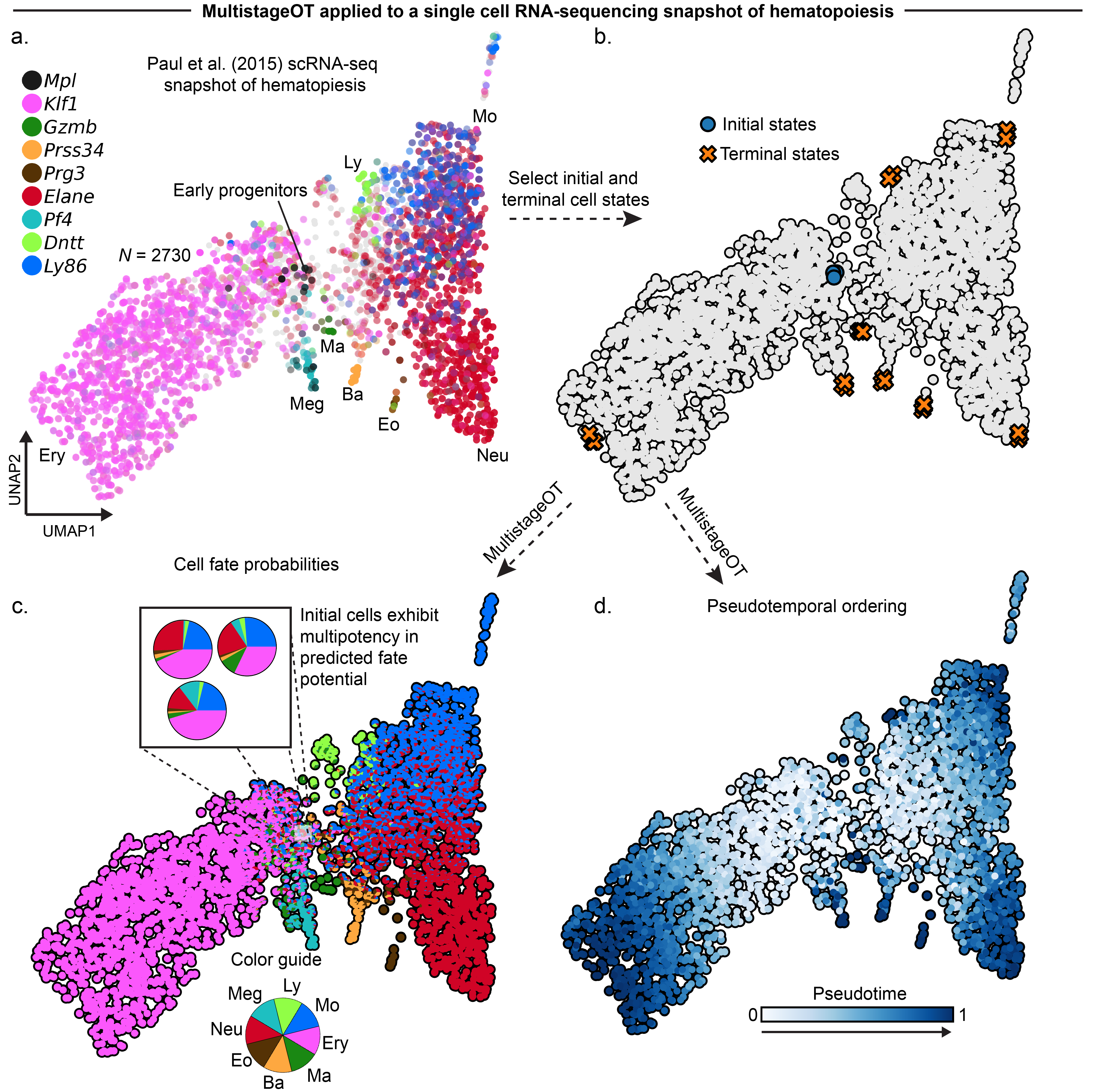}
    \caption{MultistageOT-based inference of fate potential and pseudotemporal ordering of single cells in a publicly available single-cell RNA-sequencing snapshot of hematopoiesis. (a) Layered gene expression level (log($x$+1)-transformed) plot of known marker genes for early progenitors and eight mature blood cell lineages graphed on top of a UMAP embedding of the data from Paul et al. \cite{paul2015transcriptional}. Abbreviations: Meg}
    \label{fig:paul15-results}
\end{figure}

\begin{figure}
    \ContinuedFloat 
    \caption*{(megakaryocyte), Ery (erythroid), Ma (mast cell), Ba (basophil), Eo (eosinophil), Neu (neutrophil), Mo (monocyte), Ly (lymphoid). (b) Marker genes were used to orient the UMAP and choose initial and terminal cell states as input to our MultistageOT model. (c) MultistageOT-based inference of fate potential. Each cell's UMAP coordinate is graphed as pie chart, representing the MultistageOT-based fate probabilities (d) Pseudotemporal ordering of the cells based on the optimal mass transport.} 
\end{figure}

Taken together, MultistageOT infers trajectories in unstructured snapshot data using multistage optimal mass transport. The mass transport induces state transition probabilities between data points, representing an affinity between cell states, as well as a pseudotemporal axis, representing the degree of differentiation.

\subsection{Cell fate predictions based on MultistageOT matches clonal sister fates observed \textit{in vitro} in mouse hematopoiesis}

Thus far, we have shown that MultistageOT can produce reasonable downstream results in a single-cell snapshot. However, the data set of Paul et al. \cite{paul2015transcriptional} (\autoref{fig:paul15-results}) does not include a reference or ground truth to quantitatively assess MultistageOT's performance in predicting cell differentiation. To evaluate the accuracy of MultistageOT, we utilized the data set of Weinreb et al. \cite{weinreb2020lineage}, which provides longitudinal single-cell RNA-sequencing data of \textit{in vitro} hematopoiesis, integrated into a unified snapshot-like landscape of cell states. Notably, the data set includes lineage tracing information of cell clones, which can serve as proxy for the ground truth trajectories of individual cells. 

The Weinreb et al. \cite{weinreb2020lineage} data set consists of 130 887 cells. To satisfy the memory requirements of a standard laptop computer, we uniformly subsampled cells into 12 disjoint subsets of size $N=$ 10 907 cells and MultistageOT was applied to each subset. Initial and terminal cell states were selected (\autoref{fig:weinreb-results}a, and \hyperref[sec:methods]{Methods}), corresponding to multipotent progenitors and the most differentiated progenitors in the landscapes, respectively. Pooling the MultistageOT results from each subset generated a unified cell landscape, in which the cells were pseudotemporally ordered and the fate potential of each cell was predicted  (\autoref{fig:weinreb-results}b-c).

Visualization of the MultistageOT-derived pseudotime values showed progressively increasing pseudotime from undifferentiated progenitors to the mature cells (\autoref{fig:weinreb-results}b,e). Overall, these MultistageOT-derived  pseudotime values were consistent with actual time, showing a significant tendency to increase with the cells' experimental time stamps (\autoref{fig:weinreb-results}g).

Visualization of the MultistageOT-based inference of fate potential of the cells displayed a gradual increase in lineage commitment as the cells approached the terminal states (\autoref{fig:weinreb-results}c), which is expected from differentiating cells. StationaryOT \cite{zhang2021optimal} was applied to the same data set to allow a side-by-side comparison with the state-of-the-art optimal-transport-based trajectory inference method \autoref{fig:stationaryot_and_idw_optimization}a. As reference, we also developed a naive predictor of cell fate referred to as Inverse Distance Weighted (IDW). The IDW method is not based on optimal tranport and instead uses the relative Euclidean distances to the mature cell types to infer the cell fate potential (\autoref{fig:stationaryot_and_idw_optimization}b-c and \hyperref[sec:methods]{Methods}). Similar

\begin{figure}[H]
    \centering 
\includegraphics[width=\linewidth]
{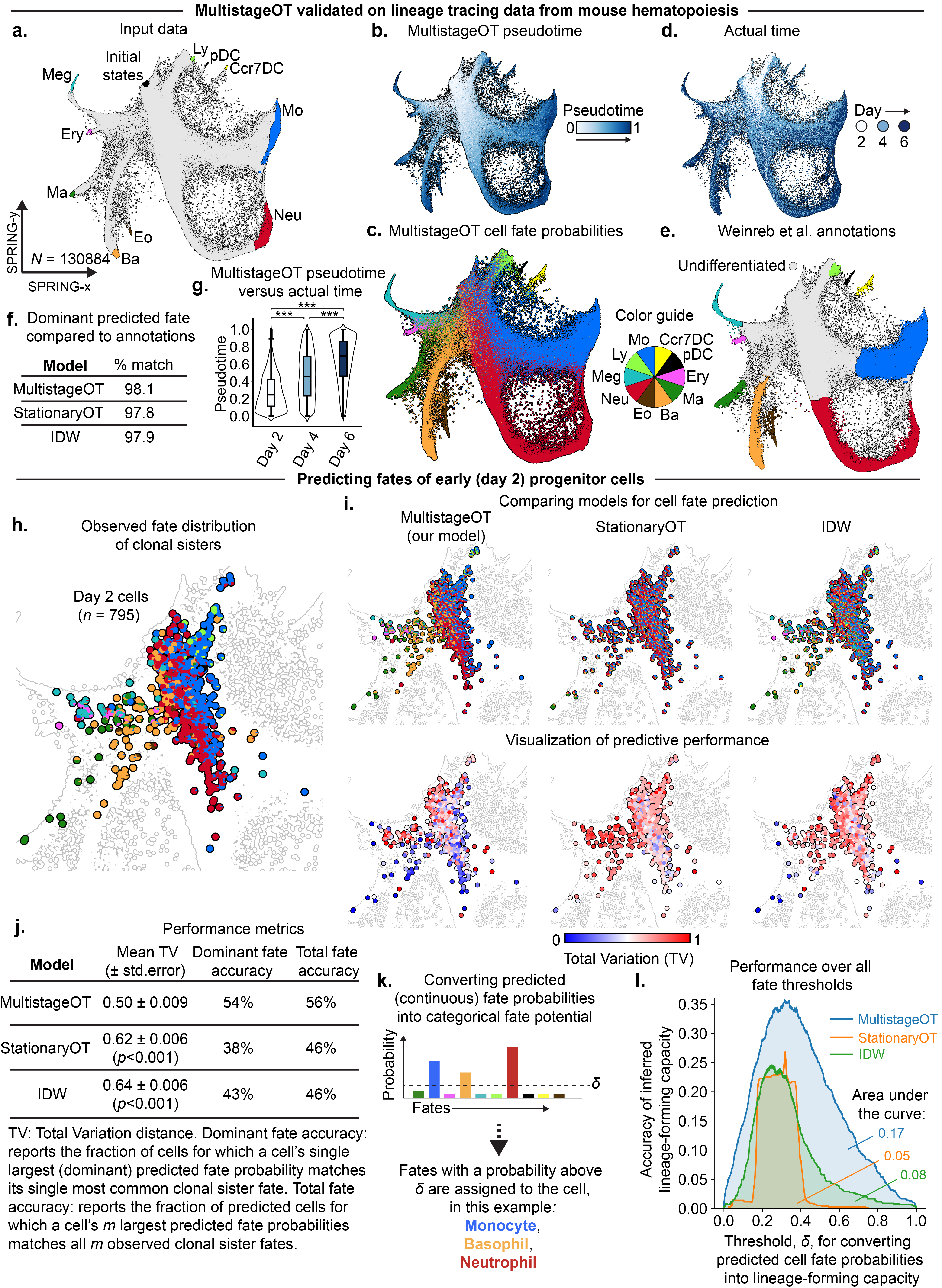}
    \caption{Validation of the MultistageOT framework with \textit{in vitro} lineage tracing data from mouse \makebox[\linewidth][s]{hematopoiesis. (a) SPRING-plot of the data from Weinreb et al. \cite{weinreb2020lineage} highlighted with initial and terminal cell}}
    \label{fig:weinreb-results}
\end{figure}

\begin{figure}
    \ContinuedFloat 
    \caption*{states in different lineages. (b) MultistageOT-based pseudotemporal ordering. (c) MultistageOT-based inference of cell fate potential. Each cell is represented by a pie chart with wedge sizes proportional to the predicted probability of the cell ending up in each fate. (d) Experimental time point for each cell (e) Cell type annotation given by Weinreb et al. \cite{weinreb2020lineage} (f) Accuracy when comparing annotated cell label against dominant predicted fate potential. (g) Box plots of MultistageOT pseudotime. Each box extends from the first to the third quartile of the data. The black horizontal line indicates the median.  Whiskers extend to the farthest pseudotime still within 1.5 times the inter quartile range. Mann-Whitney U-test, ***$p<$0.001. (h) ``Ground truth'' fate distribution of clonal sisters for day 2 cells. (i) Benchmarking fate predictions of MultistageOT (top left), StationaryOT (top middle) and an Inverse Distance Weighted (IDW) model (top right).  Bottom row shows corresponding errors (total variation distance, TV) in the predictions when compared to the distribution of clonal sister fates. (j) Performance metrics for each method. The $p$-values correspond to comparisons to MultistageOT in a Mann-Whitney U-test. (k) Assigning a cell to fates with predicted fate probability above a threshold, denoted $\delta$. (l) Predictive performance for each of the three models over all possible thresholds for $\delta$ (horizontal axis). Vertical axis reports the fraction of cells whose model-based assignment of fate potential matches the observed clonal sister fates.  Abbreviations:  Meg (megakaryocyte), Ery (erythroid), Ma (mast cell), Ba (basophil), Eo (eosinophil), Neu (neutrophil), Mo (monocyte) pDC (plasmocytoid dendritic cell), Ccr7DC (Ccr7$^{+}$ DC) and Ly (lymphoid).} 
\end{figure}
\noindent to MultistageOT, the StationaryOT- and IDW-based predictions both displayed increasing commitment as cells approach the terminal cells (\autoref{fig:weinreb-results}c and \autoref{fig:stationaryot_and_idw_optimization}a,c).

To evaluate the performance of MultistageOT in further detail, we first inspected the cells that Weinreb et al. \cite{weinreb2020lineage} had annotated as mature (\autoref{fig:weinreb-results}e). MultistageOT predicted the dominant cell fate in agreement with the author annotation in 98.1 \% of these cells. StationaryOT and IDW showed similar results, reaching a 97.8\% and 97.9\% match respectively with the author annotations (\autoref{fig:weinreb-results}f). 

We next used the lineage tracing data of Weinreb et al. \cite{weinreb2020lineage} as proxy for the ground truth fate potential to evaluate the performance of MultistageOT in predicting the fates of the most immature progenitors of the data set. We specifically analyzed the progenitors belonging to the earliest time point (day 2) that had clonal sister cells in any of the subsequent time points (days 4 or 6) (\autoref{fig:quantifying_ground_truth}, and \hyperref[sec:methods]{Methods}). Visual comparison between the clonal fates and the three models' predictions---using colors to represent the unique cell fates---revealed striking similarities in the color patterns of MultistageOT compared to the observed fate distribution, with seemingly complete matches in some cells (\autoref{fig:weinreb-results}h-i), whereas StationaryOT showed a tendency to predict substantial multipotency across all day 2 progenitors, with less obvious color matching compared with the observed fates. The color pattern of IDW appeared as a hybrid between the MultistageOT and StationaryOT results (\autoref{fig:weinreb-results}i).

We evaluated several metrics to quantify the degree of similarity between the observed fates and the model predictions  (\autoref{fig:weinreb-results}j). The output of MultistageOT, StationaryOT, and the IDW predictor are probability distributions that for each cell predict the likelihood of each fate. We therefore assessed Total Variation (TV) distance (\autoref{fig:weinreb-results}i, bottom), which measures the total absolute deviations between two probability distributions. In doing so, we took each cell's clonal sister fate distribution in days 4 or 6 as the reference, ``ground truth'', distribution (\hyperref[sec:methods]{Methods}). The mean TV was found to be significantly lower in MultistageOT compared to StationaryOT and IDW (\autoref{fig:weinreb-results}j), indicating that MultistageOT's fate predictions best matched the observed clonal sister fate distributions. We then turned to viewing fate prediction in terms of a classification problem, and benchmarked the accuracy of a model's dominant fate prediction compared to the single most common fate among the cell's clonal sisters in days 4 or 6. We call this the dominant fate accuracy. Notably, MultistageOT achieved higher dominant fate accuracy than StationaryOT and IDW (\autoref{fig:weinreb-results}j).

We next evaluated a metric that accounts for the observation that some progenitors give rise to multiple cell lineages. This metric, which we termed total fate accuracy, quantifies how often a cell's observed clonal sister fates correspond to a model's dominant fate predictions. For example, if a cell's clonal sisters in days 4 or 6 are observed to be neutrophils and monocytes, then the model's prediction for that cell is considered a correct match only if the neutrophil and monocyte fates correspond to the model's two highest predicted fate probabilities. MultistageOT achieved the highest total fate accuracy score compared with StationaryOT and IDW (\autoref{fig:weinreb-results}j). 

As a final evaluation, we assigned lineage-forming capacity if the model-inferred fate probability for a lineage was higher than a threshold $\delta$, which converted the progenitors' model-inferred fate probabilities into categorical data (\autoref{fig:weinreb-results}k). We then compared the inferred lineage-forming capacity to the observed clonal fates of each progenitor for different values of $\delta$. An assignment was considered correct only if the model-inferred lineage-forming capacity and observations matched completely. Notably, MultistageOT performed better than StationaryOT and IDW across all values of $\delta$ (\autoref{fig:weinreb-results}l). 

Taken together, MultistageOT predicts the fate of differentiating progenitors in an integrated snapshot-like data set more accurately than StationaryOT and IDW.

\subsection{MultistageOT identifies outliers and predicts bipotent progenitors in a snapshot of bone marrow hematopoiesis}

Modeling cell differentiation in an \textit{in vivo} snapshot, i.e. a single time point measurement of the primary progenitors' transcriptome, can be associated with challenges. For example, technical limitations in cell isolation can result in the unintentional capture of cells not belonging to the differentiation process, leading to a data set containing outliers. To explore the \textit{in vivo} snapshot setting further, we applied MultistageOT to a publicly available single-cell RNA-sequencing data set of hematopoiesis. This data set comprises the gene expression profiles of more than 44 000 primary hematopoietic stem and progenitor cells isolated from mouse bone marrow \cite{dahlin2018single}. Notably, the data set includes hematopoietic stem cells and 8 author-annotated entry points to distinct hematopoietic cell lineages, which were used to define the initial and terminal states in MultistageOT (\autoref{fig:dahlin2018_annotations} and \hyperref[sec:methods]{Methods}). In computing the transition costs between cells, we discovered a small number of cells at extreme distances from other cells (\autoref{fig:cost_distributions}). Large transition costs cause numerical instability if the distances are too extreme, causing the computer to interpret every possible transition as infeasible. Such instability issues may be combated by increasing the level of regularization in the optimal transport maps. However, this may lead to undesirable levels of diffusion in the optimal transitions if the parameter necessary for stability grows large enough (\autoref{fig:entropy_regularization_effects}). We hypothesized the presence of outlier cells, not representing intermediate states between the selected initial and the terminal states, as the cause of these large transition costs. 

\begin{figure}[H]
    \centering 
\includegraphics[width=\linewidth]{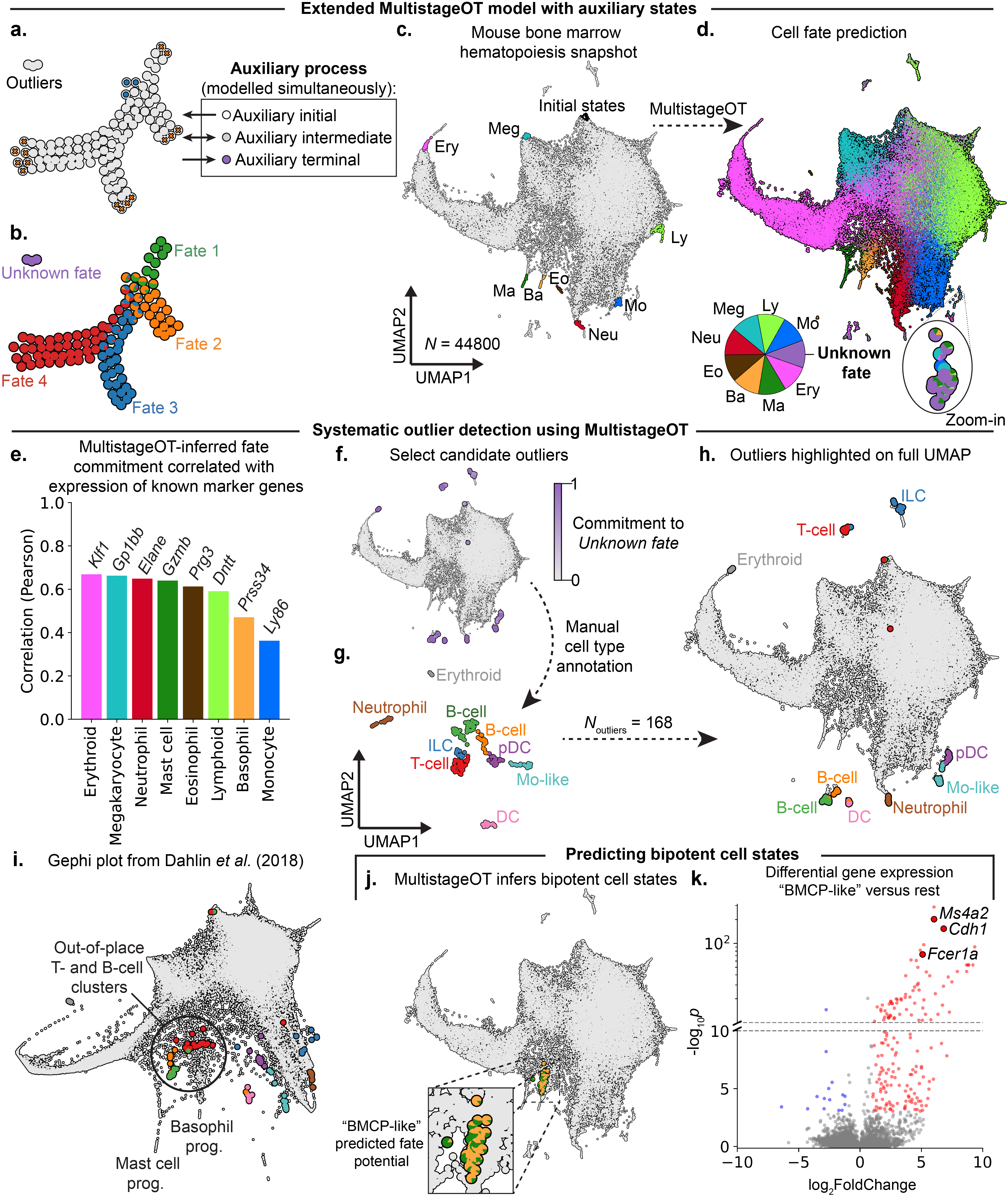}
    \caption{MultistageOT identifies outliers and predicts bipotent progenitors in a snapshot of bone marrow hematopoiesis. (a) The 2D synthetic data set with the addition of a small population of outliers; auxiliary states enables MultistageOT to model the development of outliers separately. (b) MultistageOT-based cell fate probabilities, represented by pie charts. The ``unknown fate'' (purple) represents absorption in the auxiliary terminal state. (c) UMAP of Dahlin et al. \cite{dahlin2018single} single-cell RNA sequencing data from mouse hematopoiesis, highlighted with initial and terminal cell states in different lineages. (d) MultistageOT-based cell fate probabilities. (e) log($x +1$)-expression of each lineage associated gene marker (graphed in \makebox[\linewidth][s]{\autoref{fig:dahlin2018_annotations}) correlated (Pearson) against MultistageOT's predicted commitment to the associated lineage.}}
    \label{fig:outlier-results}
\end{figure}

\begin{figure}
    \ContinuedFloat 
    \caption*{ (f) Predicted commitment to an ``unknown fate'' was used as a means to identify outliers. Cells committed to ``unknown fate'' with a probability above 0.5 are highlighted. (g) $N_\text{outlier} = 168$ candidate outlier cells with ``unknown fate'' probability above 0.5 were selected. The candidate outliers were clustered, represented in a new UMAP embedding and annotated. (h) Annotated outlier cells highlighted on the original UMAP. (i) Annotated outlier cells highlighted on the gephi embedding used in Dahlin et al. (j) MultistageOT predicts mast cell/basophil common progenitors (above $10\%$ in mast cell and basophil commitment, and below $1\%$ in all other fates). (k) Volcano plot showing log$_2$-fold change of gene expression levels against the Benjamini-Hochberg adjusted $p$-value in a Mann-Whitney U-test of log($x$+1)-transformed gene expression in the BMCP-like subset versus the rest of the cells in the snapshot. Positive fold change indicates higher expression in the BMCP-like subset. Genes with absolute fold change above 1.0 and adjusted $p$-value below 0.001 is highlighted (blue and red indicates negative and positive fold change respectively). Note the broken $y$-axis. Abbreviations: Meg (megakaryocyte), Ery (erythroid), Ma (mast cell), Ba (basophil), Eo (eosinophil), Neu (neutrophil), Mo (monocyte), Ly (lymphoid), DC (dendritic cell), pDC (plasmocytoid DC), Mo-like (monocyte-like), ILC (innate lymphoid cell), prog. (progenitor). } 
\end{figure}

To be able to identify and account for potential outliers, we developed an extension of the MultistageOT model. The extension works by introducing three auxiliary states (corresponding to an initial, intermediate, and terminal cell state) to which cells will transition if the main transportation process is associated with too high cost. This allows outliers to be modeled in a separate auxiliary process (\autoref{fig:outlier-results}a), and stabilizes the algorithm with respect to extreme transition costs, without necessitating undesirably large levels of regularization. To test this extension, we added outliers to the simulated data in \autoref{fig:synthetic-data-results}a. To mark cells as outliers, we let the auxiliary terminal state represent an ``unknown fate'', expecting MultistageOT to highlight outliers with strong commitment to this ``unknown fate'' (\autoref{fig:extension_synthetic_data}). Notably, the extension of MultistageOT was able to identify the outlier cells in the simulated data (\autoref{fig:outlier-results}a-b).

The \textit{in vivo} hematopoiesis data set \cite{dahlin2018single} (\autoref{fig:outlier-results}c) was subsampled into four disjoint subsets of size $N =$ 11 200 cells and MultistageOT was applied to each subset. Pooling the results and plotting the fate probabilities for each cell highlighted the progression from multipotent progenitors to lineage-committed cells and revealed potential outlier cells that were marked by MultistageOT as committed to the ``unknown fate'' (\autoref{fig:outlier-results}d). The hematopoietic progenitors' ground truth fate potential is unavailable in a single snapshot of hematopoiesis. To assess the validity of the inference, we therefore correlated the MultistageOT-derived fate potential with established lineage marker genes. The results revealed positive correlation for each of the eight mature cell lineages and the corresponding marker gene (\autoref{fig:outlier-results}e).

A small number of cells were predicted to have $50\%$ or higher probability to be absorbed in an unknown fate (\autoref{fig:outlier-results}f). We considered such cells candidate outliers. To inspect these cells further, we visualized them in a new UMAP embedding and performed Leiden clustering analysis (\autoref{fig:outlier-results}g). The candidate outlier clusters were annotated based on the expression levels of marker genes (\autoref{fig:dahlin2018_gene_annotations_1}-\autoref{fig:dahlin2018_gene_annotations_full_umap_2}). Two of the clusters expressed genes representative of cells in late stages of neutrophil and erythrocyte differentiation, respectively  (\autoref{fig:dahlin2018_gene_annotations_1}d-e). As these neutrophil and erythroid cells are formed beyond the entry points to each respective lineage identified by Dahlin et al. \cite{dahlin2018single}, they do not represent intermediate differentiation stages in the snapshot and MultistageOT appropriately flagged them as potential outliers. The annotation also revealed the presence of cells with lineage programs representative B cells, T cells, and innate lymphoid cells (ILCs) (\autoref{fig:outlier-results}g). These lineages emerge past the lymphoid lineage entry point that was identified in Dahlin et al. \cite{dahlin2018single}, further evidence that the candidate cells constituted outliers in the analyzed differentiation process. We observed that the B-, T-, and ILC-like cells occupied distinct regions of the UMAP, separate from the main single-cell landscape and the lymphoid entry point. This served as an additional indicator that the identified cells did not belong to the differentiation process. However, we also note that MultistageOT is independent of data visualization approach and can therefore reveal outliers in situations where the data visualization fails to clearly distinguish them. The relevance of this becomes clear when the data is visualized with the original gephi embedding used in Dahlin et al. \cite{dahlin2018single}. In the gephi-plot, the B- and T-like cells are found out-of-place as they position near the entry point to the basophil and mast cell differentiation trajectories (\autoref{fig:outlier-results}i).

To chart lineage diversification in a differentiation process, the identification of bipotent progenitor stages is crucial. Dahlin et al. \cite{dahlin2018single} identified, isolated, and characterized bipotent basophil-mast cell progenitors (BMCPs) in bone marrow through an educated guess of the progenitors' phenotype. Here, we investigated whether MultistageOT's data-driven inference of fate potential predicted the existence of bipotent BMCPs in this snapshot. Visualization of cells predicted by MultistageOT to possess specific basophil- and mast cell-forming capacity (\hyperref[sec:methods]{Methods}) revealed a population of BMCP-like cells located near the basophil and mast cell entry points on the UMAP (\autoref{fig:outlier-results}j). Differential gene expression analysis between the MultistageOT-inferred BMCP-like cells and all other cells (\autoref{fig:outlier-results}k) revealed high expression of the genes coding for the $\alpha$ and $\beta$ chains of Fc$\epsilon$RI (\textit{Fcer1a}, $\textit{Ms4a2}$) and E-cadherin (\textit{Cdh1}), genes shown to mark progenitors with basophil- and mast cell-forming capacity \cite{doi:10.1126/sciimmunol.aba0178, dahlin2018single}.

Taken together, MultistageOT infers cell fate potential consistent with known gene markers, detects outlier populations, and predicts bipotent progenitors in a single-cell RNA-sequencing snapshot of cell differentiation.

\section{Discussion}

Data driven computational methods that reliably infer hidden developmental trajectories from a single-cell RNA-sequencing snapshot are necessary for charting the cell differentiation process at the molecular level. In this work, we presented a multistage optimal transport model of cell differentiation trajectories.

Following established methods in trajectory inference (e.g., \cite{zhang2021optimal,trapnell2014dynamics, street2018slingshot,wolf2019paga, setty2019characterization, lange2022cellrank}), MultistageOT takes the view of a single-cell snapshot as a graph, where nodes represent feasible differentiation states to which cells can transition, and edge weights represent affinities between states. Since cell differentiation underlies a progression from undifferentiated states to more differentiated states, a central property of computational models that seek to infer biological relationships between cell states in a snapshot is directional bias. For example, naively building affinities between cells through a $k$-nearest neighbour graph based on Euclidean distances alone would fail to capture this directional bias as symmetries in neighbour proximity generally lead to undirected graphs. Directionality can be imprinted on a cell-cell affinity graph in different ways: by explicitly incorporating directional data in the affinity calculations, e.g. based on the RNA velocity \cite{la2018rna} model, as in Lange et al. \cite{lange2022cellrank}; or by applying \textit{post hoc} modifications on the graph, pruning edges that do not conform with a secondary pseudotime model \cite{setty2019characterization}; or by leveraging prior knowledge of initial and terminal cell states \cite{zhang2021optimal}. Building on this latter strategy, MultistageOT represents a novel approach to computing the cell-cell affinities. A major departure from previous methods is that, in MultistageOT, cell differentiation is modeled through multiple, consecutive, mass transport steps, establishing a pseudotemporal axis here demonstrated to be consistent with real time course measurements. 

We derived an algorithm for applying MultistageOT to a single-cell RNA-sequencing snapshot. The discretization of the differentiation process into multiple differentiation stages in MultistageOT enables a large degree of control and modeling flexibility, as constraints can be imposed on each differentiation stage. Notably, our MultistageOT implementation showed improved predictive performance over the state-of-the-art StationaryOT \cite{zhang2021optimal} algorithm  when inferring fate potential of early hematopoieitic progenitor cells. One limitation in our implementation of MultistageOT is the memory requirement, leading us to subsample the larger data sets to around $10^4$ cells. However, the subsampling approach had the advantage of allowing us to affirm that MultistageOT produced consistent annotations across multiple subsamples. For increased robustness, the data could in principle be repartitioned multiple times and estimates could be averaged over the resamplings for each cell \cite{zhang2021optimal}. Note that the memory bottle neck remains an issue in standard optimal transport formulations as well, as the number of decision variables generally grows quadratically with the number of points in each support, whereas it grows linearly with the number of differentiation stages in MultistageOT for a fixed number of cells. However, recent work \cite{klein2025mapping} has proposed leveraging online computation of the optimal transport costs to reduce memory cost, combined with GPU acceleration to combat the accompanying increase in compute time. Although demonstrated in Klein et al. \cite{klein2025mapping} for a different setting than ours, we note that the benefits might extend to broader classes of optimal transport formulations, and we thus see potential in future implementations of MultistageOT to experiment with similar techniques.

A single-cell RNA-sequencing snapshot may feature disconnected clusters of cells, or even individual cells, that lie far away from other cells in gene expression space. In previous work, the issue of disconnectedness in a snapshot has been addressed by leveraging clustering algorithms and identifying disconnected clusters on the basis of a ``connectivity measure'' \cite{wolf2019paga}. However, methods have yet to be established that can automatically detect and highlight single cells that do not belong to the differentiation process of interest. Such outliers can create spurious trajectories through the data, which in turn may lead to incorrect conclusions about the underlying cell differentiation process.  MultistageOT takes a novel approach to address this issue:  Auxiliary states act as layer of stem, intermediate, and mature cells that bridges the disconnected cells to a hidden process, allowing for identification of outliers at single-cell resolution on the basis of inferred cell fate potential. Using our algorithm, MultistageOT was employed for cell fate inference in a snapshot of \textit{in vivo} bone marrow hematopoiesis, allowing us to identify true outliers (in the sense of lineage commitment), demonstrating the effectiveness of this approach. Moreover, the results also highlight  limitations of low-dimensional representations of high-dimensional data, as the B- and T-like cells detected by MultistageOT are poorly resolved in the original force-directed graph embedding published in Dahlin et al. \cite{dahlin2018single}.

Finally, applying MultistageOT to predict cells belonging to a previously described, but rare, bipotent progenitor population allowed us to establish the transcriptional composition of the predicted cells at the single-cell level. Recent advances in methods enabling paired transcriptome and surface protein measurements for individual cells (e.g., CITE-seq \cite{stoeckius2017simultaneous}) in principle enables analogous characterization of differentially expressed surface markers of the predicted cells. In future work, we thus hope to see utilization of multi-omics data for model-guided inference of fluorescence-activated cell sorting strategies to isolate the progenitors of interest. Cell fate assays are then used to investigate and validate the progenitors' predicted fate potential. 

Taken together, we believe that MultistageOT has the potential to be utilized in the discovery, characterization, and validation of uncharted differentiation stages of any developmental process marked by well-defined beginning and end states.

\section{Conclusions}
MultistageOT models cell differentiation in a snapshot as a stepwise process over multiple differentiation stages. MultistageOT is a flexible modeling framework that orders cells in pseudotime and infers differentiation fates. It outperforms a previously published bimarginal optimal transport model when predicting fates of early progenitor cells in a lineage tracing data set from mouse hematopoeisis. MultistageOT can also identify rare cell populations not representing intermediate differentiation states to identified mature cell lineages, and predicts the transcriptional states of a previously described bipotent progenitor population.

\section{Acknowledgements}
The Swedish Research Council (2022-00558), the Swedish Cancer Society, and Karolinska Institutet funded the research in JSD's team. The Swedish Research Council (2020-
03454) and KTH Digital Futures funded the research for JK. We thank Lovisa Julin for conceptualizing an early version of the method during her master thesis work.

\section{Methods}
 \label{sec:methods}

\subsection{Data preprocessing}
All single-cell RNA-sequencing data were preprocessed in Python 3.9 using the Scanpy module (version 1.9.3).

\subsubsection{Generation of synthetic data} 
For the simulated data in \autoref{fig:synthetic-data-results}, we manually generated $N=105$ data points in the unit square to reflect a simple differentiation process with known initial and terminal cell states, utilizing the online tool: \url{https://guoguibing.github.io/librec/datagen.html}.

\subsubsection{Paul et al. (2015) \textit{in vivo} snapshot of mouse hematopoiesis } 
The Paul et al. \cite{paul2015transcriptional} data is integrated in the Scanpy python module, and was loaded using the function \texttt{scanpy.datasets.paul15}. We performed the preprocessing steps outlined in the Scanpy tutorial in \url{https://scanpy-tutorials.readthedocs.io/en/latest/paga-paul15.html}. In particular, the data was converted to float64 (as per the recommendations in the tutorial) and finally preprocessed using the function \texttt{scanpy.pp.recipe\_zheng17}. This procedure retained 1000 highly expressed genes and then normalized cells by total counts, after which log($x$+1)-transformation and standardization to unit variance and zero mean was performed.

\subsubsection{Weinreb et al. (2020) \textit{in vitro} lineage tracing in mouse hematopoiesis} 

The lineage tracing data from Weinreb et al. \cite{weinreb2020lineage} was downloaded as a normalized count table (L$_1$-normalized counts) from the authors' Github repository: \url{https://github.com/AllonKleinLab/paper-data/tree/master/Lineage\_tracing\_on\_transcriptional\_landscapes\_links\_state\_to\_fate\_during\_differentiation}. To start, the data frame contained 25289 genes and 130887 cells.  Highly variable genes were selected using Scanpy's \texttt{highly\_variable\_genes} function (\texttt{flavor='seurat', min\_mean=0.001, max\_mean=5, min\_disp=0.05}) leaving us with a total of 10220 genes. The data was then log($x$+1)-transformed and scaled to unit variance and zero mean. PCA was performed on this preprocessed data keeping the first 50 principal components. Finally, we randomly omitted three data points from the data set in order to have the total number of cells divisible by 12, and then created a partition of 12 disjoint subsets of the data, each of size 10907. MultistageOT was applied to each partition independently, and the results were then pooled when creating the 2D visualizations of pseudotime and cell fate probabilities in \autoref{fig:weinreb-results}.

\subsubsection{Dahlin et al. (2018) \textit{in vivo} snapshot of mouse hematopoiesis }
Data from Dahlin et al.  \cite{dahlin2018single} was downloaded from \url{http://gottgens-lab.stemcells.cam.ac.uk/adultHSPC10X/}. Doublets were removed by referencing the doublet score file in the publicly available data. Cells were removed based on the following critera: having more than 10\% mitochondrial reads, having the total number of reads $\pm$ 3 standard deviations from the mean (in log-scale) and having fewer than 500 genes expressed. Genes expressed in fewer than 3 cells were then removed. The resulting gene expression data frame featured 17633 genes in 44802 cells. The data was normalized by total counts so that each cell's expression levels summed to $10^{4}$. Subsequently, we identified 5033 highly variable genes based on using Scanpy's \texttt{filter\_genes\_dispersion}-function (with parameters \texttt{log=True, flavor='seurat', 
min\_mean=0.001, max\_mean=5, min\_disp=0.05, \\ copy=True}).
The remaining genes were then log($x$+1)-transformed and scaled to unit variance and zero mean. Finally, genes were correlated against a reference of cell cycle genes provided by Dahlin et al. \cite{dahlin2018single}, keeping only genes with with 0.2 or lower Pearson's correlation coefficient. This removed a total of 368 genes. At this stage, the data featured a total of 4665 genes in 44802 cells. PCA was performed on this processed data, keeping the first 50 principal components. Finally, a UMAP was created by first computing a neighbour graph with \texttt{scanpy.pp.neighbors}, using 9 nearest neighbors and the 50 first principal components,  and then calling \texttt{scanpy.tl.umap} with \texttt{min\_dist}-parameter to 0.4. We randomly removed two data points to make the total number of points divisible by four, and split the data set into four disjoint partitions, each of size 11200. MultistageOT was then applied to each partition independently.

The $p$-values in the differential expression analysis (\autoref{fig:outlier-results}k) were computed with the Mann-Whitney U-test function in Scipy (version 1.11.0).

\subsection{Implementation}
\label{sec:implementation}
MultistageOT solves a multistage optimal transport problem on cellular state space to minimize a total transition cost (see \hyperref[sec:supplementary_note]{Supplementary Note} for details). Transition costs between cell states were computed 
as the squared Euclidean distances between data points.  In applying MultistageOT to scRNA-seq data, we applied the preprocessing steps outlined in the previous section and then computed the squared Euclidean distances on the PCA transformed data, keeping the first 50 principal components. 

All results were computed using a MacBook Pro 14 with an M1 Pro chip and 32 GBs of memory. This allowed us to compute the MultistageOT solution on data sets of around $N \approx 11000$ cells using \hyperref[alg:baseblockascent]{Algorithm \ref{alg:baseblockascent}} (\hyperref[sec:supplementary_note]{Supplementary Note}). Hence, for larger the data sets from Dahlin et al. \cite{dahlin2018single} and Weinreb et al. \cite{weinreb2020lineage} we partitioned the data into uniformly subsampled disjoint subsets of roughly that size and applied the method independently on each subset.

\subsubsection{Number of transport stages}
For all scRNA-seq data sets, we solved the MultistageOT problem using $T-1 = 20$ intermediate transport steps.

\subsubsection{Picking initial and terminal states}
For the Paul et al. \cite{paul2015transcriptional} data, in the absence of prior knowledge about the number of cells represented in each mature blood cell lineage, we picked $n_0 = 3$ initial cell states and $n_F = 24$ terminal cell states (3 in each of the eight lineages) in regions of the UMAP expressing lineage associated markers (\autoref{fig:paul15-results}a). The initial states were chosen to lie in a region of the UMAP marked by high \textit{Mpl} expression as well as low \textit{Pf4} expression. 

For the Weinreb et al. \cite{weinreb2020lineage} lineage tracing data, we selected a total of $n_0 = 120$ initial states, and $n_F = 6972$ terminal states (\autoref{fig:weinreb-results}a). Since the data was partitioned into 12 subsets, solved independently, we selected exactly 10 initial states and 581 terminal states in each subset. The initial states comprise a subset of an HSC population found by only considering cells annotated by Weinreb et al. \cite{weinreb2020lineage} as ``Undifferentiated'', in a region of the SPRING-plot embedding also marked by high expression of \textit{Procr} and \textit{Cd34}. Terminal states were chosen to lie on the ``tips'' of the spring-graph embedding, representing each of the annotated mature blood cell lineage. We based the number of terminal states represented in each fate on the relative frequency of annotated cells in each fate, and then normalized this number of terminal states so that the smallest fate had one cell represented.

For the Dahlin et al. \cite{dahlin2018single} data, we picked a total of $n_F = 252$ terminal cell states, in regions of the UMAP expressing lineage associated markers (\autoref{fig:dahlin2018_annotations}) (40 in each lineage except for Eosinophils (12), Mast cell (20) and Baso (20)). We ensured that any terminal cell states in $\mathcal{X}_F$ had nearby intermediate cell states in $\mathcal{X}$ so as to not cause numerical instability. We picked $n_0 = 40$ initial states in a region of the UMAP marked by high expression of the gene \textit{Procr}.  

\subsubsection{Cost of transporting mass to auxiliary cell states}
When applying the model extension with auxiliary cell states on the scRNA-seq data in Dahlin et al. \cite{dahlin2018single}, we used a fixed cost $Q = 3.25$ of transporting mass to any of the auxiliary states. This cost was chosen as conservatively as possible so as to not needlessly affect the overall solution (see \autoref{fig:cost_distributions}c for a representative plot of the distributions of the costs on which $Q=3.25$ is marked). As such, it corresponds to the highest value that we found prevented the iterates from generating division-by-zero errors. 

To gauge the effects of the auxiliary cell state extension, we applied it to the synthetic data set and found it to produce robust results over a range of $Q$-values (\autoref{fig:extension_synthetic_data_outliers_different_q_values}). We also tested it on a partition of the lineage tracing data set \cite{weinreb2020lineage} with $Q = 2$ and found that too produced similar results (see \autoref{tab:q-values-weinreb}).

\subsubsection{Convergence}
We monitored the convergence of the Sinkhorn algorithm (\hyperref[sec:supplementary_note]{Supplementary Note}) and terminated the iterations when the maximum over all dual variable updates and absolute constraint deviations were less than the chosen tolerance level $\tau =10^{-4}$ for the data from Paul et al. \cite{paul2015transcriptional} and Dahlin et al. \cite{dahlin2018single} and $\tau =5\cdot 10^{-4}$ for the data from Weinreb et al. \cite{weinreb2020lineage} (tolerance level was increased slightly to lower compute time on this bigger data set).

\subsubsection{Regularization parameter}
\label{sec:methods_regularization_parameter}
MultistageOT solves an entropy-regularized multistage optimal transport problem. The level of regularization is set by the user, and is therefore a hyperparameter in MultistageOT, affecting the level of diffusion in the optimal transitions (\hyperref[sec:supplementary_note]{Supplementary Note}). It is important to note that the relative influence of the regularization parameter, denoted $\epsilon$, will in general be affected by the scale of the data. Therefore, in our implementations, following Schiebinger et al. \cite{schiebinger2019optimal}, we normalized the cost matrices by dividing the elements by the median of all costs. This made $\epsilon$ less dependent of the original scale of the input data and it was thus on a comparable scale between different data sets. See \autoref{tab:regularization-parameter-values} for values used in our implementations.

To achieve a regularization parameter in the desired range, we nested our algorithm in a proximal point scheme (see \autoref{sec:proximalpoint} in \hyperref[sec:supplementary_note]{Supplementary Note}). This required us to set a starting value for the regularization parameter. This starting value was chosen as low as possible to limit the number of outer proximal point iterations needed to reach a sufficiently low effective regularization parameter. Note that, even when normalizing the cost matrices, these starting values differ slightly depending on specific data distribution within each data set  (e.g., due to the presence of outliers that correspond to many large distances to other cells). For example, the median transition cost in the synthetic data set with outliers is $\sim$1.1 times larger than the median in the original data without outliers. To ensure that the optimal transitions between the non-outlier cells would be comparable in the two settings assessed in \autoref{fig:extension_synthetic_data}e, the regularization parameter was scaled accordingly. We refer to \autoref{tab:regularization-parameter-values} for a summary of the initial and final regularization parameter values for each data set.

\subsection{Downstream analyses}
\label{sec:downstream}
Let $\mathcal{D}=\{x^{(i)}\,|\, i=1,\ldots, N\}$, represent a snapshot of gene expression states. Specifically, $x^{(i)}\in \mR^m$ represents the vector of $m$ gene expression levels of cell $i$. We represent the sets of initial, intermediate and terminal states respectively with the ordered sets   $\mathcal{X}_0=(x^{(i)}\,|\, i=1,\ldots, n_0)$, $\mathcal{X}=(x^{(i)}\,|\, i=n_0+1,\ldots, N-n_F) 
$ and $\mathcal{X}_F=(x^{(i)}\,|\, i=N-n_F+1,\ldots, N)$,  

In MultistageOT, cell differentiation is modeled as a multistage optimal mass transport problem, whereby initial cells in $\mathcal{X}_0$ are assumed to transition into terminal cells in $\mathcal{X}_F$ over $T$ stages of differentiation in a way that minimizes a total transition cost. Due to the squared Euclidean distance cost, transitions corresponding to large jumps between cell states are associated with high cost. Thus, to achieve a minimum transition cost, intermediate cells in $\mathcal{X}$ must be recruited as feasible intermediate differentiation states (see \hyperref[sec:supplementary_note]{Supplementary Note} for details). We here show how this induces a pseudotemporal ordering as well as cell fate probabilities of all intermediate cell states. 

\subsubsection{Pseudotime}
Denote by $\mu_t^{(i)}$ the total mass transported through each intermediate state,  $x^{(n_0 + i)} \in \mathcal{X}$, in stage $t$ for $i = 1,\dots,N - n_0 - n_F$ and $t = 1,\dots, T-1$.
By normalizing by the total transported mass (over all stages), we obtain, for each intermediate state, the corresponding probabilities
\begin{align}
     \mathbb{P}\left (x^{(n_0 + i)} \in \mathcal{X}  \text{ transports mass in stage } t \right) := \frac{ \mu_t^{(i)} }{\sum_{k=1}^{T-1}  \mu_k^{(i)} }
\end{align}
(illustrated in \autoref{fig:synthetic-data-results}c). Similarly, let $\nu_t^{(i)}$ be the total mass received by each terminal cell state,  $x^{(N-n_F + i)} \in \mathcal{X}_T$, in stage $t$ for $i = 1,\dots,n_F$ and $t = 1,\dots, T-1$. We obtain the probabilities
\begin{align}
     \mathbb{P}\left(x^{(N-n_F + i)}  \in \mathcal{X}_F  \text{ receives mass in stage } t\right) := \frac{ \hat{\nu}_t^{(i)} }{\sum_{k=1}^{T-1}  \hat{\nu}_k^{(i)} }.
\end{align}
Based on these probabilities, we compute, for each intermediate state and terminal state, an expected value over the different transport stages. A pseudotemporal ordering of the cells is then obtained by ranking the cells based on this expected transport stage, such that the cell with the earliest expected transport stage is assigned a pseudotime of 0, and the cell with the latest expected transport stage is assigned a pseudotime of 1. Initial states in $\mathcal{X}_0$ by definition have an expected transport stage of 0 and are thus all given a pseudotime of 0.

\subsubsection{Cell fate probabilities}
To quantify the likelihood of a cell ending up in a particular fate, we use results from Markov chain theory. In particular, we estimate cell fate probabilities by computing absorption probabilities in an absorbing Markov chain as follows, similar to Weinreb et al. \cite{weinreb2018fundamental} and
 Lange et al. \cite{lange2022cellrank}.

Let $m_{ij}^{(t)}$ denote the amount of mass sent between cell $i$ and any other cell $j$ in the $t$:th stage, obtained by solving \hyperref[alg:baseblockascent]{Algorithm \ref{alg:baseblockascent}} (see \hyperref[sec:supplementary_note]{Supplementary Note}). To quantify the affinity between any two cell states, we aggregate the transport plans by summing the total mass sent between any two cells over all stages, i.e., we compute
\begin{equation}
    m_{ij} = \sum_{t=1}^{T-1}m_{ij}^{(t)}, \quad \text{for }i = 1,\dots, N, \mbox{ and }  j = 1,\dots, N. 
\end{equation}
For each initial and intermediate cell states (i.e., $i=1, \dots, N-n_F$) the aggregated transport coupling is normalized to represent a probability of transition between cell $i$ and cell $j = 1,\dots, N$:
\begin{equation}
    a_{ij} := \frac{m_{ij}}{\sum_{k=1}^N m_{ik}} \quad \text{(cell $i$ to cell $j$  transition probability model)}.
\end{equation}
For terminal cell states, $i=N-n_F, \dots, N$, we have that $m_{ij}=0$ for all $j = 1,\dots,N$, so we define
\begin{equation}
    a_{ij} = \begin{cases}
        1 \enspace &\text{if $i = j$}\\
        0, &\text{otherwise}.
    \end{cases}
\end{equation}
We let the matrix $A = [a_{ij}]_{i,j=1}^N$ be a transition probability matrix in a stationary absorbing Markov chain, in which initial and intermediate cell states in $\mathcal{X}_0$ and $\mathcal{X}$ define the $n$ transient states and the terminal cell states in $\mathcal{X}_F$ define the $n_F$ absorbing states. We partition $A$ according to
\begin{align}
    A = \begin{bmatrix} Q & R \\ \boldsymbol{0} & I \end{bmatrix},
\end{align}
so that $Q = [q_{ij}]_{i,j=1}^{n}$ encodes transitions between transient states, $R$ encodes transitions between transient and absorbing states, $\boldsymbol{0}$ is the $n_F \times n$ matrix of zeros, and $I$ is the $n_F\times n_F$ identity matrix. It is a well-known result in Markov chain theory that the matrix
\begin{align}
    Z = (I -Q)^{-1}R,
\end{align}
with elements $z_{ij}$, $i = 1,\dots,n$, $j= 1,\dots, n_F$, encodes the probability of transient state $i$ being absorbed in absorbing state $j$. 

Assume $K$ different classes of absorbing states (e.g., representing different lineages). The probability of transient state $i$ ending up in a class $k$ of absorbing states is obtained by summing the absorption probabilities over all states in class $k$. Hence, we compute the cell fate probability of cell $i$ eventually being absorbed in fate $k$ through

\begin{align}
    p_k^{(i)} :=   \sum_{ j \in \mathcal{I}_k} z_{ij},
\end{align}
 where $\mathcal{I}_k = \{ j \in \{1,\dots, n_F \} | \text{cell state } x^{(N-n_F + j)} \text{ belongs to fate } k\}$ denotes the index set corresponding to terminal states in fate class $k$. As a measure of potency of cell $i$, we computed the Shannon entropy according to:
\begin{equation}
    s^{(i)} = -\sum_{k=1}^{K} p_k^{(i)}\log { p_k^{(i)}}, \quad i = 1,\dots, N.
\end{equation}
Higher entropy $s^{(i)}$ reflects a lower degree of commitment, whereas lower values  reflects a higher degree of commitment to a particular fate (the lowest value  corresponds cell fate probabilities $p_k^{(i)}$ which is a singular distribution, supported only on a single fate, whereas the highest value corresponds to a uniform distribution over all fates).

\subsubsection{Quantifying the model's predictive performance}
Lineage tracing data from Weinreb et al. \cite{weinreb2020lineage} was used to benchmark model predictions.

\textit{Predicting annotated mature cells}: When comparing fate prediction with author annotations in \autoref{fig:weinreb-results}f, we excluded cells from the terminal cell subset, $\mathcal{X}_F$, since these were used in fitting the models to the data. The dominant predicted fate probability (the fate predicted as most probable) in each cell was then compared to its annotation. The accuracy was taken as the ratio of correct matches to the total number of considered cells (a total of 51720 annotated cells were considered).

\textit{Predicting day 2 cells}:  When benchmarking the performance of predicting fates of day 2 cells, we chose day 2 cells satisfying each of the following criteria. In addition to being a cell from day 2, it needed to:
\begin{enumerate}
    \item Belong to a clone with a non-empty set of clonal sisters in days 4 or 6.
    \item Not be part of the set of predefined terminal states used in optimizing our model.
    \item Have at least 7 clonal sisters in days 4 or 6.
    \item Have at least 1 clonal sister in days 4 or 6 annotated as committed to any of the 10 identified cell lineages.
    \item Not belong to a clone represented in more than one library type (which would indicate a spurious clonal relationship based on a repeated lineage barcode).
\end{enumerate} 
This left us with $N_{d_2} = 795$ day 2 cells. For each such cell, we computed ``ground truth'' cell fate probabilities by counting the number of its clonal sisters in days 4 or 6 that were annotated as a particular mature cell type, ignoring the counts of clonal sisters named ``Undifferentiated''. The counts were then normalized to sum to one: inducing, for each day 2 cell, a probability distribution over the different mature cell types. More specifically, by letting $n_k^{(i)}$ denote the number of clonal sisters found in fate $k$ at days 4 or 6 within the clone of a given day 2 cell $i$ ($i=1,\dots, N_{d_2}$), the empirical cell fate probability, $q_k^{(i)}$, of cell $i$ ending up in fate $k$ was computed according to 
\begin{align}
    q_k^{(i)} = \frac{n_k^{(i)}}{\sum_{k=1}^{K} n_k^{(i)}}.
\end{align}
Based on using these empirical cell fate probabilities as a ground truth, we could compute  different performance metrics (see table in \autoref{fig:weinreb-results}g) including total variation distance (TV), defined via
\begin{align}
    \text{TV}(p,q) := \frac{1}{2}\sum_{k=1}^{K} | p_k - q_k |,
\end{align}
for two probably vectors, $p := (p_1,\dots, p_K)$, $q := (q_1,\dots, q_K)$, as well as dominant fate accuracy when predicting the most likely cell fate:

\begin{equation}
\label{eq:accuracy}
  \text{dominant fate accuracy} := \frac{\text{correct classifications of most likely fate}}{N_{d_2}}.
\end{equation}

We also developed two novel metrics to gauge cell fate predictive performance. The first, which we call total accuracy, is defined as the fraction of cells for which the $m$ largest predicted fate probabilities matches the $m$ observed clonal sister fates. More specifically, let $\mathcal{F}^{(i)}$ represent the set of observed clonal sister fates for cell $i$, i.e., $\mathcal{F}^{(i)} =  \{ k | q_k^{(i)} > 0\}$, and again let $p_k^{(i)}$ denote the predicted probability of fate $k$ in cell $i$. Total accuracy is defined as the fraction of day 2 cells for which it holds that
\begin{equation}
    \min\left \{p_k^{(i)}\right\}_{k\in \mathcal{F}^{(i)}} > \max\left\{p_k^{(i)}\right\}_{k\notin \mathcal{F}^{(i)}}.
\end{equation} 
In the second measure, presented in \autoref{fig:weinreb-results}l, we turned the predicted (continuous) fate probabilities into categorical fate assignments by assigning to each cell any fate with a predicted probability above a threshold $\delta$  (\autoref{fig:weinreb-results}k). This can be seen as a generalization of the dominant fate accuracy which corresponds to a single assignment (the dominant fate probability). We achieved a new type of performance metric by calculating the fraction of all cells with perfect matches after performing fate assignments at a particular threshold. Note that this performance metric depends on the threshold $\delta$. To get a robust estimate for this generalized accuracy, we computed this value for all possible thresholds $\delta \in [0,1]$ and computed an area under the curve (\autoref{fig:weinreb-results}l).
 
 \subsubsection{StationaryOT} 
The StationaryOT \cite{zhang2021optimal} method is also based on optimal transport. It requires the user to specify sets of initial and terminal cell states (referred to as source and sink nodes respectively in Zhang et al. \cite{zhang2021optimal}). It represents a fundamentally different approach from MultistageOT in that StationaryOT solves a single-step transport problem to obtain couplings between cells, whereas MultistageOT solves a global transport problem over multiple transport steps (\hyperref[sec:supplementary_note]{Supplementary Note}). 

Following \cite{zhang2021optimal}, we used the quadratic cost function $C(x,y) = \frac{1}{2}\lVert x - y \rVert^2_2$ between cell states $x$ and $y$. In benchmarking predictive performance in \autoref{fig:weinreb-results}, we applied StationaryOT for computing cell fates within each of the 12 partitions of the data set, using the same set of initial and terminal states as in MultistageOT, denoted $\mathcal{X}_0$ and $\mathcal{X}_F$ respectively. In addition to the entropy-regularization parameter, $\epsilon$, StationaryOT requires the user to specify relative growth rates for the cells, $g_i = \exp ( R_i) $, $i=1,\dots,N$ ($N$ being the number of cells and $R_i$ being a ``flux-rate'' \cite{zhang2021optimal}) as well as a time step parameter $\Delta t$. In the absence of growth rate estimates in the data set, we used the following heuristic: we specified $g_i \equiv 0$ for any sink node $ x^{(i)} \in \mathcal{X}_F$ and 
\begin{align}
  R_i \equiv  \frac{1}{\Delta t}\ln \left(\frac{n_F}{n_0} + 1 \right)  \iff g_i \equiv \exp \left\{ \frac{1}{\Delta t}\ln \left(\frac{n_F}{n_0} + 1 \right)\right\} = \left(\frac{n_F}{n_0} + 1 \right)^{\frac{1}{\Delta t}},
\end{align}
for any source node $ x^{(i)} \in \mathcal{X}_0$, and $g_i = 1$ for any intermediate cell $x^{(i)} \in \mathcal{X}$. Note that this implies that
\begin{align}
    \sum_{i=1}^{N} g_i^{\Delta t} = n_F + n_0 
 + n = N,
\end{align}
so that, if we take the first marginal at $t = 0$ to be $\mu = g^{\Delta t}$ and we define the second marginal at $t = \Delta t$ as $\nu = \left(\frac{\sum_{i=1}^{N} \mu_i}{N}\right) \ones{N}$, we obtain $\nu = \ones{N}$. 

The time step parameter was taken as $\Delta t = 0.25$, which was used by Zhang et al. \cite{zhang2021optimal} on the \textit{Arabidopsis thaliana} root tip data set. The regularization parameter $\epsilon = 0.12$ was chosen small enough such that we did not experience underflow in the computation for the optimal couplings, and thus allowed us to obtain numerically stable results for the cell fate probabilities (both of these numerical issues are discussed in Zhang et al. \cite{zhang2021optimal}).

\subsubsection{Inverse Distance Weighed (IDW) model}
To gauge the performance of our modeling framework in relation to other approaches, we established a naive cell fate model, based only on the relative proximity to the different terminal fates and a prior fate bias. Let $\mathcal{X}_k \subset \mathcal{X}_F$ be the subset terminal states in $\mathcal{X}_F$ belonging to cell fate $k$ and let $x^{(i)}$ denote the cell state of cell $i$ in day 2. Specifically, the IDW model estimates the probability for a cell $i$ to end up in fate $k$  according to:

\begin{align}
 r_k^{(i)} = \frac{b_k\left(\frac{\lVert x^{(i)} -  \bar{x}_k \rVert} { \lVert \bar{x}_{0} -  \bar{x}_k \rVert} \right)^{-\gamma}}{\sum_k b_k\left(\frac{\lVert x^{(i)} -   \bar{x}_k \rVert} { \lVert \bar{x}_{0} -  \bar{x}_k \rVert} \right)^{-\gamma}},
\end{align}
where $b_k$ is a weight representing a prior fate bias for fate $k$ such that $\sum_k b_k = 1$, 
\begin{align}
    \bar{x}_0 = \frac{1}{|\mathcal{X}_0 |} \sum_{ x^{(i)} \in \mathcal{X}_0 }x^{(i)}
\end{align}
is the sample mean vector of all initial states, and 
\begin{align}
    \bar{x}_k = \frac{1}{ |\mathcal{X}_k |} \sum_{ x^{(i)} \in \mathcal{X}_k }x^{(i)}
\end{align} 
is the sample mean of the terminal states in fate $k$. The parameter $\gamma \in [0, \infty)$ controls how much the distances to the fates should be weighted. As $\gamma \rightarrow \infty$, the cell fate probabilities tend to the deterministic distributions corresponding to a single dominant fate, whereas $\gamma = 0$ corresponds to $r_k^{(i)} \equiv b_k$, for $i=1,\dots,N_{d_2}$. We base the prior weights, $b_k$, for each fate $k$, on the number of mature cell type annotations found in the corresponding fate (as annotated by Weinreb et al. \cite{weinreb2020lineage}). We scaled these to sum to unity and obtained the weights shown in \autoref{tab:prior_weights_for_idw_model}. All distances were computed on the space spanned by the 50 first PCA components.

The IDW model depends on $\gamma$. We chose $\gamma \in [1,50]$ so as to maximize the accuracy \eqref{eq:accuracy}. This yielded the optimal $\gamma^* = 20$ (see \autoref{fig:stationaryot_and_idw_optimization}b). 

\subsubsection{Inference of bipotent basophil and mast cell progenitors}

For the analysis of cells possessing both basophil and mast cell lineage-forming capacity (shown in  \autoref{fig:outlier-results}j-k), we selected cells with MultistageOT-inferred fate probabilities above 10\% for both the mast cell and basophil fates, and less than 1\% for all other fates.
\clearpage

\end{bibunit}

\clearpage

\begin{bibunit}
\begin{appendix}

\renewcommand{\thesection}{} 
\counterwithin{table}{section}
\renewcommand{\thetable}{S.\arabic{table}}

\section{Supplementary tables}

\begin{table}[ht]
\centering
\caption{Regularization parameter values. Note that $\epsilon_0$ is the initial value in the proximal point scheme, and $\epsilon'$ is the effective regularization parameter value, corresponding to having obtained the optimal transport plans with an uninformed prior matrix of all-ones in the entropy-regularization (see section \ref{sec:proximalpoint} in \hyperref[sec:supplementary_note]{Supplementary Note}). The parameter values for the synthetic data with and without outliers correspond to the results in \autoref{fig:synthetic-data-results}i and \autoref{fig:outlier-results}b, respectively.}
\label{tab:regularization-parameter-values}
\begin{tabular}{@{\hskip 0.5in}llll@{\hskip 0.5in}}
\toprule  
 \textbf{Data set} & \textbf{Initial} $\epsilon_0$ & \textbf{Effective}, $\epsilon'$\\ 
\midrule
Synthetic data: & &  \\ 
Without outliers & 0.015 & 0.0042 \\
With outliers & 0.038 & 0.0038 \\
\midrule
  Paul et al. (2015):  &  0.055& 0.0055 \\ 
\midrule
  Dahlin et al. (2018):  &  &  \\ 
  \hline
    Partition 1  &  0.013 &  0.0045 \\ 
  \hline
      Partition 2  & 0.013 & 0.0043 \\ 
  \hline
      Partition 3  & 0.013 &  0.0043\\ 
  \hline
      Partition 4  & 0.013 & 0.0045 \\ 
\midrule
  Weinreb et al. (2020):  &  &  \\ 
  \hline
      Partition 1  & 0.03 & 0.0057 \\ 
  \hline
      Partition 2  &  0.03& 0.0060 \\ 
  \hline
      Partition 3  & 0.03 &  0.0055\\ 
  \hline
      Partition 4  & 0.03 &  0.0060\\ 
  \hline
      Partition 5  & 0.03 &  0.0055\\ 
  \hline
      Partition 6  &0.03  &  0.0059\\ 
  \hline
      Partition 7  &0.03  & 0.0060 \\ 
  \hline
      Partition 8  &  0.03&  0.0057\\ 
  \hline
      Partition 9  &  0.03&  0.0060 \\ 
  \hline
      Partition 10  & 0.03 &  0.0060 \\ 
  \hline
      Partition 11  & 0.03 & 0.0066 \\ 
  \hline
      Partition 12  & 0.03 & 0.0058 \\ 
\bottomrule
\end{tabular}
\end{table}

\begin{table}[ht]
\centering
\caption{Weights, $b_k$ used in the Inverse Distance Weighed (IDW) model for estimating cell fate probabilities}.
\label{tab:prior_weights_for_idw_model}
\begin{tabular}{@{\hskip 0.5in}ll@{\hskip 0.5in}}
\toprule  
 Fate, $k$: & Weight, $b_k$: \\ 
\midrule
Neutrophil  \quad &  \quad0.3789 \\
Monocyte   \quad  &  \quad0.3268 \\
Baso       \quad  & \quad 0.1730 \\
Mast       \quad  & \quad 0.0440 \\
Meg        \quad  & \quad 0.0357 \\
Lymphoid   \quad  & \quad 0.0146 \\
Erythroid  \quad  & \quad 0.0117 \\
Eos        \quad  & \quad 0.0095 \\
Ccr7\_DC   \quad   &\quad  0.0041 \\
pDC        \quad  & \quad 0.0017
\end{tabular}
\end{table}

\begin{table}[ht]
\centering
\caption{Change (measured by total variation distance (TV)) when recomputing the cell fate probabilities with the extended MultistageOT model with auxiliary cell states on day 2 cells in a partition of the Weinreb et al. \cite{weinreb2020lineage} data set , for three different values for the fixed transport cost, $Q$.}
\label{tab:q-values-weinreb}
\begin{tabular}{@{\hskip 0.5in}cc@{\hskip 0.5in}}
Cost, $Q$ & Mean(max) TV \\ 
\bottomrule
1 & 0.047(0.25)  \\ 
2 & 0.0092(0.054)  \\ 
3.25 & 0.0091(0.054)  \\
\bottomrule
\end{tabular}
\end{table}

\end{appendix}

\clearpage

\begin{appendix}

\renewcommand{\thesection}{} 
\counterwithin{figure}{section}
\renewcommand{\thefigure}{S.\arabic{figure}}

\section{Supplementary figures}

\begin{figure}[H]
    \centering
    \includegraphics[width=\linewidth]{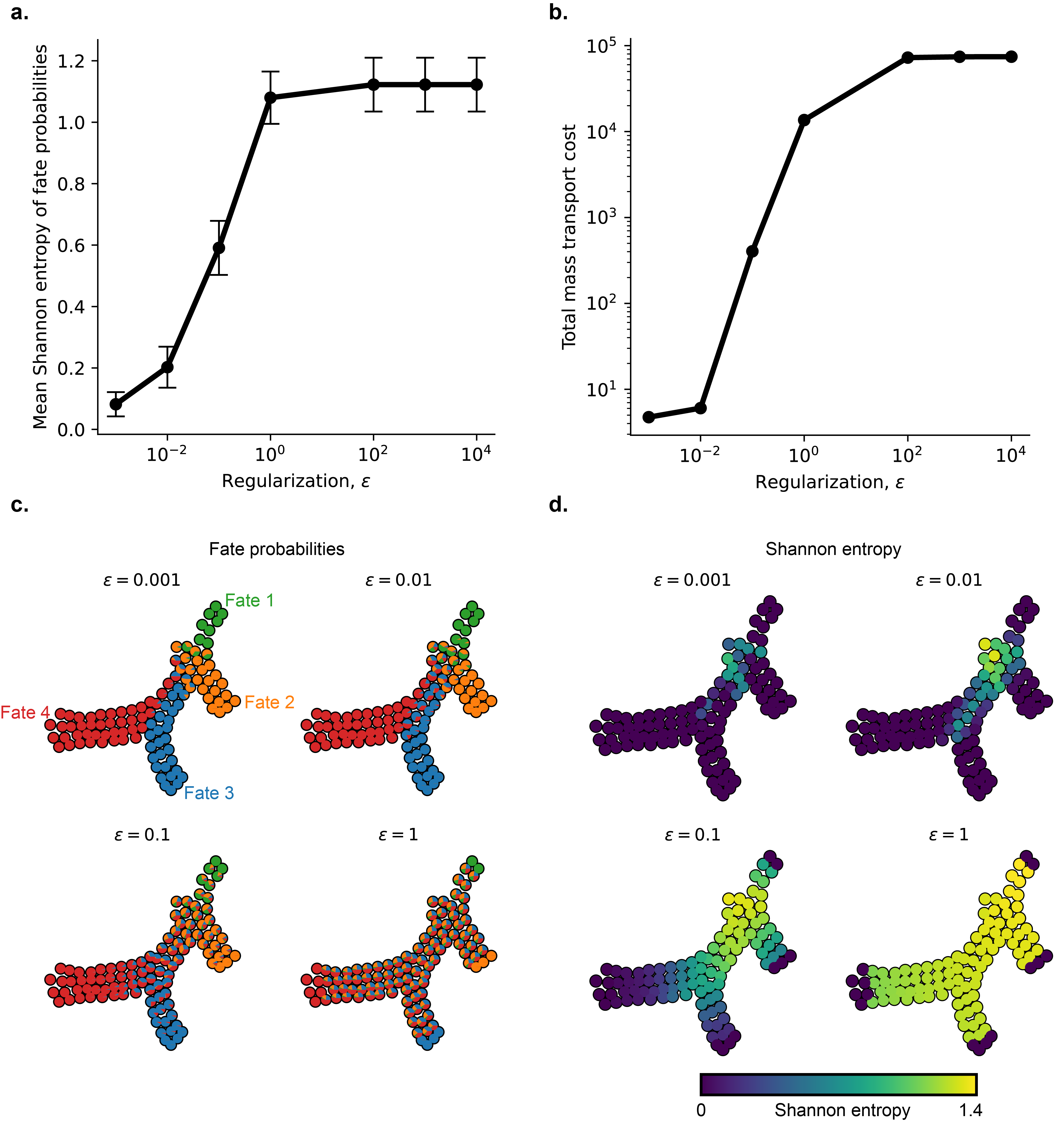}
    \caption{Overall Shannon entropy increases monotonically with increasing regularization parameter values. (a) Mean Shannon entropy of fate probabilities, averaged over all cells in the MultistageOT solution corresponding to the data set used in \autoref{fig:synthetic-data-results}, plotted against increasing values of the regularization parameter, $\epsilon$. Errorbars correspond to approximate 95\% confidence intervals. (b) Total cost of transports in the MultistageOT solution plotted against increasing values of the regularization parameter, $\epsilon$. (c) Cell fate probabilities, represented by pie charts, for different values of $\epsilon$. (d) Shannon entropy of the cell fate probabilities for each cell shown in (c), for different values of $\epsilon$.}
    \label{fig:entropy_regularization_effects}
\end{figure}

\begin{figure}[H]
\centering\includegraphics[width=\linewidth]{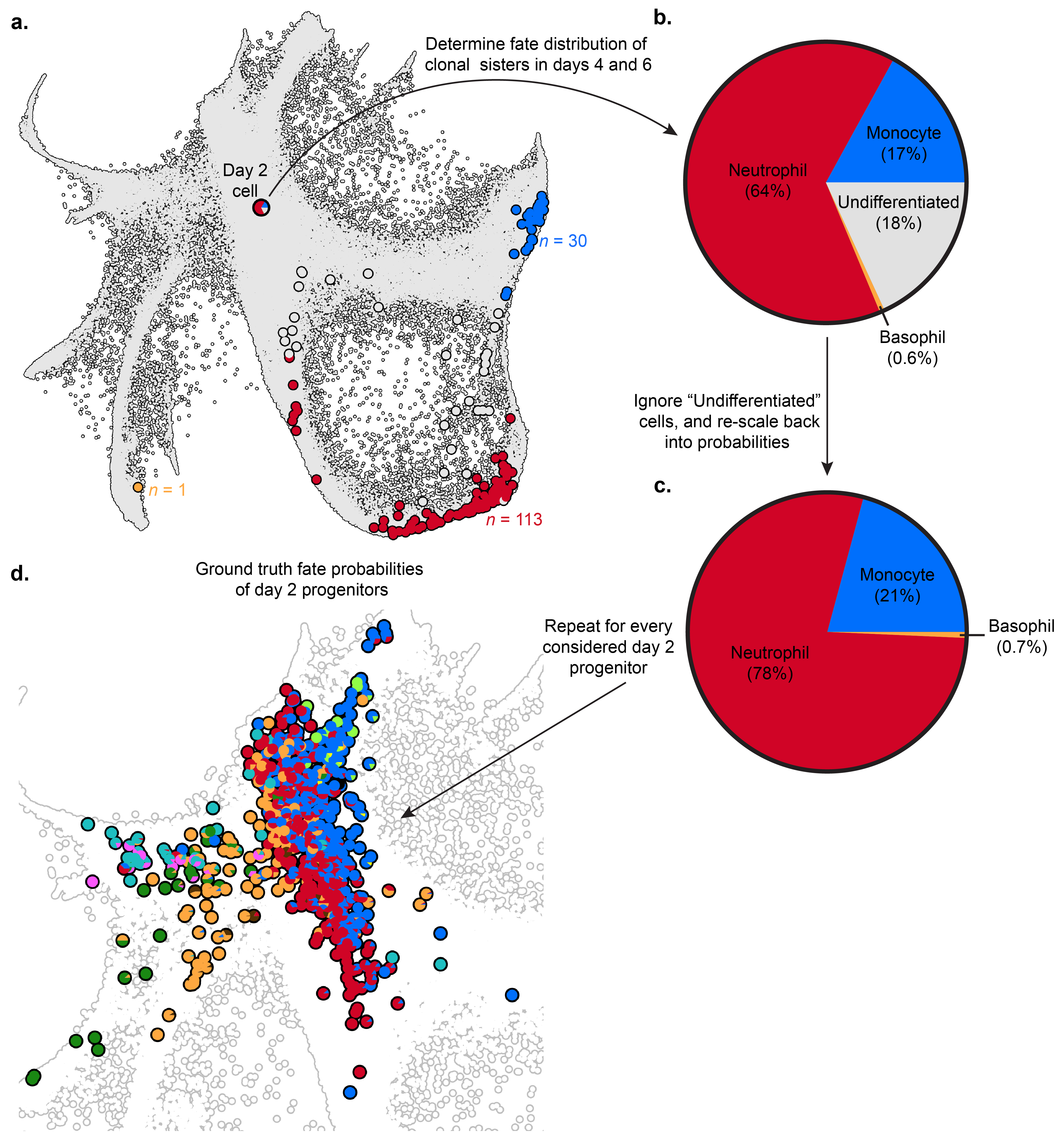}
   \caption{Establishing ``ground truth'' fate probabilities of day 2 progenitor cells using fate distribution of their clonal sisters in later time points (days 4 and 6). (a) Example of a day 2 progenitor cell and its clonal sisters in days 4 and 6. (b) The distribution of clonal sisters (basophil: 1, neutrophil: 113, monocyte: 30, undifferentiated: 32) can be normalized to induce a probability distribution over the different fates. (c) We ignore the ``Undifferentiated'' cells, and then rescale the relative frequencies of clonal fates so that they sum to unity. (d) We repeat this for every selected day 2 progenitor that matched our inclusion criteria (see \hyperref[sec:methods]{Methods}). Performance metrics were then obtained by comparing our model's estimated cell fate probabilities predictions to these empirical fate probabilities.}
\label{fig:quantifying_ground_truth}
\end{figure}

\begin{figure}[H]
    \centering
    \includegraphics[width=\linewidth]{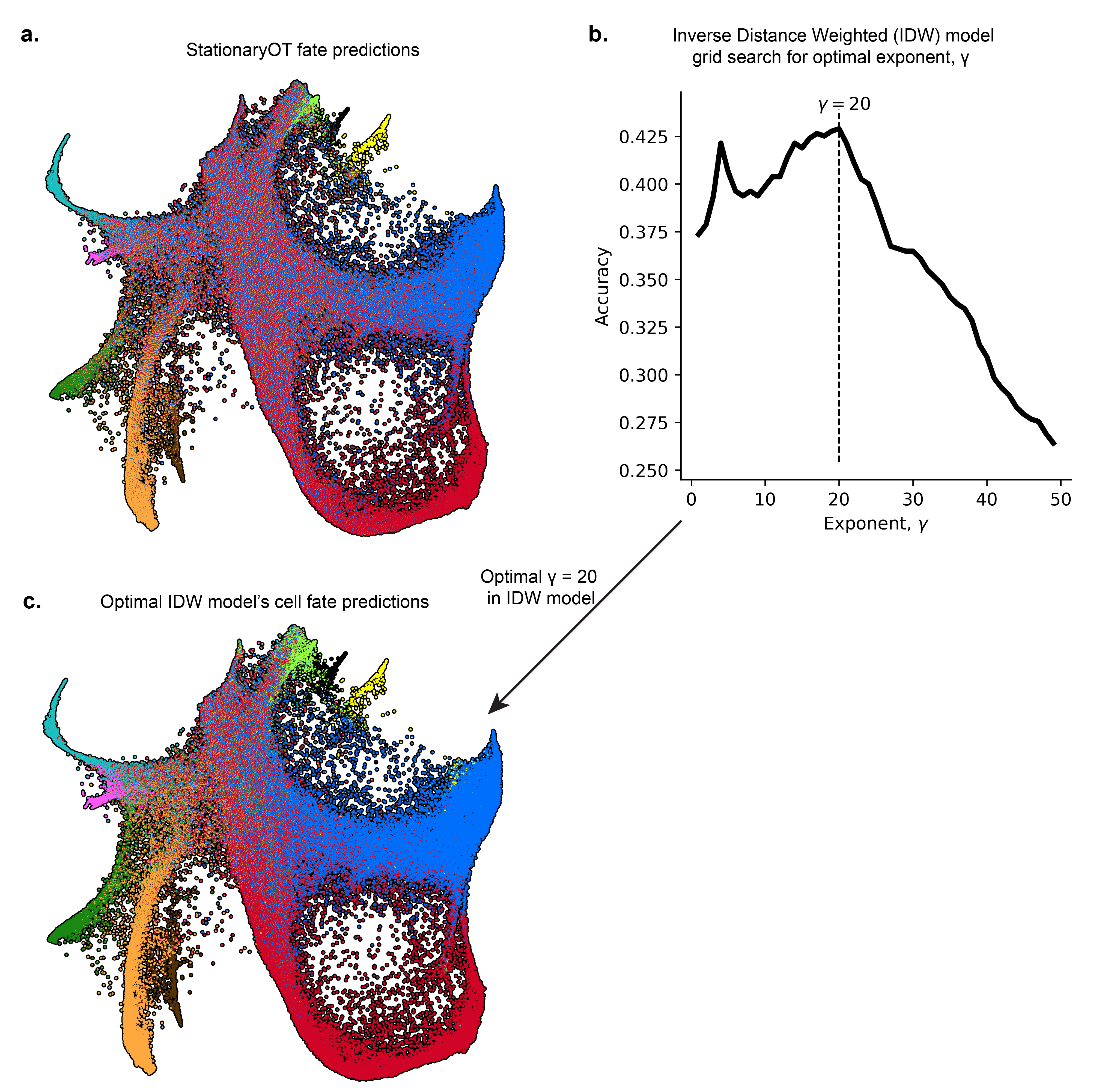}
   \caption{(a) StationaryOT's predictions of cell fate probabilities for 130884 cells in the Weinreb et al. \cite{weinreb2020lineage} data set. (b) The inverse distance weighted (IDW) model (see \hyperref[sec:methods]{Methods}). We used the ground truth fates to optimize the parameter with respect to ``dominant fate accuracy''. (b) The cell fate probabilities corresponding to IDW's optimal $\gamma$ parameter. (c) IDW model's predictions of cell fate probabilities for 130884 cells in the Weinreb et al. data set. }
\label{fig:stationaryot_and_idw_optimization}
\end{figure}

\begin{figure}[H]
    \centering\includegraphics[width=1\linewidth]{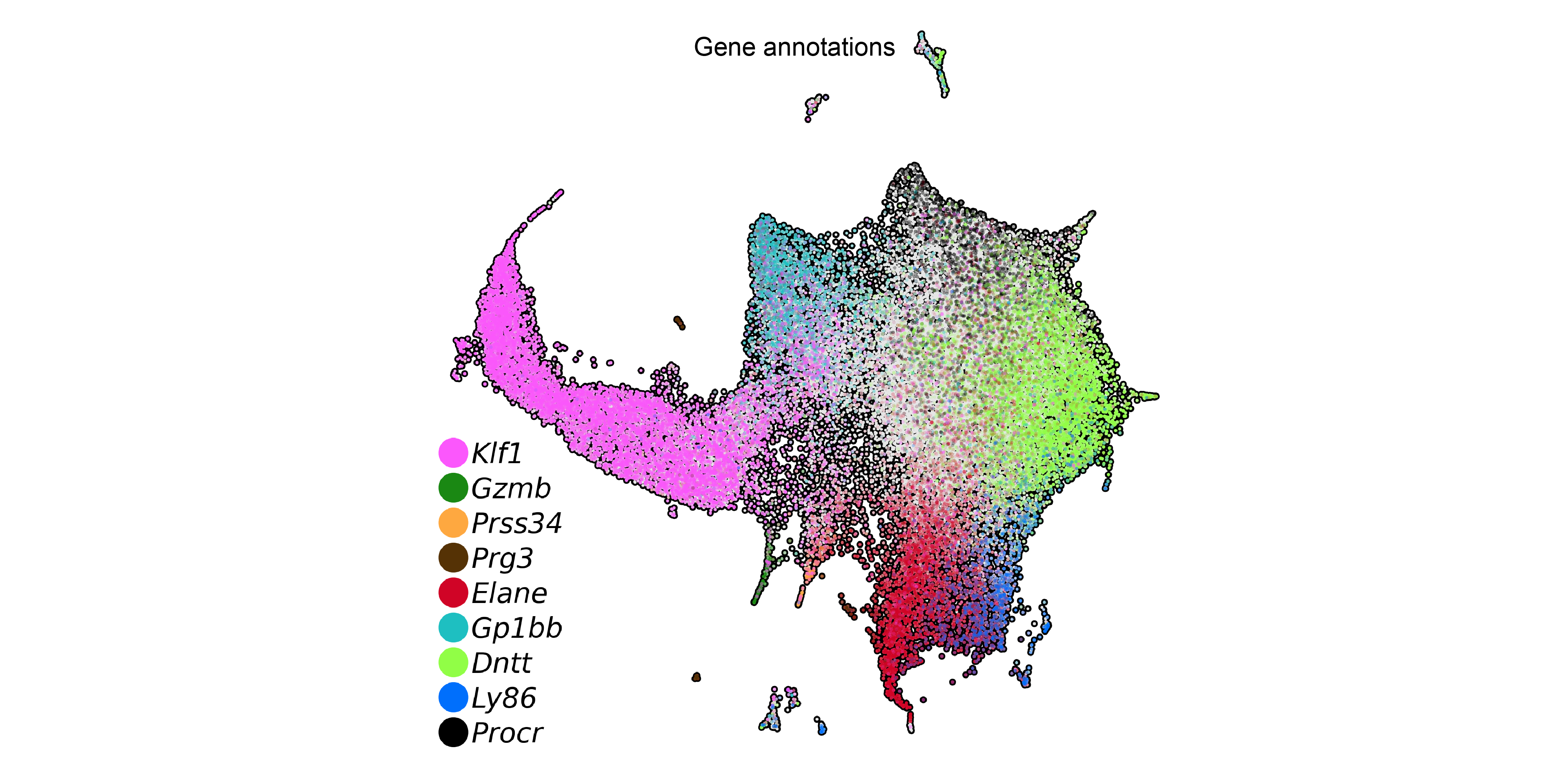} 
   \caption{Layered gene expression (log($x$+1)-transformed) plot showing the orientation of different cell lineages and hematopoietic stem cells: erythroid (\textit{Klf1}), mast cell (\textit{Gzmb}), basophil (\textit{Prss34}), eosinophil (\textit{Prg3}), neutrophil (\textit{Elane}), megakaryocyte (\textit{Gp1bb}), lymphoid (\textit{Dntt}), monocyte (\textit{Ly86}) and hematopoietic stem cells (\textit{Procr}). Gene markers from Dahlin et al. \cite{dahlin2018single}. } 
\label{fig:dahlin2018_annotations}
\end{figure}

\begin{figure}[H]
    \centering
    \includegraphics[width=\linewidth]{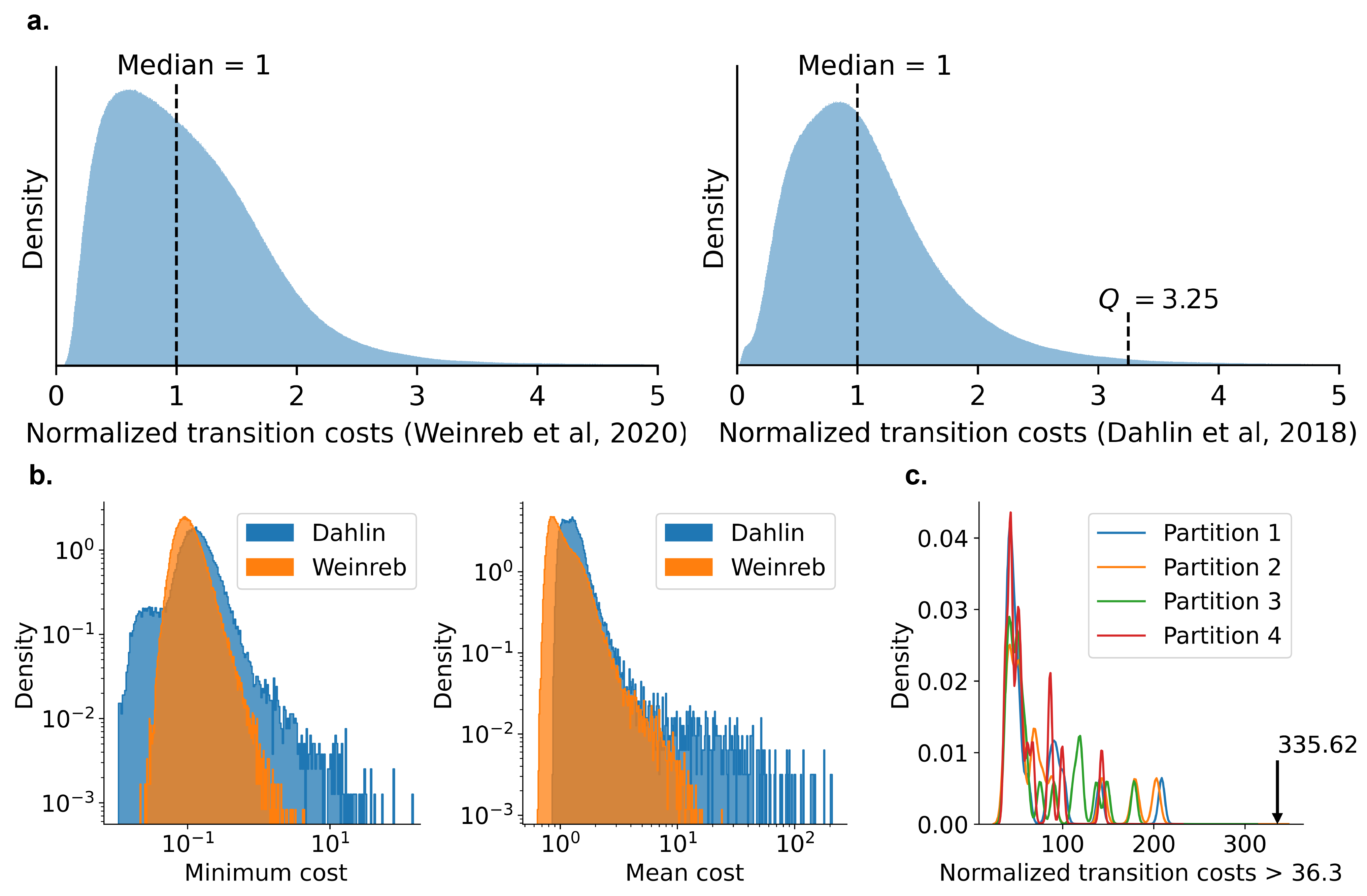}
    \caption{(a) Representative distributions of transition costs (squared Euclidean distance) in the range [0,5] in Weinreb et al. \cite{weinreb2020lineage}  (left) versus Dahlin et al. \cite{dahlin2018single} (right). The cost $Q = 3.25$, of transporting unit mass to the auxiliary states in the extended MultistageOT model (see \hyperref[sec:methods]{Methods}) is marked in the right plot. (b) Minimum (left) and mean (right) transition costs in the Dahlin et al. versus Weinreb et al. data sets. Note the log-scales. (c) Density plots of transition costs for each partition of the Dahlin et al. data (\hyperref[sec:methods]{Methods}), considering only large costs above the maximum cost in the Weinreb et al. data set, which was 36.3. For reference, the arrow marks the highest transition cost in the Dahlin et al. data set. } 
    \label{fig:cost_distributions}
\end{figure}

\begin{figure}[H]
    \centering
    \includegraphics[width=\linewidth]{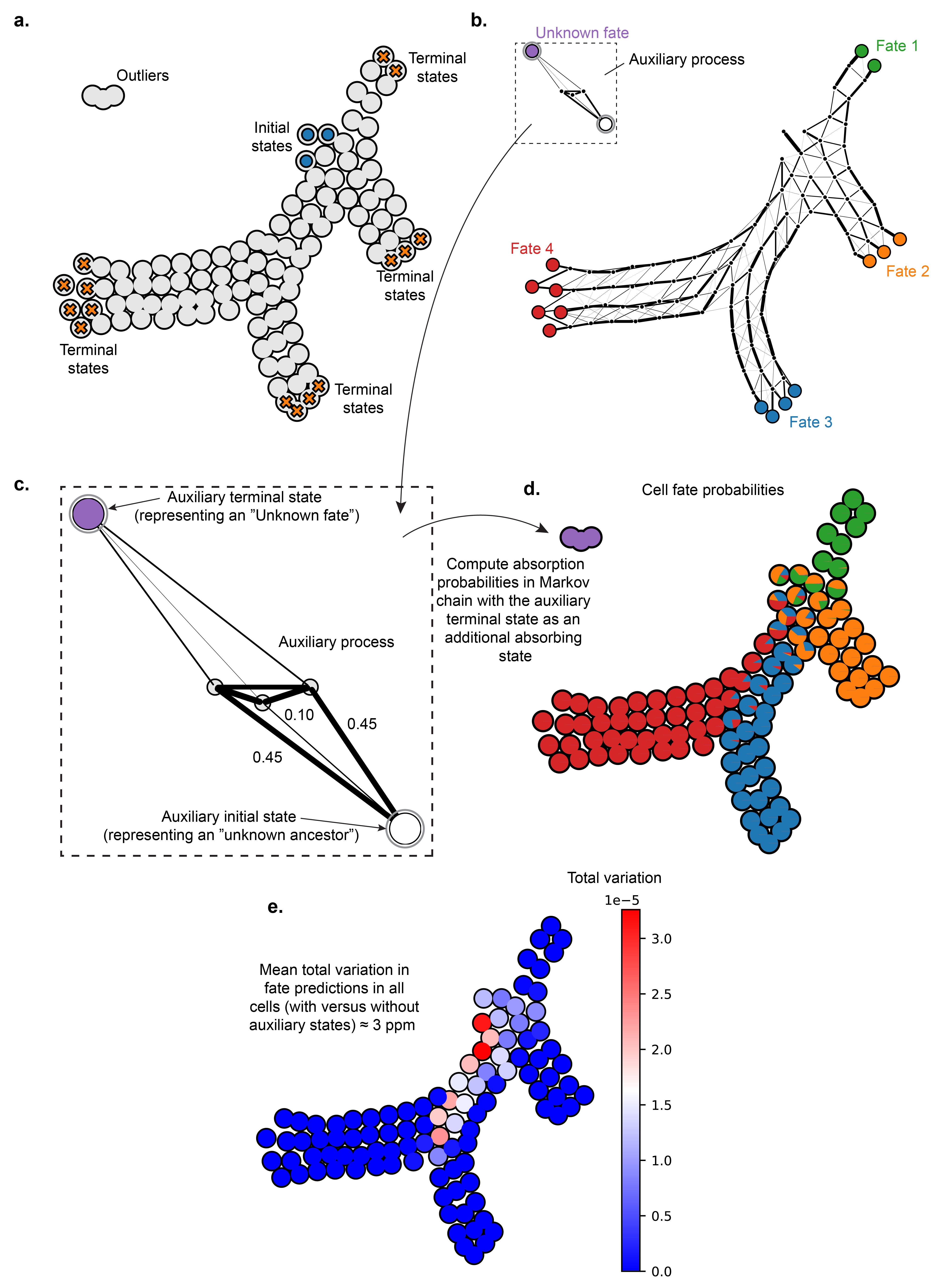}
    \caption{Extended MultistageOT model with auxiliary states applied to the synthetic data set in \autoref{fig:synthetic-data-results}, \makebox[\linewidth][s]{with the addition of outliers. (a) A cluster of three outlier cells was added in the top left corner of} } 
    \label{fig:extension_synthetic_data}
\end{figure}
\begin{figure}
    \ContinuedFloat 
    \caption*{the data  in \autoref{fig:synthetic-data-results}. (b) We solved the MultistageOT problem with the added auxiliary cell states. The plot visualizes the Markov chain transition probabilities between cells, based on the optimal transport  solution (\hyperref[sec:methods]{Methods}). (c) In this scenario, the auxiliary states help model the development of the outlier cells as separate process: No connections were formed between the outliers and the cells in the main process. The auxiliary intermediate cell state was not utilized by any cell in the optimal transport solution and was therefore omitted in the visualization. Note: The auxiliary terminal and initial state (purple and white respectively) are here only included to visualize the transition probabilities; in general they do not possess geometric coordinates.   (d) Including the auxiliary terminal state as an additional absorbing state in the Markov chain allows estimating the probability of each cell ending up in an ``unknown fate'', represented by the auxiliary terminal cell state. The probabilities are represented by pie charts, and purple wedges corresponds to this ``Unknown fate''. (e) The total variation distance for each non-outlier cell, quantifying the difference between the cell fate predictions shown in \autoref{fig:synthetic-data-results} (using the original MultistageOT formulation) and the cell fate predictions made with the extended MultistageOT model shown in subpanel (d).} 
\end{figure}

\begin{figure}[H]
    \centering
    \includegraphics[width=\linewidth]{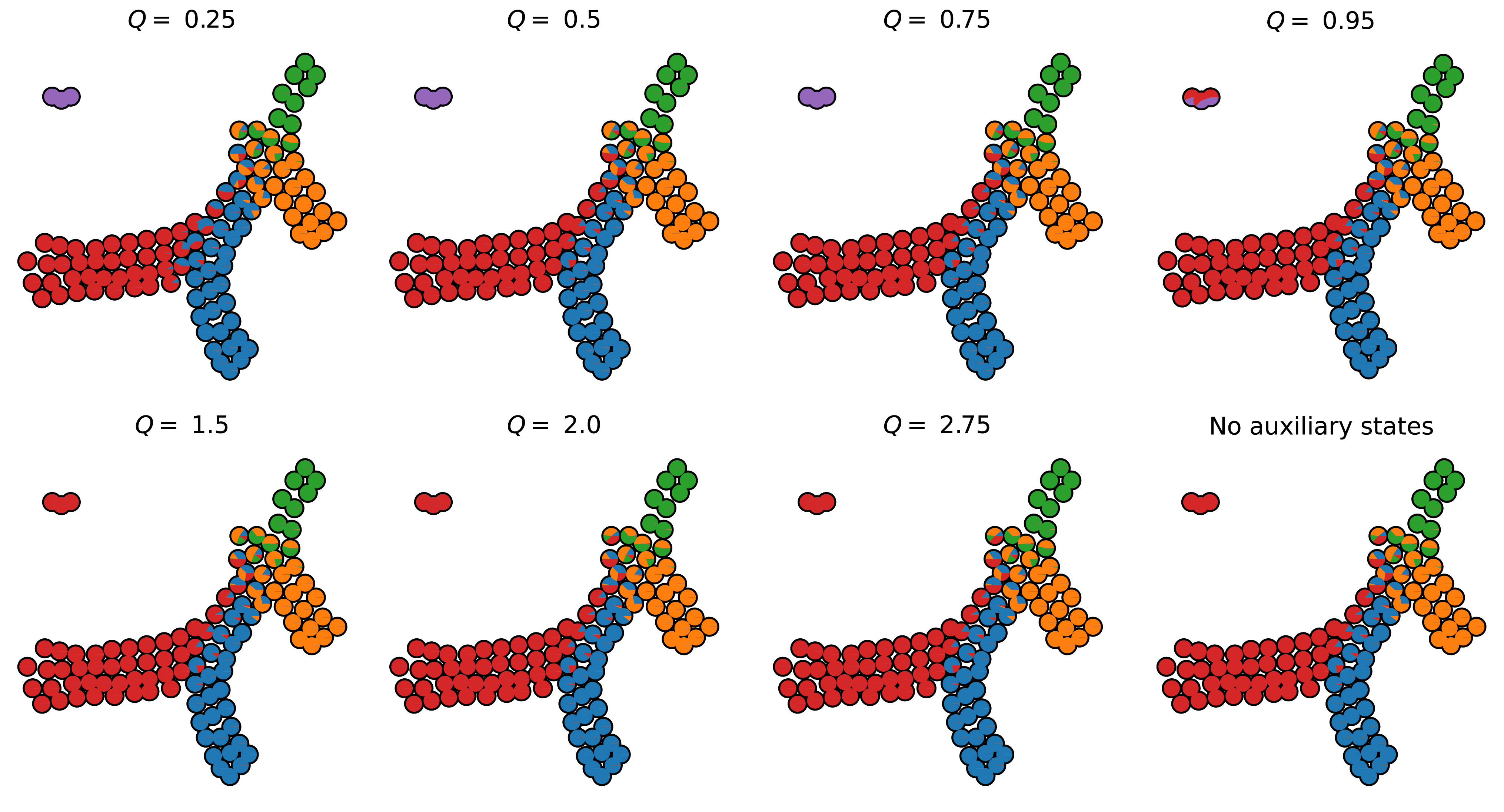} 
    \caption{Effects of varying the fixed cost $Q$ of transitioning to the auxiliary cell states on the estimated cell fate probabilities. For reference, we include the cell fate probabilities when no auxiliary states are used (bottom rightmost plot). Low $Q$-values favors more utilization of the auxiliary states, reflected in the strong commitment of the outliers to terminate in the auxiliary terminal state (purple color) for $Q \in \{0.25,0.5,0.75\}$. Conversely, higher $Q$-values means less involvement of the auxiliary states. For the results in \autoref{fig:outlier-results}b we used $Q = 0.9$.} 
\label{fig:extension_synthetic_data_outliers_different_q_values}
\end{figure}

\begin{figure}[H]
    \centering
    \includegraphics[width=\linewidth]{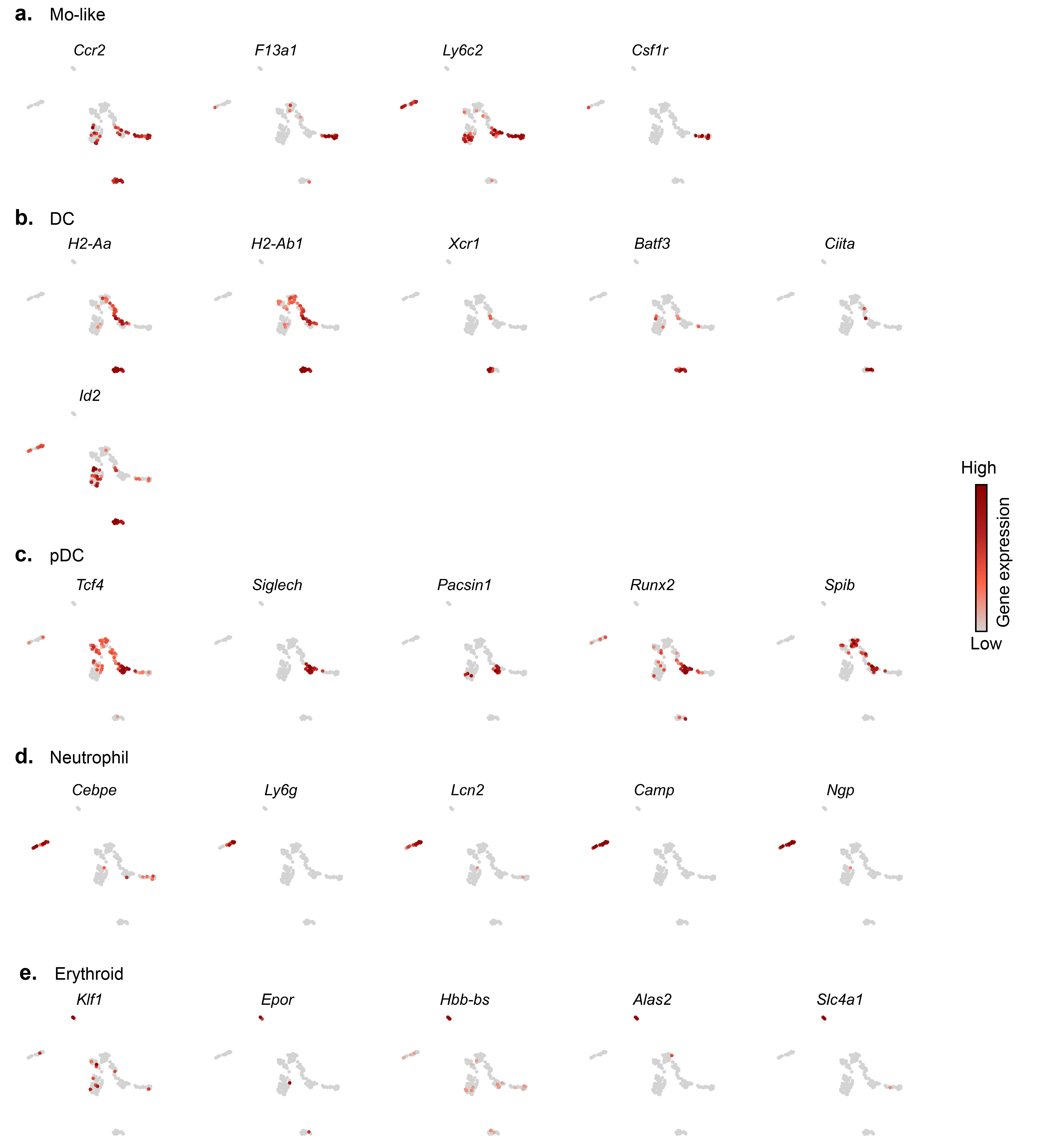} 
   \caption{The expression levels (log($x$+1)-transformed) of lineage-associated gene markers used in annotating the candidate outliers identified by MultistageOT in the data set from Dahlin et al. \cite{dahlin2018single}. UMAP embedding featuring only the candidate outliers. (a) Monocyte-like (Mo-like) markers from Mildner et al. \cite{mildner2017genomic} and Konturek-Ciesla et al. \cite{konturek2023temporal}. (b) DC (dendritic cell) markers from Lukowski et al. \cite{lukowski2021absence}. (c) pDC (plasmocytoid DC) markers from Lukowski et al. \cite{lukowski2021absence}. (d) Neutrophil markers from Konturek-Ciesla et al. \cite{konturek2023temporal} and Grieshaber-Bouyer et al. \cite{grieshaber2021neutrotime}. (e) Erythroid markers from An et al. \cite{an2014global} and Dzierzak \& Philipsen \cite{dzierzak2013erythropoiesis}. } 
\label{fig:dahlin2018_gene_annotations_1}
\end{figure}

\begin{figure}[H]
    \centering
    \includegraphics[width=\linewidth]{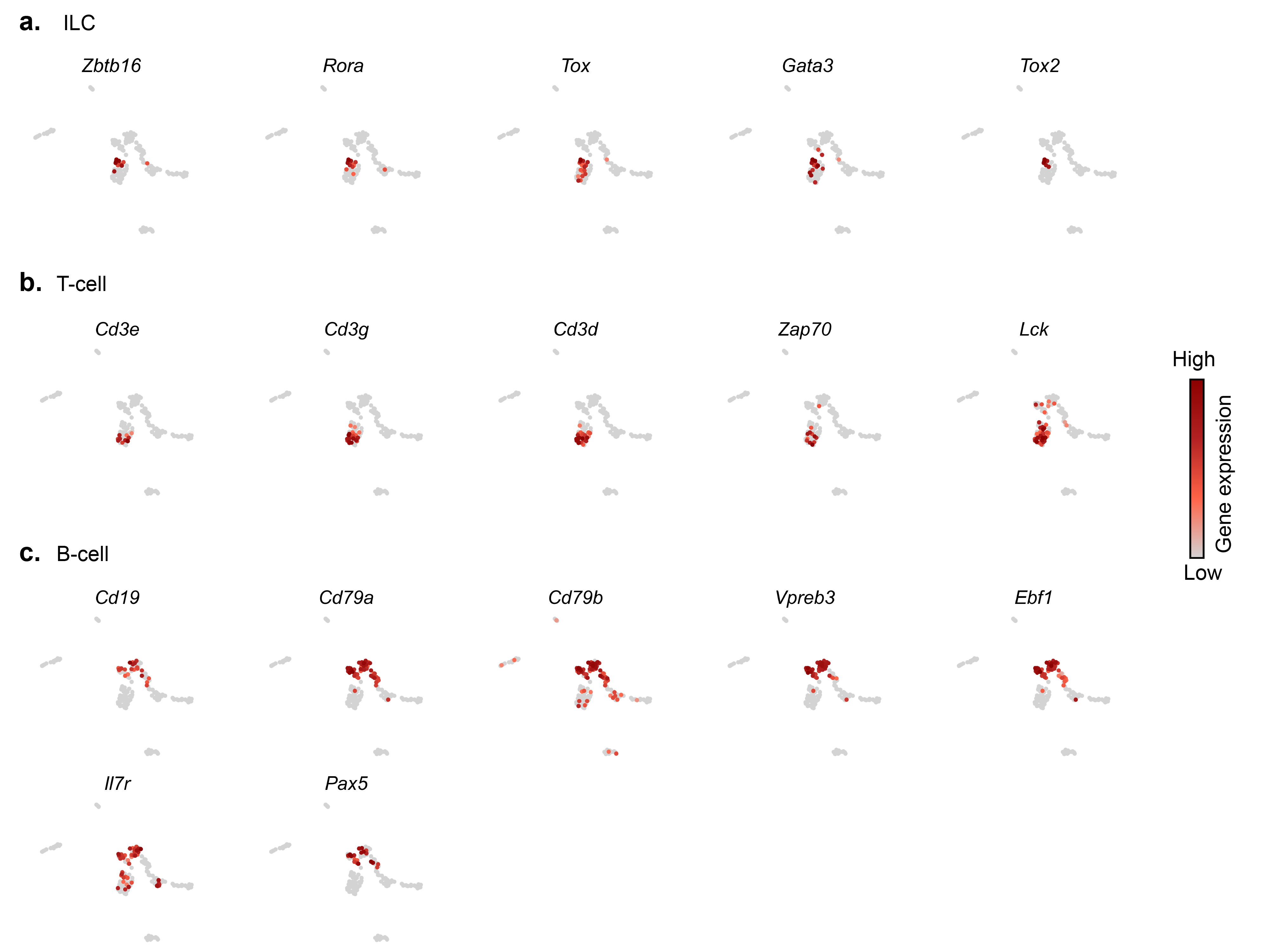} 
   \caption{The expression levels (log($x$+1)-transformed) of lineage-associated gene markers used in annotating the candidate outliers identified by MultistageOT in the data set from Dahlin et al. \cite{dahlin2018single}. UMAP embedding featuring only the candidate outliers. (a) ILC (innate lymphoid cell) markers from Seillet et al. \cite{seillet2016deciphering}. (b) T-cell markers from Rothenberg et al. \cite{rothenberg2008launching} and Rothenberg \cite{rothenberg2014transcriptional}. (c) B-cell markers from Rothenberg \cite{rothenberg2014transcriptional}.} 
\label{fig:dahlin2018_gene_annotations_2}
\end{figure}

\begin{figure}[H]
    \centering
    \includegraphics[width=\linewidth]{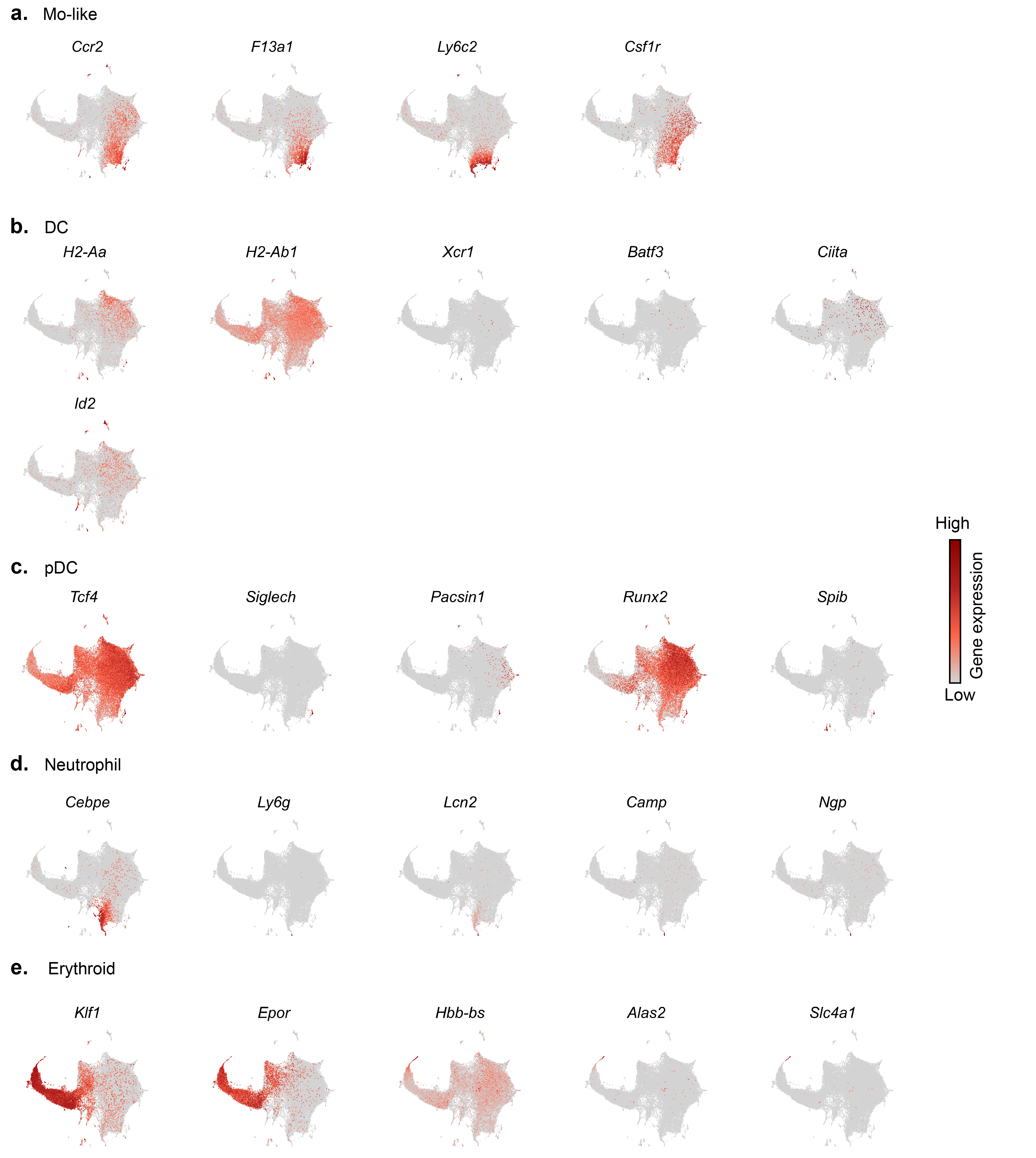}    \caption{The expression levels (log($x$+1)-transformed) of lineage-associated gene markers used in annotating the candidate outliers identified by MultistageOT in the data set from Dahlin et al. \cite{dahlin2018single}. (a) Monocyte-like (Mo-like) markers from Mildner et al. \cite{mildner2017genomic} and Konturek-Ciesla et al. \cite{konturek2023temporal}. (b) DC (dendritic cell) markers from Lukowski et al. \cite{lukowski2021absence}. (c) pDC (plasmocytoid DC) markers from Lukowski et al. \cite{lukowski2021absence}. (d) Neutrophil markers from Konturek-Ciesla et al. \cite{konturek2023temporal} and Grieshaber-Bouyer et al. \cite{grieshaber2021neutrotime}. (e) Erythroid markers from An et al. \cite{an2014global} and Dzierzak \& Philipsen \cite{dzierzak2013erythropoiesis}. } 
\label{fig:dahlin2018_gene_annotations_full_umap_1}
\end{figure}

\begin{figure}[H]
    \centering
    \includegraphics[width=\linewidth]{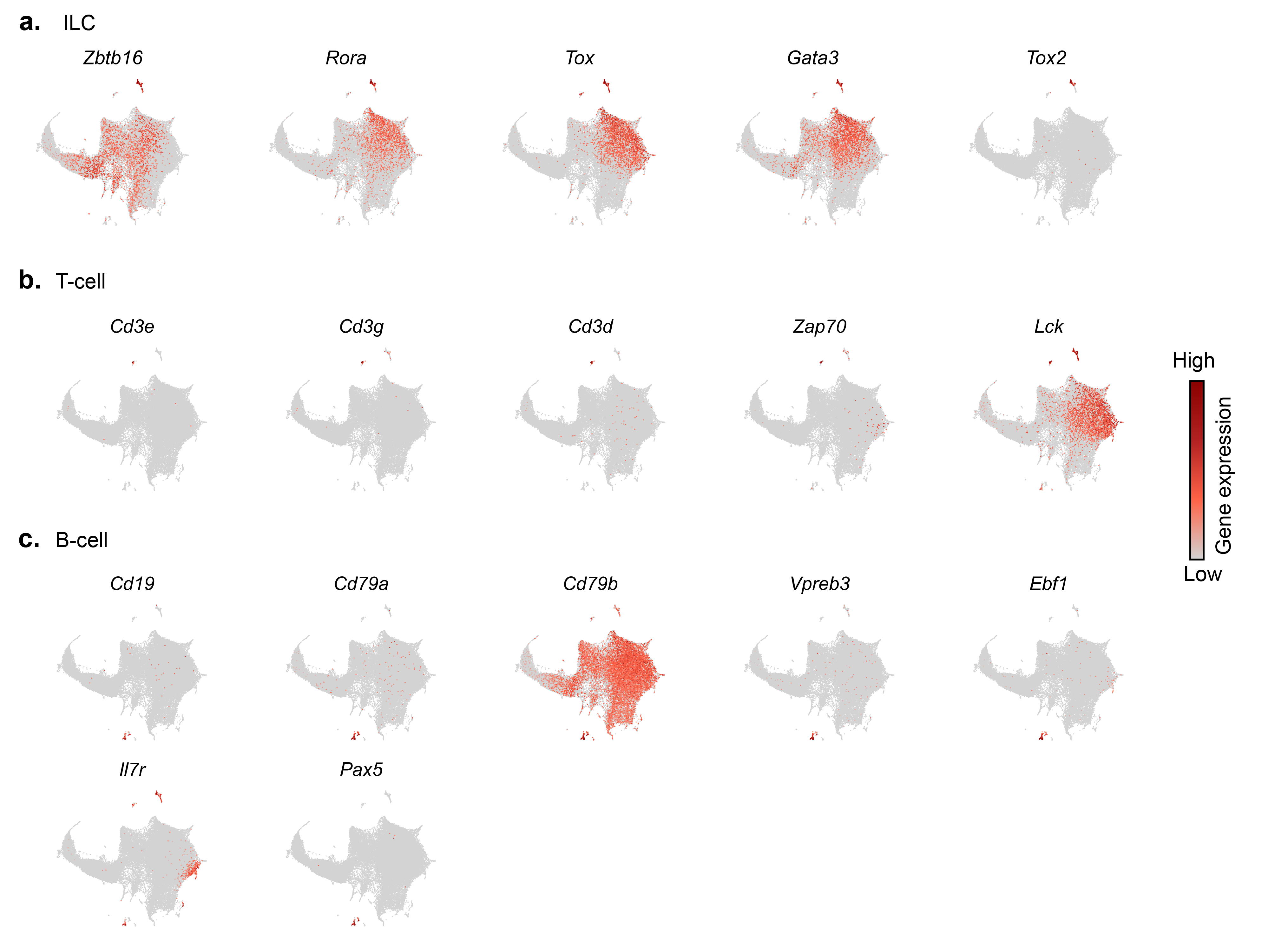} 
   \caption{The expression levels (log($x$+1)-transformed) of lineage-associated gene markers used in annotating the candidate outliers identified by MultistageOT in the data set from Dahlin et al. \cite{dahlin2018single}. (a) ILC (innate lymphoid cell) markers from Seillet et al. \cite{seillet2016deciphering}. (b) T-cell markers from Rothenberg et al. \cite{rothenberg2008launching} and Rothenberg \cite{rothenberg2014transcriptional}. (c) B-cell markers from Rothenberg \cite{rothenberg2014transcriptional}. } 
\label{fig:dahlin2018_gene_annotations_full_umap_2}
\end{figure}

\begin{figure}[H]
    \centering
    \includegraphics[width=\linewidth]{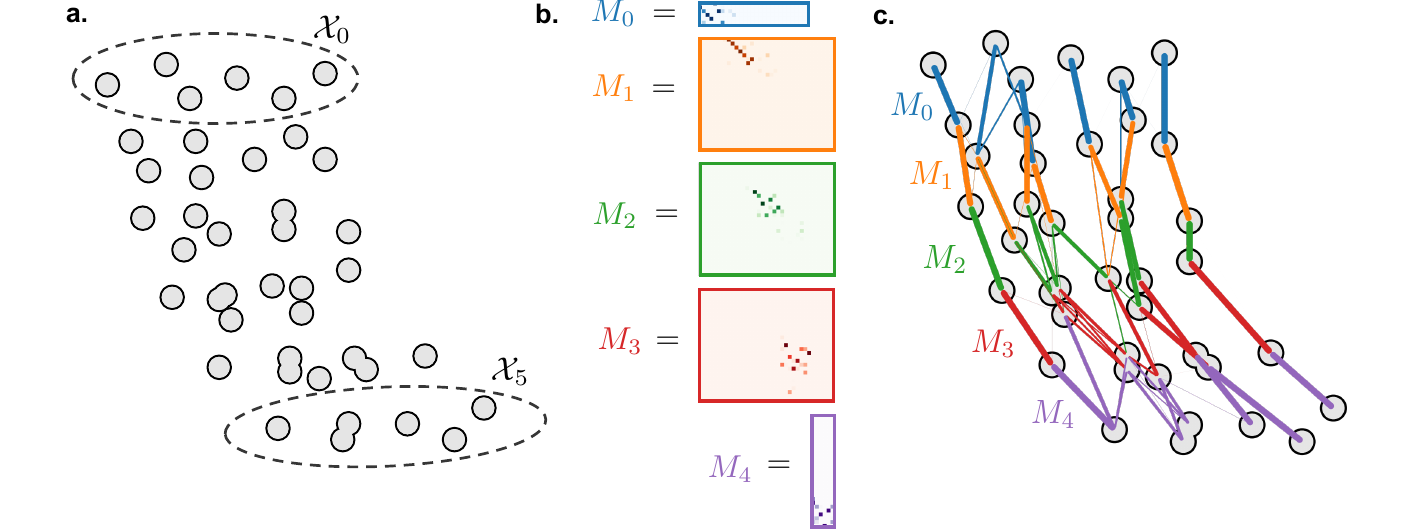}
    \caption{Multistage problem  \eqref{eq:regularizedprimalopt} solved for a two-dimensional toy data set. (a) Initial and terminal states are specified as input to the model. (b) Problem \eqref{eq:regularizedprimalopt} is solved with $\epsilon = 0.005$ and $T=5$. The color intensity of a matrix element $(m_t)_{ij}$ in the transport plan matrices $M_t$, $t = 0,1,2,3,4$ is determined by the amount of mass sent from cell $i$ to cell $j$ in time step $t$. (c) Visualization of the optimal transport plans $M_0 = \tilde{M}_0$, $M_t = \begin{bmatrix}
        \tilde{M}_t & \hat{M}_t
    \end{bmatrix}$, $t=0,1,2,3$,  $M_4 = \hat{M}_{4}$. The width of a colored line connecting two cells is proportional to the amount of mass sent between the cells in the corresponding time step: $t = 0$ (blue);  $t = 1$ (orange);  $t = 2$ (green) ;  $t = 3$ (red);  $t = 4$ (purple). These mass transport plans can be interpreted in terms of transitions likelihoods. }
    \label{fig:visualizationoftransportplans}
\end{figure}

\begin{figure}[H]
    \centering

\begin{tikzpicture}[
    ->,
    >={Stealth[round]},
    node distance=2cm and 4cm,
    every node/.style={circle, draw, minimum size=1cm, inner sep=0pt}
]

\node (A) {$\mathcal{X}_0$};
\node (B) [right of=A] {$\mathcal{X}$};
\node (C) [right of=B] {$\mathcal{X}_F$};
\node (D)[blue] [below of=A] {$\mathcal{A}_0$};
\node (E)[blue] [right of=D] {$\mathcal{A}$};
\node (F)[blue] [right of=E] {$\mathcal{A}_F$};

\draw (A) edge (B); 

\draw (A)[blue, dashed, ->] edge (F);
\draw (B) edge (C);
\draw (B)[blue, dashed, ->] edge[bend left] (F);
\draw (B) edge[loop above] (B);
\draw (D)[blue, dashed, ->] edge (E);
\draw (E)[blue, dashed, ->] edge (F);  
\draw (B)[blue, dashed, ->] edge (E);
\draw (E)[blue, dashed, ->] edge (B);
\draw (D)[blue, dashed, ->] edge[bend right] (B); 
\draw (D)[blue, dashed, ->] edge[bend right=60] (F);
\end{tikzpicture}

    \caption{Graph representation of our multistage optimal transport model of cell differentiation, together with the extension with added auxiliary states, $\mathcal{A}_0$ (initial),  $\mathcal{A}$ (intermediate), $\mathcal{A}_F$ (terminal). The black graph represents the MultistageOT model without the extension (i.e., a collapsed version of the graph in (\hyperref[box:ot]{Box 2}, Fig. 3) and the  blue graph with dashed arrows represent the extension with auxiliary states (\hyperref[sec:supplementary_note]{Supplementary Note}).  }     \label{fig:network_representation_extension}
\end{figure}
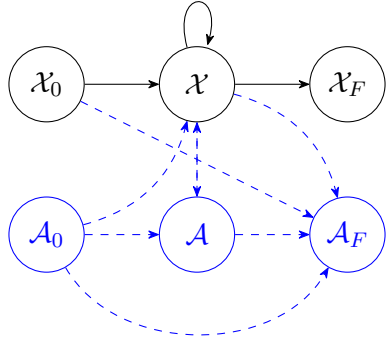
\end{appendix}

\clearpage

\end{bibunit}

\newpage
\pagenumbering{gobble}

\setcounter{section}{0}
\setcounter{page}{0}
\renewcommand{\thesection}{\arabic{section}}
\counterwithin{equation}{section}
\counterwithin{figure}{section}
\counterwithin{table}{section}

\begin{bibunit}
\section*{Supplementary Note}
\label{sec:supplementary_note}

\vspace{1cm}
\section*{\centering \LARGE Mathematical formulation of MultistageOT}
\begin{center}
   \large Magnus Tronstad$^{1*}$, Johan Karlsson$^{2\dag*}$, and Joakim S. Dahlin$^{1\dag*}$
\end{center}  
\begin{center} 
    $^1$Department of Medicine Solna, Karolinska Institutet, and Center for
Molecular Medicine, Karolinska University Hospital, Stockholm,
Sweden.\\ 
    $^2$Department of Mathematics, KTH Royal Institute of Technology,
Stockholm, Sweden. \\
\vspace{0.5cm}
$^\dag$These authors jointly supervised the work. \\
$^*$Corresponding authors. Emails: magnus.tronstad@ki.se, johan.karlsson@math.kth.se, and joakim.dahlin@ki.se
\end{center}

\pagenumbering{arabic}

\section{Notation}
Whenever the exponential function is applied on vectors or matrices, it is assumed that it works component-wise. Also we use the following notation:

\vspace{0.5cm}
\begin{tabular}{p{1.5cm}p{9cm}p{1cm}}
$\odot$ & Component-wise multiplication (Hadamard product). \\
$\oslash$ & Component-wise division.  \\
$\langle\cdot, \cdot\rangle$ & Frobenius inner product.\\
$\delta_x$  & Dirac delta point measure, corresponding to a unit mass located in $x$. Formally: 

\begin{equation*}\delta_x(A) = \begin{cases}
    0, \quad x \notin A\\
    1, \quad x \in A,
\end{cases}
\end{equation*}
for any measurable set $A$.\\
$(x_i)_{i=1}^{n}$ & Ordered set of elements $x_i$, $i= 1,\dots,n$.\\
$\mR_+$ &The set $[0,\infty)$. \\
$\overline\mR_+$ &The nonnegative real line, including infinity $\mR_+\cup \{\infty\}$. \\
$\ones{n}$ & A vector of all-ones of length $n$.
\end{tabular}

\section{Background}

Optimal transport solves the problem of moving one distribution into another in an optimal way \cite{villani2021topics}. This general framework can be traced back to the 18th century \cite{monge1781memoire}, and traditional areas of applications include economics and logistics  \cite{hitchcock1941distribution, kantorovich1942translocation, kantorovich1960mathematical}. Extension to a broader range of applications was for a long time hampered, as the computational cost associated with numerically computing an optimal transport solution placed severe limits on problem size. Notwithstanding, following a computational breakthrough due to Cuturi \cite{cuturi2013sinkhorn}, it is now possible to address and solve problems that were historically too large for standard numerical methods. This is accomplished by adding an entropic regularization term to the classical optimal transport formulation; Sinkhorn iterations \cite{sinkhorn1967concerning} are then leveraged to efficiently obtain a near-optimal solution, even for very large problem sizes. 
Since then, the interest for optimal transport has increased rapidly and it has been applied to a wide range of problems, including image processing \cite{rabin2014adaptive, ferradans2014regularized}, tracking and sensor fusion \cite{elvander2020multi}, ensemble control problems \cite{haasler2021control}, fluid mechanics \cite{benamou2000computational}, the Schrödinger bridge problem \cite{chen2016relation} as well as flow problems \cite{haasler2023scalable, mascherpa2023estimating} and applications to biological systems \cite{sandhu2015graph, farooq2019network}.

Recently, optimal transport theory has successfully been applied in the context of cellular development \cite{schiebinger2019optimal, yang2020, forrow2021lineageot, zhang2021optimal, lamoline2024gene}. Many of these studies have been designed around time-series matching. In this setting, one is faced with multiple snapshots of single-cell sequencing data: Each snapshot comprises a large collection of individual measurements of cell states from a specific time point, and the goal is to match cells from consecutive time points in a biologically meaningful way. The cell states can be represented as point masses in state space, and a cost can be associated with the transportation of a unit mass between any pair of cells in two consecutive time points. Optimal transport then provides a natural mathematical framework for finding  optimal (least costly) assignments of cell states between time points. In 2019, Schiebinger et al. \cite{schiebinger2019optimal} introduced Waddington-OT which uses optimal transport to find the optimal matching of single cells in pairs of consecutive time points. In Forrow et al.  \cite{forrow2021lineageot}, this framework is combined with lineage information to improve the reconstruction of developmental trajectories. Yang et al. \cite{yang2020} extend the Waddington-OT approach by computing the transport costs in a latent space generated by an autoencoder. 

Time-series data is not the only means by which one can study gene expression dynamics of cellular populations. When the developmental process originates in a population of continuously self-replenishing cell states (such as in blood cell development), a single sequencing snapshot can be expected to represent cells from a range of different stages of maturation, from blood stem cells to late stage progenitors. Thus, a second approach to inferring developmental trajectories relies on analyzing a single snapshot of single-cell sequencing data. Zhang et al. \cite{zhang2021optimal}, presented StationaryOT, showing that classical bimarginal discrete optimal transport can be used for such within-snapshot trajectory inference when cell growth rate estimates are available.  

In this work, we propose MultistageOT, a novel modeling framework for trajectory inference within a snapshot. MultistageOT extends bimarginal optimal transport so that transport occurs across multiple (more than two) marginals. This allows cell differentiation in a snapshot to be modeled as a transition process on gene expression over a number of transport steps, each step corresponding to an intermediate differentiation stage. This introduces a temporal axis to our model, which provides a natural way to order the cells in terms of maturity in the absence of time-resolved data. 

This note contains information presented in the main article but expands on it for a complete mathematical exposition. In the remainder of this note, we briefly review some preliminaries on classical discrete optimal transport before presenting a mathematical formulation of our MultistageOT model of cell differentiation. We state the entropy-regularized optimization problem solved in MultistageOT, and then derive a generalized Sinkhorn-Knopp algorithm for efficient computation of the optimal solution. 

\subsection{Optimal transport}
\label{sec:optimaltransport}
MultistageOT is based on an extension of discrete optimal transport. In the classical discrete optimal transport setting, two discrete mass distributions are given:
\[
\sum_{i=1}^{n_1} \mu^{(i)}_1\delta_{x^{(i)}_1}, \quad \sum_{j=1}^{n_2} \mu^{(j)}_2\delta_{x^{(j)}_2}, 
\]
where $\mathcal{X}_1=(x_1^{(i)})_{i=1}^{n_1}$ and $\mathcal{X}_2=(x_2^{(j)})_{j=1}^{n_2}$ correspond to the points of support of the respective distributions, and where $\mu_1^{(i)}$ and $\mu_2^{(j)}$ represent the mass in the points $x_1^{(i)}$ and  $x_2^{(j)}\!,$ respectively. 
A transport plan is a matrix, $M=[m_{ij}]_{i=1, j=1}^{n_1, n_2}$, whose components $m_{ij}$ denote the amount of mass transported from  $x^{(i)}_1 \in \mathcal{X}_1$ to $x^{(j)}_2 \in \mathcal{X}_2$. We say that the transport plan is feasible if the total amounts transported are consistent with the initial and final distributions, i.e., if $M\ones{n_2}=\mu_1$ and $M^T\ones{n_1}=\mu_2$, where $\mu_1 = [\mu_1^{(i)}]_{i=1}^{n_1}$ and $\mu_2 = [\mu_2^{(j)}]_{j=1}^{n_2}$ are the vectors representing the mass distributions.
Next, let $c_{ij}=c\left(x^{(i)}_1, x^{(j)}_2\right)$ denote the cost of moving a unit of mass from $x^{(i)}_1$ to $ x^{(j)}_2$\!, and let $C=[c_{ij}]_{i=1, j=1}^{n_1, n_2}$ be the corresponding cost matrix.  
The optimal transport problem, then, is to find a feasible transport plan that moves the mass in $\mu_1$ to $\mu_2$ with minimal cost, i.e., 
\begin{subequations}
\label{eq:standardOT}
\begin{align}
  \mathcal{T}(\mu_1, \mu_2) := \underset{M \in \mathbb{R}_+^{n_1\times n_2}}{\text{minimize}}   \quad &\langle C, M  \rangle \label{eq:standardOTobjective}  \\
   \mbox{subject to} \quad  &M\ones{n_2} \;\;= \mu_1 \\
   &M^T \ones{n_1} = \mu_2,
   \end{align}
\end{subequations}
where 
$\langle C, M  \rangle = \sum_{i=1}^{n_1}\sum_{j=1}^{n_2}  c_{ij}m_{ij}$. This problem is referred to as the Kantorovich formulation \cite{villani2021topics}, and is a linear programming problem. For certain costs, this problem can be used to define a metric space. For example, when $c(x_1^{(i)},x_2^{(j)})=\|x_1^{(i)}-x_2^{(j)}\|^2$, the Wasserstein-2 metric between two distributions with equal mass is defined as the square root of the optimal transport cost \eqref{eq:standardOT}.

For small enough problems, the formulation \eqref{eq:standardOT} can be solved using standard methods for linear programs such as the Simplex algorithm or interior-point methods. However, as the number of points in each support grows large, these methods become less practical;  to efficiently address large scale problems, one can introduce an entropic regularization term in the objective as proposed by Cuturi \cite{cuturi2013sinkhorn}. The entropy-regularized optimal transport problem is then formulated as 
\begin{subequations}
 \label{eq:standardOTregularized}
\begin{align}
  \mathcal{T}_\epsilon(\mu_1, \mu_2) := \underset{M \in \mathbb{R}_+^{n_1\times n_2}}{\text{minimize}}  \quad &\langle C, M  \rangle 
 + \epsilon D(M|P) \label{eq:standardOTregularizedobjective}  \\
   \text{subject to} \,\, \quad  &M\ones{n_2} \;\;= \mu_1 \\
   &M^T \ones{n_1} = \mu_2,
   \end{align}
\end{subequations}
where the scalar $\epsilon > 0$ is a parameter determining the weight given to the entropy term
\begin{equation}
        D(M|P) = \sum_{i=1}^{n_1}\sum_{j=1}^{n_2}\bigg( m_{ij} \log \frac{m_{ij}}{p_{ij}} - m_{ij} + p_{ij} \bigg)
    \end{equation}
and where the matrix $P = [p_{ij}]_{i=1,j=1}^{n_1,n_2}$ can represent a prior distribution \cite{peyre2019computational}. In general, $P$ can be any matrix with strictly positive elements, but common choices include $P = \ones{n_1}\ones{n_2}^T$ \cite{cuturi2013sinkhorn} or $P = \mu_1 \mu_2^T$. It can be shown \cite{cuturi2013sinkhorn} that the solution to \eqref{eq:standardOTregularized} is in the form
\begin{align}
   M_\epsilon^* = K \odot (u_1 u_2^T)
\end{align}
where $K = \exp(-C/\epsilon)\odot P$, and the vectors $u_1\in \mR^{n_1} $, $u_2\in \mR^{n_2}$ can be obtained through Sinkhorn iterations:
\begin{align}
    u_1 \leftarrow \mu_1 \oslash (K u_2), \quad u_2 \leftarrow \mu_2 \oslash (K^T u_1).
\end{align}
Note that these iterations can be performed efficiently since the bottleneck is matrix-vector multiplication. It can also be shown that if the costs are finite the algorithm  converges linearly \cite{franklin1989scaling}.
There are several ways to interpret Sinkhorn's method: scaling via diagonal matrix multiplication \cite{sinkhorn1967concerning}, iterative Bregman projections  \cite{benamou2015iterative}, Dykstra’s algorithm \cite{benamou2015iterative}, or dual coordinate ascent \cite{karlsson2017generalized}.

\section{Multistage optimal transport}
We model a snapshot of single-cell RNA sequencing (scRNA-seq) data as a collection of data points generated by multiple trajectories from a continuous time-invariant (unknown) dynamical system with state variables, $x_k(t) \in \mathbb{R}$, $k = 1,\dots,m$,  corresponding to the gene expression level of gene $k$ at time $t \in \mathbb{R}_+$ (note that $t$ does not correspond to elapsed real time; rather, it measures progression through the differentiation process). Under this framework, the transcriptional state of a cell is represented by the state vector $x(t) = (x_1(t), x_2(t), \dots, x_m(t)) \in \mathbb{R}^m$ which, as it develops, traces a trajectory in $m$ dimensional gene expression space (one dimension per gene). 

Now assume we are given a snapshot $\mathcal{D}=\{x^{(i)}\,|\, i=1,\ldots, N\}$, corresponding to a collection of cell states, $x^{(i)} = (x_1^{(i)}, x_2^{(i)}, \dots, x_m^{(i)})\in \mR^m$, representing the measured RNA expression levels of $m$ different genes, for each cell $i=1,\ldots, N$. Each cell state is assumed to have been generated by sampling a state trajectory from the underlying dynamical system after some time $t$. We thus assume that  $\mathcal{D}$ contains cells from a range of intermediate stages of differentiation. Let the first $n_0$ cells in $\mathcal{D}$ correspond to the most immature cell states; we refer to these as the initial states, and denote them by $\mathcal{X}_0$. Let the last $n_F$ cells correspond to the most mature cell states; we refer to these as the terminal states, and denote them by $\mathcal{X}_F$. The remaining cells are referred to as intermediate states and are denoted $\mathcal{X}$ (see \hyperref[box:msot]{Box 2}, Fig. 2 in main article).
 These three ordered sets are thus given by $\mathcal{X}_0=(x^{(i)}\,|\, i=1,\ldots, n_0)$, $\mathcal{X}_F=(x^{(i)}\,|\, i=N-n_F+1,\ldots, N)$, and
$\mathcal{X}=(x^{(i)}\,|\, i=n_0+1,\ldots, N-n_F)$. 
We assume that $\mathcal{X}_0$ and $\mathcal{X}_F$  are known, and one main problem that needs to be solved is to determine how the cells in $\cX$ should be ordered in terms of temporal progression through the differentiation process.

With MultistageOT, we model the cell differentiation process as a process on gene expression space, in which the cells corresponding to the initial states in $\mathcal{X}_0$ are assumed to transition into the terminal states in $\mathcal{X}_F$ over at most $T$ discrete time steps, using all the intermediate states as possible transit hubs. With each state transition we associate a cost, 
modeling the likelihood of the transition, and we seek the transitions that minimize the overall cost. As we will see, this can be formulated using optimal mass transport. 

To that end, we equip each cell with a notion of ``mass'', that can be transported to other cells, and define the cost\footnote{While the norm used in the cost in  \eqref{eq:sqeuclideancost} is the squared Euclidean norm, it should be noted that this framework in principle can be used with any type of norm or generalized notion of a transport cost between states.} of transporting a unit of mass between any two cells with states  $x^{(i)}, x^{(j)} \in \mathbb{R}^m$ as
\begin{align}
\label{eq:sqeuclideancost}
    c(x^{(i)}, x^{(j)}) = \big \lVert x^{(i)} - x^{(j)} \big \rVert_2^2,
\end{align}
i.e., the squared Euclidean distance between $x^{(i)}$ and $x^{(j)}$. Moreover, to allow for more concise notation, with two ordered sets, $\mathcal{X}_1 = (x_1^{(i)}\,|\, i=1,\ldots, n_1)$ and $\mathcal{X}_2 = (x_2^{(j)}\,|\, j=1,\ldots, n_2)$,  we associate a cost matrix $c(\mathcal{X}_1,\mathcal{X}_2) \in \overline\mR_+^{n_1\times n_2}$, with elements
\begin{equation} 
\label{eq:matrixcost}
    c(\mathcal{X}_1,\mathcal{X}_2) =\left[c(x_1^{(i)}, x_2^{(j)})\right]_{i=1,j=1}^{n_1,n_2} = \left[ \big \lVert x^{(i)} - x^{(j)} \big \rVert_2^2\right]_{i=1,j=1}^{n_1,n_2}.
\end{equation}

To reflect biology, we assume that the mass transportation starts in the initial states, $\mathcal{X}_0$, and terminates in the terminal states, $\mathcal{X}_F$. In each time step, cells can transport mass to other cells in either $\mathcal{X}$ or $\mathcal{X}_F$. Any mass transported to $\mathcal{X}$ remains within the system and can be further transported in subsequent time steps, whereas mass sent to $\mathcal{X}_F$ exits the system---representing a terminal differentiation step. Let $\tilde{\mu}_0^{(i)}$, for $i=1,\dots,n_0$, denote the portion of mass that is transported in $t = 0$ from cell $x^{(i)} \in \mathcal{X}_0$ to cells in $\mathcal{X}$, and let $\hat{\mu}_t^{(i)}$ denote the corresponding portion that is transported to $\mathcal{X}_F$.  For notational brevity, let these two mass distributions be represented by the vectors
\begin{align}
    \tilde{\mu}_0 &= \left[\tilde{\mu}_0^{(i)}\right]_{i=1}^{n_0}, \quad (\text{mass to $\mathcal{X}$, \textbf{remains} in system})\\
    \hat{\mu}_0 &= \left[\hat{\mu}_0^{(i)}\right]_{i=1}^{n_0} \quad \enspace (\text{mass to $\mathcal{X}_F$, \textbf{exits} system}).
\end{align}
Similarly, in subsequent time steps $t = 1,\dots,T-1$, cells in $\mathcal{X}$ must transport mass received in the previous time step, $t - 1$, to cells either in $\mathcal{X}$ or $\mathcal{X}_F$. As before, whatever mass is sent to $\mathcal{X}_F$ will exit the system. Thus, for $i=1,\dots, N-n_0-n_F$, we let $\tilde{\mu}_t^{(i)}$ denote the portion of mass transported in time step $t$ from cell $x^{(n_0+i)} \in \mathcal{X}$ to cells in $\mathcal{X}$ and let $\hat{\mu}_t^{(i)}$ denote the corresponding portion transported to $\mathcal{X}_F$. We define, for each $t = 1,\dots,T-1$, the corresponding vectors

\begin{align}
\tilde{\mu}_t &= \left[\tilde{\mu}_t^{(i)}\right]_{i=1}^{N-n_0-n_F}, \quad (\text{mass to $\mathcal{X}$, \textbf{remains} in system}) \\
\hat{\mu}_t &= \left[\hat{\mu}_t^{(i)}\right]_{i=1}^{N-n_0-n_F} \quad \enspace(\text{mass to $\mathcal{X}_F$, \textbf{exits} system}) .
\end{align}

Finally, let $\tilde{\nu}_t = [\tilde{\nu}_t^{(i)}]_{i=1,\dots,N-n_0-n_F}$, for $t=0,1,\dots, T-1$, be the vector whose element $\tilde{\nu}_t^{(i)}$ denotes the mass received by intermediate states $x^{(n_0+i)} \in \mathcal{X}$ ($i=1,\dots,N-n_0-n_F$)  in time step $t$, and let $\hat{\nu}_t = [\hat{\nu}_t^{(i)}]_{i=1,\dots,n_F}$, for $t=0,1,\dots, T-1$, be the vector whose element $\hat{\nu}_t^{(i)}$ denotes the mass received by terminal state $x^{N-n_F+i}$ ($i=1,\dots,n_F$) in time step $t$. A schematic overview of this sequential mass transport process is shown in (\hyperref[box:ot]{Box 2}, Fig. 3 in main article).

In this work, we take the global minimum cost transport plans---describing how mass should most efficiently be sent through all the cell states, over all time points---as a model for possible differentiation trajectories. To achieve this, we constrain the transport plans in the following way: 
\begin{enumerate}
    \item The total mass transported from a cell in time step $t$ must be equal to that which was received by that same cell in the previous time step, i.e.,
\begin{align}
    \tilde{\mu}_t + \hat{\mu}_t  &= \tilde{\nu}_{t-1}, \quad \text{for}\enspace t = 0,1,\dots,T-2, \\
    \hat{\mu}_{T-1} &= \tilde{\nu}_{T-2}.
\end{align}
Note that in the last step of transport, the intermediate states can only transport mass to cells in $\mathcal{X}_F$.
\item Every initial state should transport at least a unit amount of mass, i.e., we impose that
\begin{equation}
    \tilde{\mu}_0 + \hat{\mu}_0 \geq \ones{n_0}.
\end{equation}
\item Every intermediate state should in total, over all time steps $t=1,\dots,T-1$, transport at least a unit amount of mass. We formulate this as
\begin{equation}
   \sum_{t=1}^{T-2} \tilde{\mu}_{t} + \sum_{t=1}^{T-1}\hat{\mu}_{t}   \geq \ones{N-n_0-n_F}.
\end{equation}
\item Every terminal state should, over all time steps $t=1,\dots,T-1$, receive at least a unit amount of mass, i.e., we impose that
\begin{equation}
    \sum_{t=0}^{T-1}\hat{\nu}_t   \geq \ones{n_F}.
\end{equation}
\end{enumerate} 
Now, let 
\begin{align}
    \tilde{M}_0 &= \left[(\tilde{M}_0)_{ij} \right]_{i=1,j=1}^{n_0,N-n_0-n_F}
\end{align}
be the transport plan matrix whose components $(\tilde{M}_0)_{ij}$, for $i=1,\dots,n_0$, $j=1,\dots,N-n_0-n_F$, denote the mass sent from $x^{(i)} \in \mathcal{X}_0$ to $x^{(n_0+j)} \in \mathcal{X}$ in time step $t = 0$, and let 
\begin{align}
    \hat{M}_0 &= \left[(\hat{M}_0)_{ij} \right]_{i=1,j=1}^{n_0,n_F}
\end{align}
be the transport plan matrix whose components $(\hat{M}_0)_{ij}$, for $i=1,\dots,n_0$, $j=1,\dots,n_F$, denote the amount of mass sent from $x^{(i)} \in \mathcal{X}_0$ to $x^{(N-n_F+j)} \in \mathcal{X}_F$ in time step $t = 0$. Note that the following relations hold
\begin{equation}
\label{eq:M0-to-mu0}
    \tilde{\mu}_0 = \tilde{M}_0 \ones{n_0}, \quad     \hat{\mu}_0 = \hat{M}_0 \ones{n_F}
\end{equation}
Analogously, we define the transport matrices
\begin{align}
        \tilde{M}_t &= \left[(\tilde{M}_t)_{ij} \right]_{i=1,j=1}^{N-n_0-n_F,N-n_0-n_F}, \quad \text{for $t=1,\dots,T-2$}\\
            \hat{M}_t &= \left[(\hat{M}_t)_{ij} \right]_{i=1,j=1}^{N-n_0-n_F,n_F} \quad \text{for $t=1,\dots,T-1$}
\end{align} 
such that
\begin{align}
\label{eq:Mt-to-mut-tilde}
    &\tilde{\mu}_t = \tilde{M}_t \ones{N-n_0-n_F}, \quad \text{for $t=1,\dots,T-2$} \\ 
    \label{eq:Mt-to-mut-hat} &\hat{\mu}_t = \hat{M}_t \ones{n_F} \quad \quad\quad\quad \,\text{for $t=1,\dots,T-1$}.
\end{align}
For each transport plan matrix, for $t = 0,1,\dots, T-1$, we associate corresponding cost matrices $\tilde{C}_t, \hat{C}_t$ of matching sizes (i.e., $\tilde{C}_t$ corresponds to $\tilde{M}_t$ and  $\hat{C}_t$  to $\hat{M}_t$), so that $(\tilde{C}_t)_{ij}$ and $(\hat{C}_t)_{ij}$ encodes the cost associated with mass transports $(\tilde{M}_t)_{ij}$ and $(\hat{M}_t)_{ij}$ respectively. 

\subsection{Optimization problem}
Recall that, in \eqref{eq:standardOT}, $\mathcal{T}(\mu_1,\mu_2)$ is defined as the optimal objective function value in the optimal mass transport problem of moving $\mu_1$ to $\mu_2$ in a minimum cost way. Using this formalism, the multistage optimal mass transport problem over all time steps can be formulated as
\begin{equation*}
 (P) \quad \begin{aligned}
  \label{eq:primalopttransport}
  \underset{\underset{t=0,\dots,T-1 }{\tilde{\mu}_t,\hat{\mu}_t\hat{\nu}_t}}{\text{minimize}} \quad  &\sum_{t=0}^{T-2}\tilde{\mathcal{T}}_t (\tilde{\mu}_t,\tilde{\mu}_{t+1}+\hat{\mu}_{t+1}) + \sum_{t=0}^{T-1}\hat{\mathcal{T}}_t (\hat{\mu}_t,\hat{\nu}_{t})  \\
	\text{subject to}  \quad   &\tilde{\mu}_{0} + \hat{\mu}_{0}  \geq \ones{n_0} \\
&\sum_{t=1}^{T-2} \tilde{\mu}_{t} + \sum_{t=1}^{T-1}\hat{\mu}_{t}    \geq \ones{N - n_0 - n_F} \\
 &\sum_{t=0}^{T-1} \hat{\nu}_t   \geq \ones{n_F}.
	\end{aligned}
\end{equation*}
A more explicit formulation can be obtained by expressing (\hyperref[eq:primalopttransport]{$P$})  directly in terms of the transport plan matrices $\tilde{M}_t$ for $t = 0,1,\dots, T-2$ and $\hat{M}_t$ for $t = 0,1,\dots, T-1$. For brevity, we define $n = N-n_0-n_F$, and let $n_t = n$ for $t = 1,\dots, T-1$ Then,

\begin{subequations}
  \label{eq:primalopttransportexplicit}
  \begin{align}
  \underset{\underset{t=0,\dots,T-1 }{\tilde{M}_t, \hat{M}_t\geq 0} }{\text{minimize}} \quad  &\sum_{t=0}^{T-2}  \langle \tilde{C}_t, \tilde{M}_t \rangle    + \sum_{t=0}^{T-1} \langle \hat{C}_t, \hat{M}_t \rangle  \label{eq:primalopttransportexplicit_a} \\
	\text{subject to} \quad &\tilde{M}_0 \ones{n_1} + \hat{M}_0\ones{n_F} \geq \ones{n_0} \label{eq:primalopttransportexplicit_b}\\
 &\tilde{M}_t\ones{n_{t+1}} + \hat{M}_t\ones{n_F} = \tilde{M}_{t-1}^T\ones{n_{t-1}} \quad \text{for $t=1,\dots,T-2$}\label{eq:primalopttransportexplicit_c}\\
&\hat{M}_{T-1}\ones{n_F} = \tilde{M}_{T-2}^T\ones{n_{T-2}} \label{eq:primalopttransportexplicit_d} \\
&\sum_{t=1}^{T-2}  \tilde{M}_t\ones{n_{t+1}} + \sum_{t=1}^{T-1}\hat{M}_t\ones{n_F}     \geq \ones{n} \label{eq:primalopttransportexplicit_e}\\
 &\sum_{t=0}^{T-1} \hat{M}_t^T \ones{n_t}   \geq \ones{n_F}.\label{eq:primalopttransportexplicit_f}
	\end{align}
\end{subequations} 
Note that \eqref{eq:primalopttransportexplicit} is a linear program. However, due to the sizes of modern scRNA-seq data sets, it is not practical to solve the problem \eqref{eq:primalopttransportexplicit} with standard methods for linear programs. Moreover, due to the large amount of noise inherent in the data, the transport maps in the solution to \eqref{eq:primalopttransportexplicit}, lying on an edge of the feasible polytope, are likely overly sparse (i.e., mapping one cell to few other cells). To resolve these issues, we instead solve an approximate optimal transport problem in which a regularizing entropy term is added to the objective function (see \hyperref[sec:optimaltransport]{\autoref{sec:optimaltransport}} for details). This regularization not only allows us to derive an efficient iterative algorithm for solving the problem numerically, but it also promotes more diffuse transport maps, reducing the variance in the estimated cell-cell couplings. The regularized problem solved in MultistageOT is

\begin{subequations}
  \label{eq:regularizedprimalopt}
  \begin{align}
  \underset{\underset{t=0,\dots,T-1 }{\tilde{M}_t, \hat{M}_t}}{\text{minimize}} \quad  &\sum_{t=0}^{T-2}\left( \langle \tilde{C}_t, \tilde{M}_t \rangle  + \epsilon D(\tilde{M}_t|\tilde{P}_t) \right) +\sum_{t=0}^{T-1} \left(\langle \hat{C}_t, \hat{M}_t \rangle + \epsilon D(\hat{M}_t|\hat{P}_t) \right)  \label{eq:regularizedprimalopt_a} \\
	\text{subject to} \quad &\tilde{M}_0 \ones{n_1} + \hat{M}_0\ones{n_F} \geq \ones{n_0} \label{eq:regularizedprimalopt_b}\\
 &\tilde{M}_t\ones{n_{t+1}} + \hat{M}_t\ones{n_F} = \tilde{M}_{t-1}^T\ones{n_{t-1}}, \quad \text{for $t=1,\dots,T-2$}\label{eq:regularizedprimalopt_c}\\
&\hat{M}_{T-1}\ones{n_F} = \tilde{M}_{T-2}^T\ones{n_{T-2}}, \label{eq:regularizedprimalopt_d} \\
&\sum_{t=1}^{T-2}  \tilde{M}_t\ones{n_{t+1}} + \sum_{t=1}^{T-1}\hat{M}_t\ones{n_F}     \geq \ones{n} \label{eq:regularizedprimalopt_e}\\
 &\sum_{t=0}^{T-1} \hat{M}_t^T \ones{n_t}   \geq \ones{n_F}, \label{eq:regularizedprimalopt_f}
	\end{align}
\end{subequations} 
 where $\tilde{P}_t$ and $\hat{P}_t$ are ``prior'' transport plan matrices corresponding to $\tilde{M}_t$ and $\hat{M}_t$ respectively (see \ref{sec:optimaltransport}). Intuitively, the entropy-regularization penalizes deviations from these prior matrices. In case $\tilde{P}_t$ and $\hat{P}_t$  are simply taken to be matrices of ones, it will favor less sparse optimal transport plans in general. Thus, the solution depends on the choice of the regularization parameter, $\epsilon$ (smaller values of $\epsilon$ promote sparser transport plans, whereas larger values promote denser transport plans). In \hyperref[fig:visualizationoftransportplans]{\autoref{fig:visualizationoftransportplans}}, MultistageOT is applied to a 2D toy data set to visualize the optimal solution to \eqref{eq:regularizedprimalopt}.

The marginals $\tilde{\mu}_t$, for $t=0,1\dots,T-2$, and  $\hat{\mu}_t$,  for $t=0,1\dots,T-1$, are retrieved from the optimal transport plans via \eqref{eq:M0-to-mu0}, \eqref{eq:Mt-to-mut-tilde} and \eqref{eq:Mt-to-mut-hat}. The total mass sent from every intermediate state is given by
\begin{align}
\label{eq:full_marginal_distributions}
    \mu_t := \tilde{\mu}_t + \hat{\mu}_t \quad \text{for $t=1,\dots,T-2$},
\end{align}
and $\mu_{T-1}:= \hat{\mu}_{T-1}$. These intermediate marginals can be scaled so that they represent, for each intermediate state, the probability of that state belonging to a particular transport stage $t$. This allows us to compute a mean transport stage for each state and thereby order cells in terms of temporal progression through the differentiation process (see \hyperref[fig:synthetic-data-results]{\autoref{fig:synthetic-data-results}}c-e, and \hyperref[sec:methods]{Methods}, \autoref{sec:downstream} in the main article for more details on downstream analyses based on the optimal transport solution). 

\section{Algorithm}
The iterative algorithm for solving the entropy-regularized MultistageOT problem  \eqref{eq:regularizedprimalopt} presented in this work can be viewed as a form of generalized Sinkhorn-Knopp \cite{sinkhorn1967concerning} iterations, and we here derive it using Lagrangian duality. The iterations can be interpreted as performing block-coordinate ascent on the dual to \eqref{eq:regularizedprimalopt}.
\subsection{Duality}
To allow for more succinct notation, we let 
\begin{equation}
    \mathbf{M} = (\tilde{M}_0, \hat{M}_0, \tilde{M}_1, \hat{M}_1\dots, \tilde{M}_{T-2}, \hat{M}_{T-2}, \hat{M}_{T-1})
\end{equation}
denote the tuple of all transport plans, and we let $\mathbf{\Lambda} = (\rho,\lambda_0,\lambda_1,\dots, \lambda_T)$ denote the tuple of all the Lagrange multipliers (i.e., the primal and dual variables respectively).

The Lagrange function corresponding to \eqref{eq:regularizedprimalopt} is then
\begin{align*}
    \mathcal{L}(\mathbf{M},\mathbf{\Lambda}) &= \sum_{t=0}^{T-2}\left( \langle \tilde{C}_t, \tilde{M}_t \rangle  + \epsilon D(\tilde{M}_t|\tilde{P}_t) \right) +\sum_{t=0}^{T-1} \left(\langle \hat{C}_t, \hat{M}_t \rangle + \epsilon D(\hat{M}_t|\hat{P}_t) \right)  \\ &+ \rho^T  \left(\ones{n} -\left( \sum_{t=1}^{T-2}  \tilde{M}_t\ones{n_{t+1}} + \sum_{t=1}^{T-1}\hat{M}_t\ones{n_F} \right)  \right) 
    + \lambda_0^T\left( \mathbf{1}_{n_0} - \tilde{M}_0 \ones{n_1} - \hat{M}_0\ones{n_F}  \right) \\
    &+ \sum_{t=1}^{T-2} \lambda_t^T \left(  \tilde{M}_{t-1}^T\ones{n_{t-1}} - \tilde{M}_t\ones{n_{t+1}} - \hat{M}_t\ones{n_F}\right)  \\
    &+  \lambda_{T-1}^T \left(\tilde{M}_{T-2}^T\ones{n_{T-1}} - \hat{M}_{T-1}\ones{n_F}\right) + \lambda_T^T \left( 
 \left(\sum_{t=0}^{T-1} \hat{M}^T_t \ones{n_t} \right) - \ones{n_F}  \right)
\end{align*}
and the Lagrange dual problem is
\begin{align}
\label{eq:dualproblem}
   \max_{\mathbf{\Lambda} | \rho, \lambda_0, \lambda_T \geq 0} \enspace \inf_{\mathbf{M}\geq 0} \mathcal{L}(\mathbf{M},\mathbf{\Lambda}).
\end{align}
The key to deriving the algorithm presented in this work lies in leveraging the block-like structure in the problem induced by the different transport stages. In the following proposition, each transport plan is expressed in terms of the dual variables, and the dual problem is expressed on a form which emphasizes this block-structure. 

\vspace{0.5cm}

\begin{proposition}
\label{sec:factorizationproposition}  Assume there exists a feasible solution to \eqref{eq:regularizedprimalopt} with $(\tilde{M}_t)_{ij} > 0$ whenever $(\tilde{C}_t)_{ij} < \infty$ and $(\hat{M}_t)_{ij} > 0$ whenever $(\hat{C}_t)_{ij} < \infty$. Then, the optimal transport plans in \eqref{eq:regularizedprimalopt} may be factorized as
\begin{equation*}
   \boxed{ \begin{aligned}
   \tilde{M}_t &= \tilde{K}_t \odot \tilde{U}_t,  \quad \text{for} \enspace t = 0,1,\dots,T-2, \\
      \hat{M}_t &= \hat{K}_t \odot \hat{U}_t, \quad \text{for} \enspace t = 0,1,\dots,T-1,
      \end{aligned} }
\end{equation*}
 where 
\begin{align*}
&\tilde{K}_t = \tilde{P}_t \odot \exp(-\tilde{C}_t/\epsilon), \quad \text{for} \enspace t = 0,1,\dots,T-2,
\\ 
&\hat{K}_t = \hat{P}_t \odot \exp(-\hat{C}_t/\epsilon),  \quad \text{for} \enspace t = 0,1,\dots,T-1, \\
&\tilde{U}_t =  (u_t \odot s_t) v_{t+1}^T, \quad \text{for} \enspace t = 0,1,\dots,T-2,\\
&\hat{U}_t =  (u_t \odot s_t)v_T^T, \quad \text{for} \enspace t = 0,1,\dots,T-1,
\end{align*}
and where
\begin{align*}
    &u_t = \exp(\lambda_t / \epsilon) \enspace \, \quad  t = 0,1\dots, T \\
&v_t =
\exp(- \lambda_t / \epsilon)\quad t = 0,1,\dots, T
\\ 
    &s_t = \begin{cases}
         \ones{n_0} \quad &t = 0\\
        \exp(\rho/\epsilon) &t= 1,\dots, T-1.
    \end{cases}
\end{align*}
Moreover, strong-duality holds, and the dual function of \eqref{eq:regularizedprimalopt} may, up to a constant, be expressed as
\begin{align*}
\varphi(\mathbf{\Lambda})  = &-\epsilon \big(e^{\lambda_0/\epsilon}\big)^T\left(\tilde{K}_0 e^{-\lambda_1/\epsilon} + \hat{K}_0 e^{-\lambda_T/\epsilon} \right) \\
&-\epsilon \sum_{t=1}^{T-2} \big(e^{\lambda_t/\epsilon} \odot e^{ \rho/\epsilon}\big)^T
 \big(\tilde{K}_t e^{ - \lambda_{t+1}/\epsilon}
+ \hat{K}_t e^{-\lambda_T/\epsilon}\big) \\ &-\epsilon  \big(e^{\lambda_{T-1}/\epsilon} \odot e^{ \rho/\epsilon}\big)^T  \left( \hat{K}_{T-1} e^{-\lambda_T/\epsilon} \right)  \\
&+ \ones{n_0}^T \lambda_0 + \ones{n_F}^T\lambda_T + \ones{n}^T \rho.
\end{align*}
\end{proposition}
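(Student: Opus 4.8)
The plan is to obtain everything from Lagrangian duality, exploiting that the entropy term $\epsilon D(\cdot|\cdot)$ makes the regularized objective \eqref{eq:regularizedprimalopt_a} strictly convex and, crucially, \emph{fully separable} over the individual entries of all the matrices: both $\langle C,M\rangle=\sum_{ij}c_{ij}m_{ij}$, the entropy $D(M|P)=\sum_{ij}(m_{ij}\log(m_{ij}/p_{ij})-m_{ij}+p_{ij})$, and each Lagrange-multiplier term (being linear in the row/column sums of a matrix, e.g. $\lambda^TM\ones{n}=\sum_{ij}\lambda_i m_{ij}$) split entry by entry. Hence the inner minimization $\inf_{\mathbf M\ge 0}\mathcal L(\mathbf M,\mathbf\Lambda)$ in the dual problem \eqref{eq:dualproblem} reduces to independent one-dimensional strictly convex minimizations, each solvable in closed form.

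First I would fix $\mathbf\Lambda$ and, for each transport plan, collect the terms of $\mathcal L$ in which it appears. Reading off $\mathcal L$: for $1\le t\le T-2$, the matrix $\tilde M_t$ occurs in its own cost-plus-entropy term, in the $\rho$-constraint \eqref{eq:regularizedprimalopt_e}, in the $\lambda_t$-constraint where it plays the \emph{outgoing} role (its row sums are subtracted), and in the $\lambda_{t+1}$-constraint where it plays the \emph{incoming} role (its column sums are added). The boundary matrices differ only in which multiplier plays the outgoing role: $\tilde M_0$ pairs with $\lambda_0$ via \eqref{eq:regularizedprimalopt_b} and carries no $\rho$-contribution, while every $\hat M_t$ flows into the absorbing sink and thus plays the incoming role against $\lambda_T$ via \eqref{eq:regularizedprimalopt_f}. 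In all cases the part of $\mathcal L$ depending on a given matrix $M$ has the shape $\langle C,M\rangle+\epsilon D(M|P)-\sum_{ij}(\text{linear combination of duals})_{ij}m_{ij}$. Setting the entrywise derivative to zero and using $\partial D(M|P)/\partial m_{ij}=\log(m_{ij}/p_{ij})$ gives $c_{ij}+\epsilon\log(m_{ij}/p_{ij})=(\text{duals})_{ij}$, i.e. $m_{ij}=p_{ij}e^{-c_{ij}/\epsilon}e^{(\text{duals})_{ij}/\epsilon}$. Since for $\tilde M_t$ the dual combination is $\lambda_t+\rho$ (on the row index $i$) minus $\lambda_{t+1}$ (on the column index $j$), and $\rho$ is absent at $t=0$, setting $\tilde K_t=\tilde P_t\odot\exp(-\tilde C_t/\epsilon)$, $u_t=\exp(\lambda_t/\epsilon)$, $v_t=\exp(-\lambda_t/\epsilon)$, $s_t=\exp(\rho/\epsilon)$ (and $s_0=\ones{n_0}$) makes this a rank-one outer product, yielding exactly $\tilde M_t=\tilde K_t\odot\big((u_t\odot s_t)v_{t+1}^T\big)$; the same bookkeeping for $\hat M_t$ (incoming against $\lambda_T$) gives $\hat M_t=\hat K_t\odot\big((u_t\odot s_t)v_T^T\big)$. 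Entries with $(\tilde C_t)_{ij}=\infty$ give $(\tilde K_t)_{ij}=0$, forcing $(\tilde M_t)_{ij}=0$, consistent with feasibility; and the stated hypothesis — existence of a feasible solution with $(\tilde M_t)_{ij}>0$ wherever $(\tilde C_t)_{ij}<\infty$, and likewise for $\hat M_t$ — is precisely the Slater-type constraint qualification guaranteeing that this interior stationary point is the minimizer and that strong duality holds.

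Finally I would substitute this minimizer back into $\mathcal L$ to read off $\varphi(\mathbf\Lambda)$. For each single regularized-OT block, plugging the stationarity relation $\epsilon\log(m^*_{ij}/p_{ij})=-c_{ij}+(\text{duals})_{ij}$ into the block objective gives the identity $\langle C,M^*\rangle+\epsilon D(M^*|P)=\sum_{ij}(\text{duals})_{ij}m^*_{ij}-\epsilon\sum_{ij}m^*_{ij}+\epsilon\sum_{ij}p_{ij}$. When summed over all blocks and added to the remaining multiplier terms of $\mathcal L$, the bilinear pieces $\sum_{ij}(\text{duals})_{ij}m^*_{ij}$ cancel exactly against the corresponding multiplier–matrix products in $\mathcal L$; the $\epsilon\sum_{ij}p_{ij}$ pieces are independent of $\mathbf\Lambda$ and constitute the constant that is dropped ("up to a constant"); what survives is $-\epsilon$ times the sum of all optimal matrix entries, which, writing $\sum_{ij}(K\odot(uv^T))_{ij}=u^TKv$ stage by stage, is exactly the three bracketed negative terms of $\varphi$, together with the linear-in-dual contributions $\ones{n_0}^T\lambda_0+\ones{n_F}^T\lambda_T+\ones{n}^T\rho$ arising from the constant right-hand sides of the constraints. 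I expect the main obstacle to be purely organizational: keeping the sign bookkeeping uniform across the three regimes $t=0$, $1\le t\le T-2$, and $t=T-1$ — i.e. verifying that the free multipliers $\lambda_1,\dots,\lambda_{T-1}$ of the equality constraints \eqref{eq:regularizedprimalopt_c}–\eqref{eq:regularizedprimalopt_d} and the sign-constrained $\rho,\lambda_0,\lambda_T$ of the inequalities combine so that the same factorization pattern holds at the boundary stages as in the interior; the analytic content (separability, the log-derivative of the entropy, and Slater $\Rightarrow$ strong duality with attained primal infimum) is standard.
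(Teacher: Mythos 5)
Your proposal is correct and follows essentially the same route as the paper: Lagrangian duality, entrywise stationarity of the separable entropy-regularized Lagrangian yielding the rank-one factorization $m_{ij}=p_{ij}e^{-c_{ij}/\epsilon}e^{(\text{duals})_{ij}/\epsilon}$, back-substitution with cancellation of the bilinear terms to obtain $\varphi$, and Slater's condition for strong duality (the paper likewise first treats all costs finite and then removes the variables forced to zero by infinite costs). The only cosmetic difference is that the paper carries out the three regimes $t=0$, $1\le t\le T-2$, $t=T-1$ explicitly rather than via your uniform ``(duals)$_{ij}$'' bookkeeping, and it attributes interior attainment of the inner infimum to the infinite negative slope of the entropy at zero rather than folding it into the Slater hypothesis.
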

\begin{proof} Assume first that $(\tilde{C}_t)_{ij}<\infty$, $(\hat{C}_t)_{ij}<\infty$ for all $i, j$ and $t$, in which case there always exists a strictly positive feasible solution to \eqref{eq:regularizedprimalopt} (e.g., sending mass uniformly to all other cells in each time point). Since the objective \eqref{eq:regularizedprimalopt_a} is strictly convex, if a minimizer $\mathbf{M}^*$ exists, it is unique. Moreover, since the entropy-regularization terms have infinite, negative, slope at $(\tilde{M}_t)_{ij} = 0$, $(\hat{M}_t)_{ij} = 0$, for $i=1,\dots,n_t$, $j=1,\dots,{n_{t+1}}$, and $t = 0,\dots, T-1$, it follows that all elements in the optimal solution will be strictly positive. Therefore, since the Lagrange function is continuously differentiable with respect to $\mathbf{M}$, we know that any minimizer $\mathbf{M}^*$ must satisfy
\begin{equation*}
   \nabla_\mathbf{M} \mathcal{L} ( \mathbf{M}^*) = 0.   
\end{equation*}
Differentiating $\mathcal{L}$ with respect to the variables $(\tilde{M}_0)_{ij}$ for $i = 1,\dots, n_0, j = 1,\dots,n_1$ and setting the derivative to zero yields
\begin{align*}
 (\tilde{M}_0)_{ij} = (\tilde{P}_0)_{ij}\exp \left( -(\tilde{C}_0)_{ij}/\epsilon + (\lambda_0)_i/\epsilon -(\lambda_1)_j/\epsilon \right), 
\end{align*}
or, equivalently, 
\begin{align*}
    \tilde{M}_0 &= \text{diag}(u_0) \tilde{P}_0 \odot\exp(-\tilde{C}_0/\epsilon) \text{diag}(v_1) =  \tilde{P}_0 \odot \exp(-\tilde{C}_0/\epsilon)  \odot \left(u_0 v_1^T\right) = \tilde{K}_0 \odot \left(u_0 v_1^T\right) \\
    &= \{ s_0 = \ones{n_0}\} =  \tilde{K}_0 \odot \left((u_0 \odot s_0)v_1^T\right).
\end{align*}
Doing the same for $(\hat{M}_0)_{ij}$, for $i = 1,\dots,n_0$, $j=1,\dots,n_F$, yields
\begin{equation*}
     (\hat{M}_0)_{ij} = (\hat{P}_0)_{ij} \exp\left( -(\hat{C}_0)_{ij}/\epsilon + (\lambda_0)_i/\epsilon - (\lambda_T)_j/\epsilon \right),
\end{equation*}
or,  
\begin{equation*}
    \hat{M}_0 = \hat{K}_0\left(u_0 v_T^T \right) = \{ s_0 = \ones{n_0}\} = \hat{K}_0\left( (u_0 \odot s_0) v_T^T \right).
\end{equation*} 
Similarly, for $t = 1,\dots, T-2$ we obtain
\begin{equation*}
    (\tilde{M}_t)_{ij} = 
        (\tilde{P}_t)_{ij}\exp \left( -(\tilde{C}_t)_{ij}/\epsilon + \rho_i/\epsilon + (\lambda_t)_i/\epsilon - (\lambda_{t+1})_j/\epsilon  \right),
\end{equation*}
 for $i = 1,\dots, n, j = 1,\dots,n$, and 
 \begin{equation*}
    (\hat{M}_t)_{ij} = 
        (\hat{P}_t)_{ij}\exp \left( -(\hat{C}_t)_{ij}/\epsilon + \rho_i/\epsilon + (\lambda_t)_i/\epsilon - (\lambda_{T})_j/\epsilon  \right),
 \end{equation*}
  for $i = 1,\dots, n, j = 1,\dots,n_F$. Hence, for $t = 1,\dots, T-2$, we have that
\begin{align*}
    \tilde{M}_t  &=   \tilde{K}_t \odot \left( (u_t \odot s_t) v_{t+1}^T \right), \\
    \hat{M}_t  &= \hat{K}_t \odot \left( (u_t \odot s_t) v_{T}^T \right).
\end{align*}
Finally, for $t=T-1$, we obtain
\begin{equation*}
    (\hat{M}_{T-1})_{ij} = (\hat{P}_{T-1})_{ij} \exp \left( -(\hat{C}_{T-1})_{ij}/\epsilon + \rho_i/\epsilon + (\lambda_{T-1})_i/\epsilon - (\lambda_{T})_i/\epsilon \right),
\end{equation*} 
for $i = 1,\dots, n, j = 1,\dots,n_F$, which implies
\begin{equation*}
    \hat{M}_{T-1} =  \hat{K}_{T-1}\odot  \left((u_{T-1}\odot s_{T-1})v_T^T \right).
\end{equation*}
It follows that in the optimal solution,  the entropy-regularization for $\tilde{M}_0$ reduces to
\begin{align*}
\epsilon D(\tilde{M}_0|\tilde{P}_0) =   &\epsilon \sum_{i=1}^{n_0} \sum_{j=1}^{n} (\tilde{M}_0)_{ij} \log(\tilde{M}_0)_{ij} - (\tilde{M}_0)_{ij} \log(\tilde{P}_0)_{ij}    -  (\tilde{M}_0)_{ij} +  (\tilde{P}_0)_{ij}   \\
= &-\sum_{i=1}^{n_0} \sum_{j=1}^{n}  (\tilde{M}_0)_{ij}(\tilde{C}_0)_{ij}  + \sum_{i=1}^{n_0}  \sum_{j=1}^{n}(\tilde{M}_0)_{ij}(\lambda_0)_i  - \sum_{i=1}^{n_0}  \sum_{j=1}^{n}(\tilde{M}_0)_{ij}(\lambda_1)_j \\
&- \epsilon \sum_{i=1}^{n_0}  \sum_{j=1}^{n} (\tilde{M}_0)_{ij} + \epsilon \sum_{i=1}^{n_0}  \sum_{j=1}^{n} (\tilde{P}_0)_{ij}  \\
= &-\langle \tilde{C}_0, \tilde{M}_0 \rangle + \lambda_0^T \tilde{M}_0\ones{n_1} - \lambda_1^T \tilde{M}_0^T \ones{n_0} -  \epsilon\ones{n_0}^T \tilde{M}_0 \ones{n_1}  +  \epsilon\ones{n_0}^T \tilde{P}_0 \ones{n_1}.
\end{align*}
Similarly, we get that
\begin{equation*}
    \epsilon D(\hat{M}_0|\hat{P}_0) =  -\langle \hat{C}_0, \hat{M}_0 \rangle + \lambda_0^T \hat{M}_0\ones{n_F} - \lambda_1^T \hat{M}_0^T \ones{n_0} -  \epsilon\ones{n_0}^T \hat{M}_0 \ones{n_F}  +  \epsilon\ones{n_0}^T \hat{P}_0 \ones{n_F}
\end{equation*}
For $t = 1,\dots,T-2$, we have
\begin{align*}
    \epsilon D(\tilde{M}_t|\tilde{P}_t) =   &-\langle \tilde{C}_t, \tilde{M}_t \rangle + \rho^T \tilde{M}_t\ones{n_{t+1}} + \lambda_t^T \tilde{M}_t\ones{n_{t+1}} - \lambda_{t+1}^T \tilde{M}_t^T \ones{n_t} \\
    &-  \epsilon\ones{n_t}^T \tilde{M}_t \ones{n_{t+1}}  + \epsilon\ones{n_t}^T \tilde{P}_t \ones{n_{t+1}},\\
    \epsilon D(\hat{M}_t|\hat{P}_t) =   &-\langle \hat{C}_t, \hat{M}_t \rangle + \rho^T \hat{M}_t\ones{n_{T}} + \lambda_t^T \hat{M}_t\ones{n_{T}} - \lambda_{T}^T \hat{M}_t^T \ones{n_t} \\
    &-  \epsilon\ones{n_t}^T \hat{M}_t \ones{n_{T}}  + \epsilon\ones{n_t}^T \hat{P}_t \ones{n_{t+1}},
\end{align*}
and, for $t = T-1$,
\begin{align*}
    \epsilon D(\tilde{M}_{T-1}|\tilde{P}_{T-1}) =   &-\langle \tilde{C}_{T-1}, \tilde{M}_{T-1} \rangle + \rho^T \tilde{M}_{T-1}\ones{n} + \lambda_{T-1}^T \tilde{M}_{T-1}\ones{n}  \\
    &-  \epsilon\ones{n}^T \tilde{M}_{T-1} \ones{n_{T}}  + \epsilon\ones{n}^T \tilde{P}_{T-1} \ones{n},  \\
    \epsilon D(\hat{M}_{T-1}|\hat{P}_{T-1}) =   &-\langle \hat{C}_{T-1}, \hat{M}_{T-1} \rangle + \rho^T \hat{M}_{T-1}\ones{n} + \lambda_{T-1}^T \hat{M}_{T-1}\ones{n} - \lambda_{T}^T \hat{M}_{T-1}^T \ones{n} \\
    &-  \epsilon\ones{n}^T \hat{M}_{T-1} \ones{n_{T}} + \epsilon\ones{n}^T \hat{P}_{T-1} \ones{n}.
\end{align*}
Thus, when we substitute the optimal transport plans (in terms of the dual variables) into the Lagrange function, many of the terms cancel, and we obtain the following dual function (up to an additive constant):
\begin{align*}
\varphi(\mathbf{\Lambda}) = -\epsilon \left( \sum_{t=0}^{T-2}  \langle\tilde{K}_t,\tilde{U}_t\rangle  -  \sum_{t=0}^{T-1}  \langle \hat{K}_t,\hat{U}_t\rangle \right)  + \ones{n_0}^T \lambda_0 + \ones{n_F}^T\lambda_T + \ones{n}^T \rho.
\end{align*}
Alternatively, it can be expressed explicitly in terms of the dual variables $\mathbf{\Lambda}$,
\begin{align*}
\varphi(\mathbf{\Lambda})  = &-\epsilon \big(e^{\lambda_0/\epsilon}\big)^T\left(\tilde{K}_0 e^{-\lambda_1/\epsilon} + \hat{K}_0 e^{-\lambda_T/\epsilon} \right) \\
&-\epsilon \sum_{t=1}^{T-2} \big(e^{\lambda_t/\epsilon} \odot e^{ \rho/\epsilon}\big)^T
 \big(\tilde{K}_t e^{ - \lambda_{t+1}/\epsilon}
+ \hat{K}_t e^{-\lambda_T/\epsilon}\big) \\ &-\epsilon  \big(e^{\lambda_{T-1}/\epsilon} \odot e^{ \rho/\epsilon}\big)^T  \left( \hat{K}_{T-1} e^{-\lambda_T/\epsilon} \right)  \\
&+ \ones{n_0}^T \lambda_0 + \ones{n_F}^T\lambda_T + \ones{n}^T \rho.
\end{align*}
Since there exists a strictly positive feasible solution, we know there exists a point in the relative interior of the feasible set (Slater's condition). Thus, by Slater's theorem, strong duality holds (see \cite{boyd2004convex} Section~5.3.2. for a proof). 

In the case of infinite cost elements $(\tilde{C}_t)_{ij} =  \infty$ or $(\hat{C}_t)_{ij} = \infty$ for some $i,j$ and $t$, the corresponding elements in $\tilde{M}_t$ or $\hat{M}_t$ need to be 0 for the transport to be feasible, and such transports may thus be removed from the formulation without changing the problem. If there is a feasible solution with all remaining variables strictly positive, then Slater's condition still holds.

 \end{proof}
\begin{remark} For certain choices $(\tilde{C}_t)_{ij} =  \infty$ or $(\hat{C}_t)_{ij} = \infty$,  the optimal solution might require setting other transport elements to 0 even though the corresponding costs are finite, meaning Slater's condition is not fulfilled; this may for example happen if an entire row or column is filled with only infinite cost elements. This situation corresponds to infinite dual variables and is avoided by the assumption that $(\tilde{M}_t)_{ij} > 0$ whenever $(\tilde{C}_t)_{ij} < \infty$, and $(\hat{M}_t)_{ij} > 0$ whenever $(\hat{C}_t)_{ij} < \infty$ in \hyperref[sec:factorizationproposition]{Proposition~\ref{sec:factorizationproposition}}.
\end{remark}

\subsection{Computation of optimal transport plans via block-coordinate ascent on the dual function}
We solve the dual problem derived above in a block-wise fashion: each ``block 
 update'' corresponds to optimizing the dual function with respect to a particular dual variable vector in each step, while keeping the others fixed. This procedure then directly leads to \hyperref[alg:baseblockascent]{Algorithm \ref{alg:baseblockascent}}. From the optimal dual variables, the optimal transport plans are readily obtained using the results in \hyperref[sec:factorizationproposition]{Proposition \ref{sec:factorizationproposition}}.

Taking the gradient with respect to $\rho$  yields
\begin{equation*}
        \nabla_{\rho} \varphi = \ones{n} -  e^{\rho/\epsilon} \odot \Bigg[ e^{\lambda_{T-1}/\epsilon} \odot \left( \hat{K}_{T-1} e^{-\lambda_T/\epsilon} \right) + \sum_{t=1}^{T-2}    e^{\lambda_t/\epsilon} \odot  \big(\tilde{K}_te^{-\lambda_{t+1}/\epsilon} +    \hat{K}_t e^{-\lambda_T/\epsilon}\big)\Bigg],
\end{equation*}
or, defining $s= \exp(\rho/\epsilon)$, we can express this in terms of the utility variables:
\begin{equation*}
         \nabla_{\rho} \varphi = \ones{n} -  s \odot  \left[  u_{T-1}\odot \left(  \hat{K}_{T-1} v_T \right) 
 + \sum_{t=1}^{T-2} u_t \odot  \big(\tilde{K}_t v_{t+1} +    \hat{K}_t v_T\big)\right].
\end{equation*}
Setting this to zero and solving for $s$ leads to
\begin{equation*}
    s = \ones{n} \oslash \left[ u_{T-1}\odot \left(  \hat{K}_{T-1} v_T \right) + \sum_{t=1}^{T-2} u_t \odot  \big(\tilde{K}_t v_{t+1} +    \hat{K}_t v_T\big)\right].
\end{equation*}
Note however that we also have the constraints $\rho \geq 0$. To ensure that this constraint is not violated, the update rule is modified and becomes:
\begin{align}
    s = \max\left\{\ones{n}, \ones{n} \oslash \left[  u_{T-1}\odot \left(  \hat{K}_{T-1} v_T \right) + \sum_{t=1}^{T-2} u_t \odot  \big(\tilde{K}_t v_{t+1} +    \hat{K}_t u_T\big)\right] \right\}.
\end{align}
Taking the gradient with respect to $\lambda_0$  yields
\begin{equation*}
    \nabla_{\lambda_0} \varphi = \ones{n_0} -\big(\tilde{K}_0 e^{-\lambda_1/\epsilon} + \hat{K}_0 e^{\lambda_T/\epsilon}\big) \odot e^{\lambda_0/\epsilon} = \ones{n_0} -(\tilde{K}_0v_1 + \hat{K}_0v_T)\odot u_0.
\end{equation*}
Again, imposing that this should be zero (and ensuring $\lambda_0 \geq 0)$ we may solve for $u_0$ to obtain:
\begin{align}
     u_0 = \max\big\{\ones{n_0}, \ones{n_0}\oslash (\tilde{K}_0v_1 + \hat{K}_0 v_T)\big\}.
\end{align}
Similarly, differentiating with respect to $\lambda_1$, one obtains
\begin{align*}
    \nabla_{\lambda_1} \varphi &= \big(\tilde{K}_0^T e^{\lambda_0/\epsilon}\big) \odot e^{-\lambda_1/\epsilon} - \big(\tilde{K}_1e^{-\lambda_{t+1}/\epsilon}  + \hat{K}_1e^{\lambda_T/\epsilon}\big)\big(e^{\lambda_t/\epsilon} \odot e^{\rho/\epsilon}\big) \\
    &= \big(\tilde{K}_0^T u_0\big) \oslash u_1 - \big(\tilde{K}_1v_2  + \hat{K}_1v_T\big)\big(u_1 \odot s\big) = 0,
\end{align*}
and, solving for $u_1$,
\begin{align}
    u_1 = \left\{\left(\tilde{K}_0^T u_0\right) \oslash \left[ s \odot \left(\tilde{K}_1v_2  + \hat{K}_1v_T\right) \right]\right\}^{1/2}.
\end{align}
Repeating this procedure for $t=2,\dots, T-2$, one obtains
\begin{align}
    u_t  =  \left\{ \left( \tilde{K}_{t-1}^T \left(u_{t-1} \odot  s \right)\right) \oslash \left[s  \odot  \left(\tilde{K}_t v_{t+1}+  \hat{K}_t v_T \right) \right ]\right\}^{1/2},
\end{align}
and for the final two blocks, 
\begin{align}
    u_{T-1}  &=  \left\{ \left( \tilde{K}_{T-2}^T \left(u_{T-2} \odot  s\right) \right) \oslash \left(s  \odot \left( \hat{K}_{T-1} v_T  \right) \right) \right\}^{1/2} \\
    u_T &= \max \left\{\ones{n_F},  \ones{n_F} \oslash \left[ \hat{K}_0^T u_0 + \sum_{t=1}^{T-1} \hat{K}_t^T (u_t \odot s) \right]\right \}.
\end{align}
The algorithm is summarized in \hyperref[alg:baseblockascent]{Algorithm \ref{alg:baseblockascent}}.

In practice, we define the following cost matrices:
\begin{align}
\label{eq:costdefinitions}
    \tilde{C}_0 &:= c(\mathcal{X}_0, \mathcal{X})\\
    \hat{C}_0 &:= \left[\infty \right]_{i=1, j=1}^{n_0, n_F}\\
        \tilde{C}_t &:= c(\mathcal{X}, \mathcal{X}) + \eta I, \enspace \quad \text{for $t = 1,\dots,T-2$}\\
    \hat{C}_t &:= c(\mathcal{X}, \mathcal{X}_F), \quad \text{for $t = 1,\dots,T-1$},
\end{align}
where the cost function $c(\cdot, \cdot)$ is defined in \eqref{eq:matrixcost}, corresponding to the squared Euclidean distances between cell states, $\eta$ is a large number which penalizes self-transports and $I$ is the $n$ by $n$ identity matrix. We effectively forbid such transport by setting the diagonal to $\texttt{numpy.inf}$ in our Python implementation. Note that our choice of $\hat{C}_0$ corresponds to an assumption that no initial states takes a terminal differentiation step in the first stage of transport.

\subsection{Reduction of entropy-regularization through a proximal point scheme}
\label{sec:proximalpoint}
In our derivation of \hyperref[alg:baseblockascent]{Algorithm \ref{alg:baseblockascent}}, the regularization parameter $\epsilon$ was taken as a given constant. However, the solution indeed depends on $\epsilon$, and  setting $\epsilon = 0$ reverts the regularized problem back into the original problem $(P)$. For this reason, it may be tempting to choose $\epsilon$ very small. However, the block-coordinate update scheme derived above is not defined for $\epsilon = 0$, as this would cause all elements in the $K_t$-matrices to become 0, yielding infeasible transport plans for any finite choice of the dual variables; and even for strictly positive values, choosing $\epsilon$ too small leads to numerical instability if enough elements in the $K_t$-matrices become too small to be represented as non-zero values. For small $\epsilon > 0$, this issue can in principle be solved by performing all computations on the log-transformed utility variables instead \cite{schmitzer2019}. However, this greatly increases the computational complexity of the algorithm   \cite{pmlr-v115-xie20b}.

\begin{algorithm}[H]
	\caption{Block-coordinate ascent algorithm for solving \hyperref[eq:regularizedprimalopt]{(\ref{eq:regularizedprimalopt})}  } \label{alg:baseblockascent}
	\textbf{Given:}  A data set $\mathcal{D} = \{ x^{(i)} |i = 1,\dots, N\}$ of the cellular states (e.g., gene expression levels) of $N$ cells partitioned into three disjoint ordered subsets $\mathcal{X}_0$ (initial states), $\mathcal{X}$ (intermediate states), $\mathcal{X}_F$ (terminal states). \\
 \textbf{Choose:} parameter values for the number of time steps, $T$, and entropy-regularization, $\epsilon \in (0,\infty)$. \\
	\textbf{Pre-compute:} matrices $\tilde{K}_t$ for $t = 0,1,2,\dots, T-2$, and $\hat{K}_t$ for $t = 1,2,\dots, T-1$ (Note: we set all elements of $\hat{K}_0 = [ (\hat{K}_0)_{ij}]_{i=1,j=1}^{n_0, n_F}$ to zero in our implementation, corresponding to infinite costs, $(\hat{C}_0)_{ij} = \infty$, $i=1,\dots,n_0$, $j=1,\dots,n_F$). \\
	\textbf{Initialize}  $u_t = \ones{n_t}$, for $t = 0,1,\dots, T$, and  $s= \ones{n}$. 
	\begin{algorithmic}
		\While{not converged}  
		\For{$t = 0,1, \dots, T$}
		\State $v_t \gets \ones{n_t} \oslash u_t$
		\EndFor 
		\State $s \gets  \max \left\{\ones{n}, \ones{n} \oslash \left [  u_{T-1}  \odot \left(\hat{K}_{T-1} v_T\right) + \sum_{t=1}^{T-2} u_t \odot \left(\tilde{K}_t v_{t+1}+ \hat{K}_t v_{T}\right)  \right] \right\}$
		\State $u_0 \gets  \max \left\{\ones{n_0}, \ones{n_0} \oslash \left(\tilde{K}_0 v_1 + \hat{K}_0 v_T\right) \right\}$
		\State $u_1 \gets   \left\{ \left(\tilde{K}_0^T u_0 \right) \oslash \left[s  \odot  \left(\tilde{K}_1 v_2 +  \hat{K}_1v_T\right) \right]\right\}^{1/2}$
		\For{$t = 2, \dots, T-2$}
		\State $u_t \gets  \left\{ \left( \tilde{K}_{t-1}^T \left(u_{t-1} \odot  s \right) \right) \oslash \left[s  \odot  \left(\tilde{K}_t v_{t+1}+  \hat{K}_t v_T \right) \right]\right\}^{1/2}$ 
		\EndFor
		\State $u_{T-1} \gets  \left\{ \left( \tilde{K}_{T-2}^T \left(u_{T-2} \odot  s \right) \right) \oslash \left[s  \odot \left(\hat{K}_{T-1} v_T \right) \right]\right\}^{1/2}$
		\State $u_T \gets  \max \left\{\ones{n_F},  \ones{n_F} \oslash \left[ \hat{K}_0^T u_0 + \sum_{t=1}^{T-1} \hat{K}_t^T \left(u_t \odot s \right) \right]\right \}  $ 
		\EndWhile  
		\State{$\tilde{M}_0 \gets   \tilde{K}_0 \odot \left(u_0 v_1^T \right) $}
             \State{$\hat{M}_0 \gets \hat{K}_0 \odot \left(u_0 v_T^T \right) $}
		\For{$t = 1,\dots,T-2$}
		\State{$\tilde{M}_t \gets   \tilde{K}_t \left((u_t \odot s)  v_{t+1}^T \right)$ }
          \State{$\hat{M}_t \gets   \hat{K}_t \left((u_t \odot s)  v_T^T \right)$ }
		\EndFor 
		\State $\hat{M}_{T-1} \gets   \hat{K}_{T-1} \left((u_t \odot s)  v_T^T \right)$
	\end{algorithmic}
\end{algorithm}

To avoid having to choose $\epsilon$ too large (and thus straying too far from optimality in the original problem), we nested the block-coordinate ascent method in an outer proximal point scheme, based on the approach introduced in Xie et al.  \cite{pmlr-v115-xie20b}. The proximal point scheme works by taking a previously computed solution as a prior in the entropy term. The solution to this new problem can be shown to correspond to a different parameter $\epsilon'$ which is strictly smaller than the parameter used for the original problem.  To see this, consider $\tilde{M}_t^p$ as prior optimal transport plan in iteration $p$ of the proximal point scheme, obtained after solving \eqref{eq:regularizedprimalopt} with  prior transport plan matrix $\tilde{M}_t^{p-1}$. Then, from \hyperref[sec:factorizationproposition]{Proposition \ref{sec:factorizationproposition}}, we have that
\begin{align}
    \tilde{M}_t^p = \tilde{K}^p_t \odot \tilde{U}_t^p = \tilde{M}^{p-1}_t \odot \exp\left( -\tilde{C}_t/\epsilon  \right)\odot \tilde{U}_t^p .
\end{align}
In proximal point iteration $p+1$, we solve \eqref{eq:regularizedprimalopt} again using entropy-regularization $D\left(\tilde{M}^{p+1}_t | \tilde{M}_t^p\right)$. The optimal transport plan can be expressed as
\begin{align}
    \tilde{M}_t^{p+1} &=  \tilde{M}_t^p \odot \exp\left(  -\tilde{C}_t/\epsilon  \right) \odot \tilde{U}_t^{p+1}= \\
    &= \left(\tilde{M}^{p-1}_t \odot \exp\left( -\tilde{C}_t/\epsilon  \right)\odot \tilde{U}_t^p \right) \odot ( -\tilde{C}_t / \epsilon) \odot \tilde{U}_t^{p+1} \\
 &=\tilde{M}^{p-1}_t \odot \exp\left (-\frac{\tilde{C}_t}{\frac{\epsilon}{2}} \right) \odot \tilde{U}_t^{p} \odot \tilde{U}_t^{p+1}.
\end{align}
Hence, by induction, we find that the transport plans obtained in this way in proximal point iteration $p$ can be expressed
\begin{align}
    \tilde{M}_t^{p} = \tilde{P}_t^0 \odot \exp \left ( -\frac{\tilde{C}_t}{\frac{\epsilon}{p+1}} \right)  \odot  \mathcal{U}_t^p,
\end{align}
where $\tilde{\mathcal{U}}_t := \tilde{U}_t^{p} \odot \tilde{U}_t^{p-1} \odot \tilde{U}_t^{p-2} \odot \cdots \odot \tilde{U}_t^0$, and $ \tilde{P}_t^0 $ is an initial prior regularization matrix (not necessarily a transport 
plan). This corresponds to regularization parameter $\epsilon / (p+1)$, which decreases as the number of outer iterations $p$ increases.  More generally, if, in iteration $p$ of the proximal point scheme, \eqref{eq:regularizedprimalopt} is solved using  regularization parameter $\epsilon_{p}$ and if the prior transport plan had been obtained by solving \eqref{eq:regularizedprimalopt} with regularization parameter $\epsilon_{p-1}$, then the new optimal transport plans will correspond to regularization parameter
\begin{align}
    \epsilon' =  \frac{1}{\frac{1}{\epsilon_{p-1}}+ \frac{1}{\epsilon_{p}}}.
\end{align}
By induction, the effective regularization parameter after $p$ iterations is
\begin{align}
\label{eq:sequence_of_epsilon}
  \epsilon' = \frac{1}{\frac{1}{\epsilon_p} +\frac{1}{\epsilon_{p-1}} + \cdots \frac{1}{\epsilon_{0}} },
\end{align}
with $\epsilon_{0}$  corresponding to the first transport plan prior obtained by solving \eqref{eq:regularizedprimalopt} with some initial prior regularization matrix. 

In our implementation, we initiate the proximal point scheme with the  ``uninformative'' regularization priors $\tilde{P}_t = \ones{n_t}\ones{n_{t+1}}^T$, $t = 0,1,\dots,T-2$ and $\hat{P}_t = \ones{n_t}\ones{n_F}^T$, $t = 0,1,\dots,T-1$. In all subsequent iterations, the optimal transport plans obtained in the previous iteration are taken as regularization priors---effectively decreasing the effects of regularization on the optimal transport plans as shown above. If in any outer iteration the solutions become numerically unstable, we increase the value of the regularization parameter for that iteration. This generates a sequence of strictly decreasing effective regularization parameter values, as in \eqref{eq:sequence_of_epsilon}, and the proximal point scheme terminates when the desired $\epsilon$-threshold is reached. In theory, this method can be used to reduce the effects of regularization to zero, yielding an ``exact'' solution. However, in practice, we found the value $\epsilon_p$ in each iteration may need to grow so large that this convergence would be realized very slowly. Notwithstanding, it allowed us to reduce the regularization parameter with roughly a factor of 5 compared to the original value (see \autoref{tab:regularization-parameter-values}).

\section{Model extension with auxiliary cell states}
We noticed that when a snapshot contains cells for which there are no nearby cell states, the cost of sending mass to the nearest neighbour may be so large that it leads to numerical instability in the Sinkhorn iterations. From a biological perspective, then, it may be invalid to assume that such cells should be allowed to form connections to other cells in the data set. To simultaneously address both of these issues, we extended the standard MultistageOT model by adding three auxiliary states: an auxiliary initial (denoted $\mathcal{A}_0$), intermediate (denoted $\mathcal{A}$) and terminal  (denoted $\mathcal{A}_F$) state respectively (\autoref{fig:network_representation_extension}). Cells in the snapshot $\mathcal{D}$ may form connections with these auxiliary cell states, if it is too costly to form connections with other cells in $\mathcal{D}$. We thus introduce a fixed cost, denoted $Q$, at which cells can transport mass to any of these auxiliary states.

More precisely, we add an auxiliary state to each of the groups $\mathcal{X}_0, \mathcal{X}$ and $\mathcal{X}_F$, and modify the indices accordingly, i.e.,  
\begin{align}
    &\mathcal{X}_0 \rightarrow (x^{(i)}\,|\, i=1,\ldots, n_0 +1) \\
   & \mathcal{X} \rightarrow (x^{(i)}\,|\, i= n_0 + 2,\ldots, N - n_F + 1) \\
   & \mathcal{X}_F \rightarrow (x^{(i)}\,|\, i=N - n_F + 2,\ldots, N + 3),
\end{align}
so that $x^{(n_0+1)}$ corresponds to the auxiliary initial state,  $x^{(N - n_F + 1)}$ corresponds to the auxiliary intermediate state, and $x^{(N + 3)}$ corresponds to the auxiliary terminal state. We add the following costs to these auxiliary states. For $t= 0$,
 
\begin{align}
    (\tilde{C}_0)_{n_0 + 1},j &= Q \quad \text{for } j=1,\dots,N-n_F + 1\\
        (\hat{C}_0)_{n_0 + 1},j &= \begin{cases}
            Q \quad &\text{for } j = n_F + 1\\
    \infty &\text{otherwise}.
        \end{cases}
\end{align}
For $t = 1,\dots,T-2$,
\begin{align}
        (\tilde{C}_t)_{N-n_0-n_F+1,j} &=  \begin{cases}
            Q &\text{for } j = 1,\dots, N-n_0-n_F\\
            \infty  &\text{for } j = N-n_0-n_F+1
        \end{cases}\\
           (\hat{C}_t)_{N-n_0-n_F+1,j} &=  \begin{cases}
            Q &\text{for } j =  n_F + 1\\
            \infty  &\text{otherwise}
        \end{cases}\\
        (\tilde{C}_t)_{i,N-n_0-n_F+1} &=  
            Q \quad \text{for } i = 1,\dots, N-n_0-n_F \\
           (\hat{C}_t)_{i, n_F+1} &= Q \quad \text{for } i = 1,\dots, N-n_0-n_F+1,
\end{align}
and, for $t=T-1$,
\begin{align}
               (\hat{C}_{T-1})_{N-n_0-n_F+1,j} &=  \begin{cases}
            Q &\text{for } j =  n_F + 1\\
            \infty  &\text{otherwise}
        \end{cases}\\
           (\hat{C}_{T-1})_{i, n_F+1} &= Q \quad \text{for } i = 1,\dots, N-n_0-n_F+1.
\end{align}

Expanding the utility variables analogously, we add the following update rules in each iteration:

\begin{align}
    s_{N - n_F - n_0 + 1} &\gets 1 \\
    (u_0)_{n_0 + 1} &\gets1 \\
    (u_T)_{n_F + 1} &\gets1,
\end{align}
with all other update rules remaining the same. These updates correspond to a lack of transport constraints on the auxiliary states (unlike their counterparts in the original formulation, the auxiliary initial and intermediate states are not constrained to send a certain amount of mass, and the auxiliary terminal state is not constrained to receive a certain amount of mass).

We first validated this extended model on the 2D synthetic data set, as presented in \autoref{fig:synthetic-data-results}, but with an added island of three outlier cell states (\autoref{fig:outlier-results}a). When computing cell fate probabilities, our extended model predicted that all three outliers would end up being absorbed in the ``unknown fate'', represented by the auxiliary terminal fate (\autoref{fig:outlier-results}b). To gauge how the addition of the auxiliary states affected the rest of the solution, we compared the cell fate predictions made on the remaining, non-outlier cells, with the original predictions (\autoref{fig:synthetic-data-results}i) using the total variation (TV) as a measure of the deviations. In doing so, we found that the change in the predicted cell fate probabilities corresponded to a TV  $\leq 3.3\cdot 10 ^{-5}$  in any cell (\autoref{fig:extension_synthetic_data}e), indicating little to no change. We then applied the extended model when no outliers were present, i.e., on the exact data used in \autoref{fig:synthetic-data-results}. In this case, comparing the model with and without auxiliary states, we found the maximum total variation in any cell to be  $< 10^{-7}$, affirming that with no outliers present, the extended model has the desired behavior of simply ignoring the auxiliary states. When applying the extended model on a partition of the barcoded scRNA-seq data set in Weinreb et al. \cite{weinreb2020lineage}, we found that the change in predicted cell fate probabilities for the day 2 cells, as a result of adding auxiliary states, was small for large enough choices of $Q$ (see \autoref{tab:q-values-weinreb}).

\newpage

\end{bibunit}

\end{document}